\documentclass[a4paper,11pt]{article}
\usepackage{amsmath}
\usepackage{amsfonts,amssymb}
\usepackage{eucal}
\usepackage{amsthm}
\usepackage[dvipdfmx]{graphicx} 
\numberwithin{equation}{section}
\usepackage{url}

\newcommand{\simgt}{\lower.5ex\hbox{$\; \buildrel > \over \sim \;$}}
\newcommand{\simlt}{\lower.5ex\hbox{$\; \buildrel < \over \sim \;$}}

\def\re{\mathbb{R}}
\def\co{\mathbb{C}}

\renewcommand{\Re}{\text{{\rm Re}}}
\renewcommand{\Im}{\text{{\rm Im}}}

\renewcommand{\ker}{\text{\rm Ker}}

\newtheorem{thm}{Theorem}
\newtheorem{lem}[thm]{Lemma}
\newtheorem{prop}[thm]{Proposition}
\newtheorem{cor}[thm]{Corollary}

\theoremstyle{definition}

\theoremstyle{remark}
\newtheorem{rem}{Remark}

\title{Schr\"odinger operators with concentric $\delta$--shell interactions}

\author{%
Masahiro Kaminaga\thanks{Corresponding author. Email: kaminaga@mail.tohoku-gakuin.ac.jp}\\
{\small Department of \textit{Information Technology}, Faculty of Engineering,}\\
{\small Tohoku Gakuin University, Sendai, Japan}\\
}

\date{}

\begin{document}
\maketitle

\begin{abstract}
We study Schr\"odinger operators on $\mathbb R^3$ with finitely many concentric spherical $\delta$--shell interactions.
The operators are defined via quadratic forms and characterized by continuity across each shell together with the standard jump condition for the normal derivative.
Using a boundary integral approach based on the free Green kernel and single--layer potentials, we derive an explicit resolvent formula for an arbitrary number of shells with bounded coupling strengths.
This yields a concrete Kre\u{\i}n--type representation and a boundary operator whose noninvertibility characterizes the discrete spectrum, and it is compatible with partial--wave reduction under rotational symmetry.

We then specialize to the two--shell case with constant couplings and obtain a detailed description of the negative spectrum.
In particular, we prove that the ground state, when it exists, lies in the $s$--wave sector and derive an explicit secular equation for bound states.
For large shell separation, each bound level approaches the corresponding single--shell level with exponentially small corrections, while a genuine tunneling splitting appears when the single--shell levels are tuned to coincide.
As a simple calibration, we relate the two--shell parameters to representative core--shell quantum dot scales and identify the qualitative distinction between Type~I and Type~II configurations.

\medskip
\noindent\textbf{Keywords.}
$\delta$--shell interaction; singular perturbation; resolvent formula; Kre\u{\i}n--type formula; partial--wave decomposition; tunneling splitting

\medskip
\noindent\textbf{Mathematics Subject Classification (2020).}
35J10; 35P15; 47A10; 47F05
\end{abstract}

\section{Introduction}\label{sec:intro}

Surface--supported singular perturbations of the Laplacian 
provide a standard class of solvable Schr\"odinger operators. 
A general extension--theoretic framework for solvable models,
including point interactions, can be found in
Albeverio et al.~\cite{AGHH}.
For singular interactions supported on hypersurfaces,
boundary integral and layer potential approaches were developed by
Brasche--Exner--Kuperin--\v{S}eba~\cite{BrascheExnerKuperinSeba1994},
including an abstract Kre\u{\i}n--type resolvent formula in a general
measure--valued framework.

For radially symmetric penetrable wall models, Ikebe and Shimada~\cite{IkebeShimada1991}
studied the spectral and scattering theory in detail, and Shimada obtained
further results on approximation by short--range Hamiltonians~\cite{Shimada1992},
low--energy scattering~\cite{Shimada1994LowE}, and analytic continuation of the
scattering kernel~\cite{Shimada1994AnalCont}.
More recently, essential self--adjointness for systems with several (possibly
infinitely many) concentric shells was proved by Albeverio, Kostenko, Malamud,
and Neidhardt~\cite{AlbeverioKostenkoMalamudNeidhardt}.

The rigorous treatment of $\delta$--interactions supported on a sphere
was initiated by Antoine, Gesztesy, and Shabani~\cite{AntoineGesztesyShabani1987},
where the single--shell case was analyzed within an extension--theoretic framework.
In the specific setting of finitely many concentric spherical shells with
$\delta$--type interactions, Shabani~\cite{Shabani1988JMP} investigated the model
by an explicit reduction to radial one--dimensional problems.
In particular, for constant shell strengths the matching conditions at the
interfaces lead to a finite--dimensional determinant condition in each partial wave,
so that the bound state problem can be treated in a fully explicit manner.
These works provide closely related precedents for the present study,
as they address essentially the same geometric configuration and exhibit,
already at the level of partial--wave reduction, the characteristic coupling
mechanism induced by multiple shells.
While Shabani's analysis proceeds via a reduction to one--dimensional radial ODEs and finite--dimensional matching conditions in each partial wave (in particular for constant shell strengths),
our approach keeps the problem in three dimensions and yields a boundary--integral resolvent representation that remains valid for arbitrary $N$ and nonconstant surface strengths $\alpha_j\in L^\infty(S_j;\mathbb R)$.

Our approach is complementary.
Rather than revisiting the operator--theoretic construction,
we allow nonconstant strengths $\alpha_j\in L^\infty(S_j;\mathbb R)$ for arbitrary $N$
and derive a concrete boundary integral Kre\u{\i}n--type resolvent formula
directly in terms of the free Green kernel and single--layer potentials.
In the constant two--shell case we then provide a detailed large--separation
analysis, including the tuned tunneling splitting on the scale
$e^{-\kappa_0 d}$.
This formulation makes explicit the connection between the boundary integral picture and the partial--wave secular equations under rotational symmetry.
Related developments on the scattering side for finitely many concentric sphere
interactions were later obtained by Hounkonnou, Hounkpe, and
Shabani~\cite{HounkonnouHounkpeShabani1997}.

Building on these works, we consider Schr\"odinger operators on $\mathbb R^3$
with finitely many concentric spherical $\delta$--shell interactions.
Let $0<R_1<R_2<\cdots<R_N$ and set
$S_j:=\{x\in\mathbb R^3:\ |x|=R_j\}$ for $j=1, 2,\ldots, N$.
For each $j$ we allow a bounded measurable surface strength
$\alpha_j\in L^\infty(S_j;\mathbb R)$.
Formally, the model is
\begin{equation}\label{eq:N-shell}
H_N
=
-\Delta
+
\sum_{j=1}^{N}\alpha_j\,\delta(|x|-R_j),
\end{equation}
where $\alpha_j$ acts by multiplication on the trace of a function on $S_j$.
We define $H_N$ rigorously by the quadratic form method.
At an abstract level, we note that resolvent representations for self--adjoint extensions are well-known.
For a wide class of singular perturbations, we can express $(H_N - z)^{-1}$ in terms of the free resolvent and an operator--valued boundary term of Kre\u{\i}n type.
For example, we refer to the extension theoretic framework in \cite{AGHH} and to related formulations for singular interactions in \cite{BrascheExnerKuperinSeba1994}.
See also \cite{Posilicano2001} for Kre\u{\i}n--type resolvent formulas in a general
singular perturbation framework.
For $\delta$-- and $\delta'$--interactions supported on hypersurfaces, including
analogues of the Birman--Schwinger principle and a variant of Kre\u{\i}n's formula,
we refer to \cite{BehrndtLangerLotoreichik2013}.
The main result of this paper is an explicit boundary integral representation 
for concentric spherical $\delta$--shell interactions, 
written directly in terms of the free Green kernel and single--layer potentials, 
without invoking boundary triple machinery. 

More precisely, we derive an explicit resolvent formula for $H_N$ in terms of
layer potentials on the shells.
Here $S^2=\{\omega\in\mathbb R^3:\ |\omega|=1\}$ denotes the unit sphere, and
this representation yields a boundary operator $K_N(z)$ on
$\bigoplus_{j=1}^N L^2(S^2)$
such that, for $z\in\mathbb C\setminus[0,\infty)$,
the eigenvalues of $H_N$ in $\mathbb C\setminus[0,\infty)$ are characterized by
the noninvertibility of $K_N(z)$.

We then relate this representation to the partial--wave reduction 
and provide a detailed spectral analysis in the two--shell case.
After establishing the $N$--shell resolvent framework, we specialize to the
two--shell case $N=2$,
\begin{equation}\label{eq:two-shell}
H
=
-\Delta
+
\alpha_1\,\delta(|x|-R_1)
+
\alpha_2\,\delta(|x|-R_2),
\qquad
0<R_1<R_2,
\end{equation}
which already exhibits a nontrivial coupling mechanism between distinct
interfaces.
For explicit closed formulas and a partial--wave reduction, we further restrict,
from that point on, to the rotationally symmetric setting where $\alpha_1$ and
$\alpha_2$ are constants.
In this case we obtain a detailed description of the negative spectrum,
including a detailed description of the $s$--wave eigenvalues $E=-\kappa^2<0$.
Depending on the signs of $\alpha_1$ and $\alpha_2$, the $s$--wave sector
supports zero, one, or two bound states.
In the near--decoupling regime of large shell separation $d=R_2-R_1$, each bound
level is close to the corresponding single--shell level.
Moreover, when the parameters are tuned so that the two single--shell levels
coincide in this limit, say at a common energy $E_0=-\kappa_0^2<0$, where
$\kappa_0>0$ denotes the decay rate of the corresponding one--shell bound
state, the resulting pair of eigenvalues exhibits a tunneling splitting with
an exponentially small gap of order $e^{-\kappa_0 d}$.

A motivation for this analysis comes from semiconductor core--shell nanocrystals.
The first observation of size quantization was reported by Ekimov and
Onushchenko~\cite{EkimovOnushchenko1981}, followed by the effective--mass theory
of Efros and Efros~\cite{EfrosEfros1982} and the chemical studies of Rossetti,
Nakahara, and Brus~\cite{Brus1983}.
Type~I CdSe/ZnS core--shell nanocrystals~\cite{HinesGuyotSionnest1996} realize a
structure in which both carriers remain confined in the core, while Type~II
systems with spatially separated electrons and holes were realized later and
show distinct excitonic behavior~\cite{Kim2003,OronKazesBanin2007}.
In the effective--mass picture~\cite{EfrosEfros1982}, confinement is produced by
band offsets at heterointerfaces, and our $\delta$--shell idealization reflects
this mechanism through the signs and positions of the interface parameters.

A related viewpoint appears in the tunneling theory of BenDaniel and Duke~\cite{BenDanielDuke},
where abrupt changes of the effective mass and band edge across a heterojunction
are encoded in the BenDaniel--Duke matching conditions.
In the effective--mass literature (see, for example, \cite{HarrisonQWWD4}), these
conditions are regarded as standard interface rules for very thin heterojunctions.
When the intrinsic width of an interface is much smaller than the de~Broglie
wavelength, its effect can be compressed into a surface term, and $\delta$--type
interactions provide a natural effective description.
In both the BenDaniel--Duke
model and the present $\delta$--shell idealiza\-tion, the wave function is continuous
across the interface, while the normal derivative may jump. In the former
case this jump is caused by a change of effective mass, whereas in the latter
it is produced by a singular interface potential. In the present work we keep
a constant effective mass and absorb interfacial microphysics into effective
surface strengths $\alpha_j$.
Since the bulk potential is taken to be zero away from the shells, a purely
repulsive interface ($\alpha>0$) does not by itself produce $L^2$ bound states in this idealization,
and realistic confinement is primarily due to finite band offsets (finite wells/barriers) and finite interface widths.
Accordingly, the discrete eigenvalues analyzed below should be understood as an idealized limit capturing the relevant scales/trends (and tunneling splittings) of effective, possibly quasi--bound, levels rather than as quantitative optical transition energies.

For the rotationally symmetric two--shell setting with constant surface
strengths, one can distinguish the Type~I and Type~II sign patterns familiar
from core--shell quantum dots.
In this idealized case, Type~I systems correspond to $\alpha_1<0<\alpha_2$
\cite{HinesGuyotSionnest1996}, while Type~II systems correspond to
$\alpha_1>0>\alpha_2$ \cite{OronKazesBanin2007}.
This correspondence is meant only in the constant--strength setting and at the level of sign patterns.
Throughout this paper, the Type~I/II terminology is used only in the constant
case of $\alpha_j$.

The paper is organized as follows.
In Section~2 we construct $H_N$ by quadratic forms and establish basic bounds.
In Section~3 we derive the boundary integral resolvent formula and the
eigenvalue condition in terms of $K_N(z)$.
In Sections~4--7 we specialize to two shells with constant couplings, prove
that the ground state is an $s$--wave, and analyze bound states, level splitting,
and tunneling effects for large separation.
Appendix~A contains a brief order--of--magnitude calibration for representative
core--shell quantum dot scales, and Appendix~B collects explicit partial--wave
matrices used in the rotationally symmetric setting.


\section{Form method and operator domain}\label{sec:form}

We fix an integer $N\ge1$ and radii $0<R_1<\cdots<R_N$.
For each $j\in\{1,\dots,N\}$ we set
$$
S_j:=\{x\in\re^3:|x|=R_j\}.
$$
Let $\alpha_j$ be a real-valued function on $S_j$ such that
$\alpha_j\in L^\infty(S_j)$.
We consider the formal Schr\"odinger operator
\begin{equation}\label{eq:HN-formal}
H_N=-\Delta+\sum_{j=1}^N \alpha_j\,\delta(|x|-R_j).
\end{equation}

For $j=1,\dots,N$, let $d\sigma_j$ denote the surface measure on
$S_j:=\{x\in\re^3:\ |x|=R_j\}$.
For $u,v\in H^1(\re^3)$ we define the quadratic form
\begin{equation}\label{eq:qN}
h_N[u,v]
=
\int_{\re^3}\nabla u\cdot\nabla v\,dx
+
\sum_{j=1}^N \int_{S_j}
\alpha_j(y)\,
\bigl(u\upharpoonright_{S_j}\bigr)(y)\,
\bigl(v\upharpoonright_{S_j}\bigr)(y)\,
d\sigma_j(y).
\end{equation}
Its form domain is $D[h_N]=H^1(\re^3)$.
Since $\alpha_j\in L^\infty(S_j)$ and the trace map $H^1(\re^3)\to L^2(S_j)$ is bounded,
each surface term is bounded on $H^1(\re^3)$.
Hence $h_N$ is closed and lower semibounded and defines a unique self--adjoint operator
(which we still denote by $H_N$).

To simplify notation, we pull everything back to $S^2$.
For $u\in H^1(\re^3)$ we define the trace maps $\tau_j:H^1(\re^3)\to L^2(S^2)$ by
$$
(\tau_j u)(\omega):=u(R_j\omega),
\qquad
\omega\in S^2.
$$
We also define the pulled--back coefficient
$$
\widetilde\alpha_j(\omega):=\alpha_j(R_j\omega),
\qquad
\omega\in S^2.
$$
Since $d\sigma_j(R_j\omega)=R_j^2\,d\omega$, the surface contribution can be written as
$$
\int_{S_j}\alpha_j(y)\,u(y)\,v(y)\,d\sigma_j(y)
=
R_j^2\int_{S^2}\widetilde\alpha_j(\omega)\,(\tau_j u)(\omega)\,(\tau_j v)(\omega)\,d\omega.
$$

We set
\begin{eqnarray*}
\Omega_1 &=& \bigl\{|x|<R_1\bigr\},\\
\Omega_k &=& \bigl\{R_{k-1}<|x|<R_k\bigr\},\qquad k=2,\dots,N,\\
\Omega_{N+1} &=& \bigl\{|x|>R_N\bigr\}.
\end{eqnarray*}

Moreover, the operator domain $D(H_N)$ is characterized by the usual interface rules:
$u$ is continuous across each $S_j$, and the radial derivative satisfies
\begin{equation}\label{eq:jump-general}
\partial_r u(R_j+0,\omega)-\partial_r u(R_j-0,\omega)
=
\widetilde\alpha_j(\omega)\,u(R_j,\omega),
\qquad
\omega\in S^2,\quad j=1,\dots,N.
\end{equation}
More precisely,
\begin{eqnarray}\label{eq:dom-HN}
D(H_N)
&=&
\Bigl\{
u\in L^2(\re^3)\ \Bigm|\ 
u\upharpoonright_{\Omega_k}\in H^2(\Omega_k)\ \text{for }k=1,\dots,N+1,\nonumber\\
&&\quad
u\ \text{is continuous across each }S_j,\
\text{and satisfies \eqref{eq:jump-general}}
\Bigr\}.
\end{eqnarray}
On each region $\Omega_k$ the operator acts as $(H_Nu)\upharpoonright_{\Omega_k}=-\Delta(u|_{\Omega_k})$.

We begin with a simple observation excluding positive eigenvalues.
\begin{thm}\label{thm:no-embedded}
Let $H_N$ be the Schr\"odinger operator on $\mathbb R^3$ with finitely many
concentric spherical $\delta$--shell interactions \eqref{eq:HN-formal}, where
$\alpha_j\in L^\infty(S_j;\mathbb R)$.
Then $H_N$ has no eigenvalues in $(0,\infty)$.
\end{thm}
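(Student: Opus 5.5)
The plan is to exploit the fact that outside the largest shell, i.e.\ on $\Omega_{N+1}=\{|x|>R_N\}$, an eigenfunction solves the free Helmholtz equation. Rellich's uniqueness theorem then forces it to vanish there, and unique continuation across the shells, using the interface conditions, propagates this vanishing inward region by region.

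Let $E=k^2>0$ and suppose $u\in D(H_N)$ satisfies $H_Nu=Eu$. By \eqref{eq:dom-HN}, $u\upharpoonright_{\Omega_{N+1}}\in H^2(\Omega_{N+1})$ and $-\Delta u=k^2u$ there. Elliptic regularity makes $u$ real analytic in $\Omega_{N+1}$. Since $u\in L^2(\{|x|>R_N\})$, we have
\[
\liminf_{r\to\infty}\int_{|x|=r}|u|^2\,d\sigma=0 .
\]
Rellich's lemma for the Helmholtz equation in an exterior domain then gives $u\equiv0$ on $\Omega_{N+1}$.

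A short way to see this, compatible with the partial-wave viewpoint, is to expand $u(r\omega)=\sum_{\ell,m}u_{\ell m}(r)Y_{\ell m}(\omega)$ for $r>R_N$. Each coefficient solves the radial Bessel equation, so
\[
u_{\ell m}(r)=a_{\ell m}h^{(1)}_\ell(kr)+b_{\ell m}h^{(2)}_\ell(kr).
\]
Both Hankel functions behave like $1/r$ with oscillating phases, and $|a h^{(1)}_\ell+bh^{(2)}_\ell|^2r^2$ has a nonzero average unless $a=b=0$. Square integrability of $u_{\ell m}(r)\,r$ on $(R_N,\infty)$ therefore forces $a_{\ell m}=b_{\ell m}=0$, and Parseval gives $u=0$ on $\Omega_{N+1}$.

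Next comes the inward induction. Suppose $u\equiv0$ on $\Omega_{j+1}$. Taking traces at $r=R_j+0$ gives $u(R_j+0,\cdot)=0$ and $\partial_ru(R_j+0,\cdot)=0$. Continuity across $S_j$ gives $u(R_j-0,\cdot)=0$. The jump condition \eqref{eq:jump-general} gives
\[
\partial_ru(R_j-0,\omega)=\partial_ru(R_j+0,\omega)-\widetilde\alpha_j(\omega)u(R_j,\omega)=0 .
\]
Here $\widetilde\alpha_j\in L^\infty$ is harmless, because it multiplies a zero trace. So on $\Omega_j$ the function $u$ solves $(-\Delta-k^2)u=0$ with vanishing Cauchy data on the outer boundary sphere $S_j$.

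To conclude $u\equiv0$ on $\Omega_j$, extend $u$ by zero across $S_j$. The extended function $\tilde u$ lies in $H^2$ of a neighbourhood of $S_j$: the matching of traces and normal derivatives means no distributional boundary terms appear. It solves $(-\Delta-k^2)\tilde u=0$ there, so it is real analytic and vanishes on an open set. Since $\Omega_j\cup\{R_j\le|x|<R_j+\varepsilon\}$ is connected, $u\equiv0$ on $\Omega_j$.

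Alternatively, in partial waves each coefficient solves a second-order ODE with zero Cauchy data at $R_j$. Because $\widetilde\alpha_j$ may be nonconstant, the jump couples the partial waves, but only through the zero trace, so there is no real coupling. ODE uniqueness then gives $u_{\ell m}\equiv0$ on $(R_{j-1},R_j)$, with $R_0=0$. For $\Omega_1$ this is fine because the ODE is regular on $(0,R_1)$.

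Iterating down to $j=1$ gives $u\equiv0$ on all of $\re^3$, so $E$ is not an eigenvalue.

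The main obstacle is the first step, justifying Rellich's lemma. I would cite the standard theorem, or present the Hankel-function argument above carefully, including the interchange of sum and integral via Parseval on each sphere $|x|=r$. The inward propagation is routine given \eqref{eq:jump-general}.
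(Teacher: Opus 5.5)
Your proposal is correct and follows essentially the same route as the paper: you expand in partial waves and use Hankel asymptotics together with the square integrability of $r\,u_{\ell m}(r)$ on $(R_N,\infty)$ to show $u=0$ outside $S_N$; you then get vanishing Cauchy data from continuity and the jump condition and propagate inward shell by shell by ODE uniqueness (your unique-continuation variant works equally well). One caution: Rellich's lemma does not hold with only $\liminf_{r\to\infty}\int_{|x|=r}|u|^2\,d\sigma=0$ as hypothesis, since $u=\sin(k|x|)/|x|$ solves $-\Delta u=k^2u$ and has this $\liminf$ equal to zero, so either cite the $\lim$ or averaged/$L^2$ form of Rellich's theorem or, better, use your Hankel-function argument, which correctly exploits $u\in L^2$.
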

\begin{proof}
Fix $E=k^2>0$ and assume that $H_Nu=Eu$ for some $u\in L^2(\mathbb R^3)$.
Then $u\in D(H_N)$.
In particular, $u\in H^2(\Omega_k)$ on each region $\Omega_k$ away from the shells,
and the traces $u|_{S_j}$ and $\partial_r^\pm u\upharpoonright_{S_j}$ are well-defined.

For almost every $r>0$, the function $u(r,\cdot)$ belongs to $L^2(S^2)$.
We therefore expand
\begin{equation}\label{eq:partial-wave-f}
u(r,\omega)=\sum_{\ell=0}^\infty\sum_{m=-\ell}^{\ell}
\frac{f_{\ell m}(r)}{r}\,Y_{\ell m}(\omega),
\qquad
f_{\ell m}(r)
=
r\int_{S^2}u(r,\omega)\overline{Y_{\ell m}(\omega)}\,d\omega,
\end{equation}
where $\{Y_{\ell m}\}$ is an orthonormal basis of $L^2(S^2)$.
See \cite{AbramowitzStegun}.
By Parseval's identity,
\begin{equation}\label{eq:parseval}
r^2\int_{S^2}|u(r,\omega)|^2\,d\omega
=\sum_{\ell,m}|f_{\ell m}(r)|^2
\quad\text{for a.e.\ }r>0,
\end{equation}
and consequently
\begin{equation}\label{eq:L2-outside}
\int_{|x|>R_N}|u(x)|^2\,dx
=\int_{R_N}^\infty\sum_{\ell,m}|f_{\ell m}(r)|^2\,dr.
\end{equation}

Away from the shells $r=R_1,\dots,R_N$ the equation reduces to
$(-\Delta-k^2)u=0$.
Substituting~\eqref{eq:partial-wave-f}
(equivalently, projecting $(-\Delta-k^2)u=0$ onto the spherical harmonics)
shows that each $f_{\ell m}$ satisfies
\begin{equation}\label{eq:radial-f}
-f_{\ell m}''(r)+\frac{\ell(\ell+1)}{r^2}f_{\ell m}(r)=k^2 f_{\ell m}(r)
\end{equation}
on every interval not containing any $R_j$.

For $r>R_N$ the general solution of~\eqref{eq:radial-f} can be written as
\begin{equation}\label{eq:f-hankel}
f_{\ell m}(r)=A_{\ell m}\,r\,h^{(1)}_{\ell}(kr)
+B_{\ell m}\,r\,h^{(2)}_{\ell}(kr),
\end{equation}
where $h^{(1)}_{\ell}$ and $h^{(2)}_{\ell}$ are the spherical Hankel functions.
Using the asymptotics
\[
r\,h^{(1)}_{\ell}(kr)=\frac{e^{ikr}}{ik}+O(r^{-1}),
\qquad
r\,h^{(2)}_{\ell}(kr)=-\frac{e^{-ikr}}{ik}+O(r^{-1}),
\quad r\to\infty,
\]
we obtain
\begin{equation}\label{eq:f-asymp}
f_{\ell m}(r)=p_{\ell m}e^{ikr}
+q_{\ell m}e^{-ikr}+O(r^{-1}),
\qquad r\to\infty,
\end{equation}
for suitable constants $p_{\ell m},q_{\ell m}$.

Assume that $u$ does not vanish identically in the exterior region $(R_N,\infty)$.
Then there exists $(\ell_0,m_0)$ such that $f_{\ell_0m_0}$ is not identically zero
on $(R_N,\infty)$.
Write $f=f_{\ell_0m_0}$ and $g(r)=p e^{ikr}+q e^{-ikr}$ with
$M:=|p|^2+|q|^2>0$.
By~\eqref{eq:f-asymp} there exist $C>0$ and $R_0>R_N$ such that
$|f(r)-g(r)|\le C/r$ for $r\ge R_0$.

For every $R\ge R_0$ we have
\[
\int_R^{R+\pi/k}|g(r)|^2\,dr
=\int_R^{R+\pi/k}\Bigl(M+2\Re(p\overline{q}e^{2ikr})\Bigr)\,dr
=\frac{\pi}{k}M,
\]
since $\int_R^{R+\pi/k}e^{2ikr}\,dr=0$.
Moreover, using $|a+b|^2\ge \frac12|a|^2-|b|^2$, we obtain
\begin{eqnarray*}
\int_R^{R+\pi/k}|f(r)|^2\,dr
&\ge&
\frac12\int_R^{R+\pi/k}|g(r)|^2\,dr
-\int_R^{R+\pi/k}|f(r)-g(r)|^2\,dr \\
&\ge&
\frac{\pi}{2k}M-\frac{C^2\pi}{kR^2}.
\end{eqnarray*}
Choosing $R$ so large that $C^2/R^2\le M/4$, we obtain
\[
\int_R^{R+\pi/k}|f(r)|^2\,dr\ge \frac{\pi}{4k}M=:c>0.
\]
Summing over the disjoint intervals
$[R+n\pi/k,R+(n+1)\pi/k]$ yields
\[
\int_R^\infty |f(r)|^2\,dr=\infty.
\]
Using~\eqref{eq:L2-outside} and
$\sum_{\ell,m}|f_{\ell m}(r)|^2\ge |f(r)|^2$,
we conclude that
\[
\int_{|x|>R_N}|u(x)|^2\,dx=\infty,
\]
contradicting $u\in L^2(\mathbb R^3)$.
Therefore $u\equiv0$ for $r>R_N$.

By continuity, $u|_{S_N}=0$, and since $u$ vanishes identically in the exterior
region we also have $\partial_r^+u|_{S_N}=0$.
The jump condition \eqref{eq:jump-general} on $S_N$ implies
$\partial_r^-u|_{S_N}=0$.
Hence $f_{\ell m}(R_N)=0$ for all $\ell,m$.
Moreover, since $u\in H^2(\Omega_N)\cap H^2(\Omega_{N+1})$, we have
\[
f'_{\ell m}(R_N\pm 0)
=R_N\int_{S^2}\partial_r^\pm u(R_N,\omega)
\overline{Y_{\ell m}(\omega)}\,d\omega,
\]
and thus $f'_{\ell m}(R_N)=0$ for all $\ell,m$.
Uniqueness for the second--order ODE~\eqref{eq:radial-f} yields
$f_{\ell m}\equiv0$ on $(R_{N-1},R_N)$.
Repeating the argument across the remaining shells shows that
$u\equiv0$ on $\mathbb R^3$, a contradiction.
Thus there are no positive eigenvalues.
\end{proof}

When the surface strengths are constants, rotational symmetry allows a
partial--wave decomposition.
At the zero--energy threshold $E=0$, each angular momentum channel reduces to a
finite--dimensional algebraic condition on the shell values, which can be
written explicitly.

\begin{thm}[Zero--energy threshold]\label{thm:zero-energy}
Fix an integer $N\ge1$ and radii $0<R_1<\cdots<R_N$.
Assume that the surface strengths are constants $\alpha_1,\dots,\alpha_N\in\mathbb R$.
Let $\ell\in\{0,1,2,\dots\}$ be the angular--momentum index.

{\rm (i)} In the $s$--wave case $\ell=0$, the point $E=0$ is not an $L^2$
eigenvalue for any finite $\alpha_1,\dots,\alpha_N$.

{\rm (ii)} For each fixed $\ell\ge1$, define the $N\times N$ matrix
$A_\ell=(a^{(\ell)}_{ij})_{i,j=1}^N$ by
\begin{equation}\label{eq:Aell-entry}
a^{(\ell)}_{ij}
:=
\delta_{ij}
+
\frac{\alpha_j}{2\ell+1}\,
\frac{R_{\min(i,j)}^{\ell+1}}{R_{\max(i,j)}^{\ell}},
\qquad i,j=1,\dots,N,
\end{equation}
where $R_{\min(i,j)}:=\min(R_i,R_j)$ and $R_{\max(i,j)}:=\max(R_i,R_j)$.
Then $E=0$ is an $L^2$ eigenvalue in the $\ell$--th partial wave if and only if
\begin{equation}\label{eq:zero-cond}
\det A_\ell=0.
\end{equation}
Moreover, the multiplicity of $E=0$ contributed by the $\ell$--th partial wave
equals $(2\ell+1)\dim\ker A_\ell$.
\end{thm}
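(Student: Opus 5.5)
Following the proof of Theorem~\ref{thm:no-embedded}, we reduce to the radial equation \eqref{eq:radial-f} with $k=0$. In a fixed partial wave $(\ell,m)$, $u(x)=r^{-1}f(r)Y_{\ell m}(\omega)$, where $f$ solves $-f''+\ell(\ell+1)r^{-2}f=0$ on each interval between consecutive radii. The solutions there are spanned by $r^{\ell+1}$ and $r^{-\ell}$. The regularity $u\in H^2(\Omega_1)$ near the origin forces $f=c\,r^{\ell+1}$ on $(0,R_1)$. Square integrability at infinity is governed by $\int^\infty|f|^2\,dr<\infty$. The relevant pieces of the condition \eqref{eq:dom-HN} are continuity of $f$ at each $R_j$ and the jump condition $f'(R_j+0)-f'(R_j-0)=\alpha_j f(R_j)$; the latter follows from \eqref{eq:jump-general}, since the $1/r$ factor contributes equally on both sides.

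\emph{Case (i), $\ell=0$.} On the exterior, $f=a+br$, and $\int^\infty|f|^2\,dr<\infty$ forces $a=b=0$. Thus $f\equiv0$ for $r>R_N$, so $f(R_N)=0$ and $f'(R_N+0)=0$. The jump condition then gives $f'(R_N-0)=0$. Uniqueness for the ODE, exactly as at the end of the proof of Theorem~\ref{thm:no-embedded}, propagates $f\equiv0$ inward through all shells. This argument works for every finite choice of $\alpha_j$.

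\emph{Case (ii), $\ell\ge1$.} Here I would use a Green-function representation rather than transfer matrices, since it produces the symmetric-looking matrix $A_\ell$ directly. The distributional equation is
\[
-f''+\ell(\ell+1)r^{-2}f=-\sum_j\alpha_j f(R_j)\,\delta(r-R_j),
\]
and the zero-energy radial Green function with regularity at $0$ and decay at $\infty$ is
\[
G_\ell(r,s)=\frac{r_<^{\ell+1}r_>^{-\ell}}{2\ell+1},
\]
with Wronskian $2\ell+1$. Any solution that is $L^2$ at infinity must have exterior form $c'r^{-\ell}$, which is $L^2$ at infinity precisely because $\ell\ge1$; this is where case (ii) differs from case (i). Such a solution therefore equals $f(r)=-\sum_j\alpha_jf(R_j)G_\ell(r,R_j)$. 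To justify this, note that the difference between $f$ and the right-hand side solves the homogeneous equation globally, is regular at $0$, and decays at $\infty$, so it must vanish. Evaluating at $r=R_i$ gives
\[
f(R_i)+\sum_j\frac{\alpha_j}{2\ell+1}\frac{R_{\min(i,j)}^{\ell+1}}{R_{\max(i,j)}^{\ell}}f(R_j)=0,
\]
that is, $A_\ell v=0$ with $v=(f(R_j))_j$.

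Conversely, given $v\in\ker A_\ell$, I define $f$ by the Green-function formula. It has the correct behaviour at $0$ and $\infty$ and satisfies the jump conditions, and the kernel equation says precisely that its values at the $R_i$ are the $v_i$. The map $v\mapsto f$ is injective, because $f(R_i)=v_i$. Hence the solution space in each $(\ell,m)$ is isomorphic to $\ker A_\ell$, and summing over $m=-\ell,\dots,\ell$ gives multiplicity $(2\ell+1)\dim\ker A_\ell$.

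\textbf{Main obstacle.} The step I expect to be hardest is the $L^2$ and domain bookkeeping. For $\ell=1$ the exterior decay $u\sim r^{-2}$ is only borderline square-integrable in three dimensions, so I will check it explicitly: $\int^\infty r^{-4}r^2\,dr<\infty$. I must also confirm that the resulting $u$ lies in $H^2(\Omega_k)$ for each $k$, including at the origin, where $r^{\ell}Y_{\ell m}$ is a harmonic polynomial and hence smooth. Finally, I must check that the partial-wave decomposition of a general eigenfunction separates channels, which works here because the $\alpha_j$ are constants.
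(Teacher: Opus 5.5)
Your proposal is correct and follows essentially the same route as the paper. For $\ell=0$ you propagate the vanishing Cauchy data $f(R_N)=f'(R_N\pm0)=0$ inward through the shells, and for $\ell\ge1$ you represent the solution through the zero-energy Green function $G_\ell(r,s)=r_<^{\ell+1}r_>^{-\ell}/(2\ell+1)$ and evaluate it at the radii to obtain $A_\ell v=0$. Your explicit converse map $v\mapsto -\sum_j\alpha_j v_j G_\ell(\cdot,R_j)$, with $f(R_i)=v_i$, usefully spells out the bijection with $\ker A_\ell$ that the paper only asserts when it states the multiplicity $(2\ell+1)\dim\ker A_\ell$.
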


\begin{proof}
Fix $\ell\ge0$ and write
$$
\psi(x)=\frac{u(r)}{r}Y_{\ell m}(\omega),\qquad r=|x|,\ \omega=x/r,
$$
so that on each region the radial function satisfies
$$
-\,u''(r)+\frac{\ell(\ell+1)}{r^2}u(r)=E\,u(r),
\qquad r\ne R_1,\dots,R_N.
$$
Moreover, since $\{Y_{\ell m}\}$ is an orthonormal basis of $L^2(S^2)$, we have
$\|\psi\|_{L^2(\mathbb R^3)}^2=\int_0^\infty |u(r)|^2\,dr$.
The $\delta$--shell interface conditions for $\psi$ are equivalent to the
continuity of $u$ at each $R_j$ and to the jump conditions
$$
u'(R_j+0)-u'(R_j-0)=\alpha_j\,u(R_j),\qquad j=1,\dots,N.
$$

We consider $E=0$.

{\rm (i)} Let $\ell=0$.
Then $-u''(r)=0$ on each interval, so on the exterior region $r>R_N$ we have
$u(r)=ar+b$.
Since $\int_{R_N}^\infty |u(r)|^2\,dr<\infty$, it follows that $a=b=0$ and hence
$u\equiv0$ on $(R_N,\infty)$.
By continuity, $u(R_N)=0$, and the jump condition at $R_N$ yields
$u'(R_N-0)=u'(R_N+0)=0$.
Since $-u''(r)=0$ on $(R_{N-1},R_N)$, these two boundary conditions imply that
$u\equiv0$ on $(R_{N-1},R_N)$.
Iterating this argument across the remaining shells, we obtain $u\equiv0$ on
$(0,\infty)$, hence $\psi\equiv0$, a contradiction.
Therefore $E=0$ cannot be an $L^2$ eigenvalue in the $s$--wave sector.

{\rm (ii)} Assume $\ell\ge1$.
Consider the homogeneous equation
$$
-\,w''(r)+\frac{\ell(\ell+1)}{r^2}w(r)=0\qquad (r\ne s),
$$
whose independent solutions are $r^{\ell+1}$ and $r^{-\ell}$.
Let $r_<:=\min(r,s)$ and $r_>:=\max(r,s)$ and define
\begin{equation}\label{eq:Green0}
G_\ell(r,s):=\frac{1}{2\ell+1}\,r_<^{\ell+1}r_>^{-\ell}.
\end{equation}
Then $G_\ell(\cdot,s)$ is continuous, solves the homogeneous equation for $r\ne s$,
decays like $r^{-\ell}$ as $r\to\infty$, is regular at $r=0$, and satisfies
$G'_\ell(s+0,s)-G'_\ell(s-0,s)=-1$.
Consequently, for any constants $c_1,\dots,c_N$, the function
$$
w(r):=-\sum_{j=1}^N c_j\,G_\ell(r,R_j)
$$
is continuous on $(0,\infty)$, solves the homogeneous equation away from the
shells, and has derivative jumps
$w'(R_j+0)-w'(R_j-0)=c_j$.

Let $\psi$ be an $L^2$ solution at $E=0$ in the $\ell$--th partial wave and set
$U_j:=u(R_j)$.
Define
$$
\widetilde u(r):=-\sum_{j=1}^N \alpha_j U_j\,G_\ell(r,R_j).
$$
By the preceding properties of $G_\ell$, the function $\widetilde u$ is continuous,
solves the radial equation for $r\ne R_j$, and satisfies
$\widetilde u'(R_j+0)-\widetilde u'(R_j-0)=\alpha_j U_j$.
Hence $v:=u-\widetilde u$ solves the homogeneous equation on $(0,\infty)$,
is continuous with no derivative jumps at any $R_j$, is regular at $0$,
and belongs to $L^2(0,\infty)$.
Therefore $v\equiv0$, and thus $u=\widetilde u$.
Evaluating at $r=R_i$ gives, for $i=1,\dots,N$,
$$
U_i
=
-\sum_{j=1}^N \alpha_j U_j\,G_\ell(R_i,R_j)
=
-\sum_{j=1}^N \alpha_j U_j\,
\frac{1}{2\ell+1}\,
\frac{R_{\min(i,j)}^{\ell+1}}{R_{\max(i,j)}^{\ell}}.
$$
This is exactly the linear system $A_\ell U=0$ with $A_\ell$ as in
\eqref{eq:Aell-entry}. Hence a nontrivial $L^2$ solution exists if and only if
$\ker A_\ell\ne\{0\}$, equivalently \eqref{eq:zero-cond} holds.

Finally, the radial equation and the interface conditions do not depend on $m$,
so each independent radial solution produces $(2\ell+1)$ linearly independent
eigenfunctions by varying $m$. This gives the multiplicity formula
$(2\ell+1)\dim\ker A_\ell$.
\end{proof}

\begin{rem}
For $N=2$ with $0<R_1<R_2$ and constants $\alpha_1,\alpha_2\in\mathbb R$, the
matrix $A_\ell$ in \eqref{eq:Aell-entry} takes the form
$$
A_\ell=
\left(\begin{array}{cc}
1+\dfrac{\alpha_1 R_1}{2\ell+1}
&
\dfrac{\alpha_2}{2\ell+1}\dfrac{R_1^{\ell+1}}{R_2^{\ell}}
\\[3mm]
\dfrac{\alpha_1}{2\ell+1}\dfrac{R_1^{\ell+1}}{R_2^{\ell}}
&
1+\dfrac{\alpha_2 R_2}{2\ell+1}
\end{array}\right).
$$
Thus the condition $\det A_\ell=0$ becomes
$$
\alpha_1\alpha_2\,R_1^{2\ell+2}R_2^{-2\ell}
=
\bigl(\alpha_1 R_1 + 2\ell+1\bigr)\bigl(\alpha_2 R_2 + 2\ell+1\bigr).
$$
\end{rem}

\section{Resolvent Formula}\label{sec:elem-resolvent}

In this section we derive an explicit resolvent formula for the $N$--shell
Hamiltonian by using the free Green kernel and single--layer potentials.
We do not use boundary triples or more abstract extension theory.
Throughout, we take the principal branch so that $\Im\sqrt z>0$, and we set
$k=\sqrt z$.

\medskip
Fix $z\in\co\setminus[0,\infty)$ and write
$$
G_z(x,y)=\frac{e^{ik|x-y|}}{4\pi|x-y|},\qquad
(R_0(z)f)(x)=\int_{\re^3}G_z(x,y)f(y)\,dy .
$$
For $u\in H^1(\re^3)$ we define the pulled--back trace maps
$\tau_j:H^1(\re^3)\to L^2(S^2)$ by
$$
(\tau_j u)(\omega)=u(R_j\omega),\qquad \omega\in S^2.
$$
Under the parametrization $y=R_j\omega$, $\omega\in S^2$, one has
$d\sigma_j(y)=R_j^2\,d\omega$, hence
$$
\int_{S_j}|u(y)|^2\,d\sigma_j(y)
=
R_j^2\int_{S^2}|(\tau_j u)(\omega)|^2\,d\omega.
$$
We also write
$$
\widetilde\alpha_j(\omega)=\alpha_j(R_j\omega),\qquad \omega\in S^2.
$$

For $\varphi\in L^2(S^2)$ we define the single--layer potentials supported on
$S_j$ by
\begin{equation}\label{eq:Gammaj}
(\Gamma_j(z)\varphi)(x)
=
\int_{S^2}G_z\bigl(x,R_j\omega\bigr)\,\varphi(\omega)\,d\omega,
\qquad x\in\re^3.
\end{equation}

We introduce the boundary Hilbert space
\begin{equation}\label{eq:HN-boundary-space}
\mathcal K_N := \bigoplus_{j=1}^N L^2(S^2),
\end{equation}
whose elements are written as ${}^{t}(\varphi_1,\dots,\varphi_N)$ with
$\varphi_j\in L^2(S^2)$.
On $\mathcal K_N$ we consider the $N\times N$ operator matrix
$m(z)=(m_{ij}(z))_{i,j=1}^N$ defined by
\begin{equation}\label{eq:Mij}
m_{ij}(z):L^2(S^2)\to L^2(S^2),\qquad
(m_{ij}(z)\varphi)(\omega)
=
\int_{S^2}G_z\bigl(R_i\omega,R_j\omega'\bigr)\,\varphi(\omega')\,d\omega'.
\end{equation}
We also write $\Gamma(z)=(\Gamma_1(z),\dots,\Gamma_N(z))$ as a column operator
$$
\Gamma(z):\mathcal K_N\to L^2(\re^3),\qquad
\Gamma(z)\,{}^{t}(\varphi_1,\dots,\varphi_N)
=
\sum_{j=1}^N \Gamma_j(z)\varphi_j .
$$
Finally we define a bounded diagonal operator $\Theta$ on $\mathcal K_N$ by
\begin{equation}\label{eq:ThetaN}
(\Theta\,{}^{t}(\varphi_1,\dots,\varphi_N))_j(\omega)
=
R_j^2\,\widetilde\alpha_j(\omega)\,\varphi_j(\omega),
\qquad j=1,\dots,N,
\end{equation}
and set
\begin{equation}\label{eq:K-def}
K_N(z)=I+m(z)\Theta .
\end{equation}

\begin{lem}\label{lem:gamma-HS}
Let $R>0$ and $z\in\co\setminus[0,\infty)$ with $\Im\sqrt z>0$.
Define
$$
(\widetilde{\Gamma}(z)\varphi)(x)
=
\int_{S^2}G_z(x,R\omega)\,\varphi(\omega)\,d\omega,
\qquad \varphi\in L^2(S^2).
$$
Then $\widetilde{\Gamma}(z):L^2(S^2)\to L^2(\re^3)$ is a Hilbert--Schmidt operator.
\end{lem}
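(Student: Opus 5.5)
The operator $\widetilde\Gamma(z)$ is an integral operator from $L^2(S^2,d\omega)$ to $L^2(\re^3,dx)$ with kernel $(x,\omega)\mapsto G_z(x,R\omega)$. It is therefore Hilbert--Schmidt as soon as this kernel is square integrable on $\re^3\times S^2$. So the plan is simply to verify
$$
\int_{S^2}\int_{\re^3}\bigl|G_z(x,R\omega)\bigr|^2\,dx\,d\omega<\infty .
$$
The Hilbert--Schmidt norm is then the square root of this quantity.

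Write $k=\sqrt z$ with $\kappa:=\Im k>0$. Then
$$
|G_z(x,y)|=\frac{e^{-\kappa|x-y|}}{4\pi|x-y|}.
$$
For each fixed $\omega$ I translate $x\mapsto x+R\omega$ in the inner integral. The translation removes the dependence on $\omega$, so the inner integral equals
$$
\int_{\re^3}\frac{e^{-2\kappa|x|}}{16\pi^2|x|^2}\,dx .
$$
In polar coordinates this becomes
$$
\frac{4\pi}{16\pi^2}\int_0^\infty e^{-2\kappa\rho}\,d\rho=\frac{1}{8\pi\kappa}.
$$
Integrating over $\omega\in S^2$ only multiplies by $|S^2|=4\pi$, so the total is $1/(2\kappa)<\infty$. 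This shows $\widetilde\Gamma(z)$ is Hilbert--Schmidt, and gives the explicit bound $\|\widetilde\Gamma(z)\|_{HS}^2=1/(2\Im\sqrt z)$.

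The only point needing care is the interchange of the order of integration and the measurability of the kernel. Both are handled by Tonelli's theorem, since the integrand is nonnegative and measurable. The $|x-y|^{-2}$ singularity is locally integrable in three dimensions, and the decay at infinity comes from $\Im\sqrt z>0$; this is exactly where the restriction $z\notin[0,\infty)$ enters. I do not expect a genuine obstacle: the computation is exact once the translation invariance of $G_z$ is used.
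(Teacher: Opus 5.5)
Your proof is correct and reaches the same value $1/(2\Im\sqrt z)$ as the paper, but by a shorter route. Both arguments evaluate the $L^2(\re^3\times S^2)$ norm of the kernel exactly; the difference is the order of integration. You integrate in $x$ first and use the translation invariance of $G_z$, so the $\omega$-integral becomes trivial. The paper instead fixes $x$ and integrates over $\omega$ with the polar axis along $x/|x|$, substituting $\rho=|x-R\omega|$. It then integrates in $r=|x|$ and must exchange the order of integration in the $(r,\rho)$-plane, using $\int_{|R-\rho|}^{R+\rho}r\,dr=2R\rho$, to arrive at $\int_0^\infty e^{-2c\rho}\,d\rho$. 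Your version is more general because it never uses the geometry of the sphere. It shows that for any finite measure $\mu$ on $\re^3$, the single-layer operator $L^2(\mu)\to L^2(\re^3)$ with kernel $G_z$ is Hilbert--Schmidt with squared norm $\mu(\re^3)/(8\pi\Im\sqrt z)$. It also makes plain that the paper's initial inequality $|G_z(x,y)|\le e^{-c|x-y|}/(4\pi|x-y|)$ is in fact an equality. What the paper's order buys is the fiberwise quantity $\int_{S^2}|G_z(x,R\omega)|^2\,d\omega=\frac{1}{8\pi|x|R}\int_{||x|-R|}^{|x|+R}\frac{e^{-2c\rho}}{\rho}\,d\rho$. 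By Cauchy--Schwarz this gives a pointwise bound on $|(\widetilde\Gamma(z)\varphi)(x)|^2$, with a logarithmic singularity at $|x|=R$ and exponential decay at infinity. That extra information is not used anywhere in the paper, so for the lemma as stated your argument is the more economical one.
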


\begin{proof}
The Hilbert--Schmidt norm is
$$
\|\widetilde{\Gamma}(z)\|_{\mathrm{HS}}^2
=
\int_{\re^3}\int_{S^2}\bigl|G_z(x,R\omega)\bigr|^2\,d\omega\,dx .
$$
Let $c=\Im\sqrt z>0$.
Since
$$
\bigl|G_z(x,y)\bigr|
\le
\frac{1}{4\pi}\,\frac{e^{-c|x-y|}}{|x-y|}
\qquad(x\ne y),
$$
we obtain
$$
\|\widetilde{\Gamma}(z)\|_{\mathrm{HS}}^2
\le
\frac{1}{(4\pi)^2}\int_{\re^3}\int_{S^2}
\frac{e^{-2c|x-R\omega|}}{|x-R\omega|^2}\,d\omega\,dx .
$$
Write $r=|x|$ and let $\theta$ be the angle between $x/|x|$ and $\omega$.
Then $\rho=|x-R\omega|=\sqrt{r^2+R^2-2rR\cos\theta}$, and
$d\omega=2\pi\sin\theta\,d\theta$ with $\sin\theta\,d\theta=(\rho/(rR))\,d\rho$.
Hence
$$
\int_{S^2}\frac{e^{-2c|x-R\omega|}}{|x-R\omega|^2}\,d\omega
=
\frac{2\pi}{rR}\int_{|r-R|}^{r+R}\frac{e^{-2c\rho}}{\rho}\,d\rho .
$$
Using $dx=4\pi r^2\,dr$ and Fubini--Tonelli, we get
\begin{eqnarray*}
\|\widetilde{\Gamma}(z)\|_{\mathrm{HS}}^2
&\le&
\frac{1}{(4\pi)^2}
\int_0^\infty\left(
\frac{2\pi}{rR}
\int_{|r-R|}^{r+R}\frac{e^{-2c\rho}}{\rho}\,d\rho\right)
4\pi r^2\,dr \\
&=&
\frac{1}{2R}\int_0^\infty r
\int_{|r-R|}^{r+R}\frac{e^{-2c\rho}}{\rho}\,d\rho\,dr.
\end{eqnarray*}
The domain is
$$
D=\Bigl\{(r,\rho):\ r\ge0,\ |r-R|\le \rho \le r+R\Bigr\}
=
\Bigl\{(r,\rho):\ \rho\ge0,\ |R-\rho|\le r \le R+\rho\Bigr\}.
$$
Exchanging the order of integration yields
\begin{eqnarray*}
\|\widetilde{\Gamma}(z)\|_{\mathrm{HS}}^2
&\le&
\frac{1}{2R}\int_0^\infty \frac{e^{-2c\rho}}{\rho}
\left(\int_{|R-\rho|}^{R+\rho} r\,dr\right)d\rho
=
\int_0^\infty e^{-2c\rho}\,d\rho
=
\frac{1}{2c}
<
\infty .
\end{eqnarray*}
\end{proof}

\begin{lem}\label{lem:maps}
Let $z\in\co\setminus[0,\infty)$ and $\Im\sqrt z>0$.
Then the following statements hold.

{\rm (a)} Each $\Gamma_j(z):L^2(S^2)\to L^2(\re^3)$ is a Hilbert--Schmidt
operator, and in particular $\Gamma(z)$ is Hilbert--Schmidt.

{\rm (b)} For every $i,j$ the operator $m_{ij}(z)$ is bounded on $L^2(S^2)$ and
one has $m_{ij}(\bar z)=m_{ji}(z)^{\ast}$.
Moreover, the identity $\tau_i\Gamma_j(z)=m_{ij}(z)$ holds in $L^2(S^2)$.

{\rm (c)} The adjoint satisfies $\Gamma_j(\bar z)^{\ast}=\tau_j R_0(z)$
as an operator from $L^2(\re^3)$ to $L^2(S^2)$.
\end{lem}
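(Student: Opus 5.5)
We prove the three parts in order. Part (a) follows from Lemma~\ref{lem:gamma-HS} and part (c) is a Fubini computation; the only delicate point is justifying the trace identity in (b).

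\textbf{Part (a).} Apply Lemma~\ref{lem:gamma-HS} with $R=R_j$; this gives that each $\Gamma_j(z)$ is Hilbert--Schmidt. The column operator $\Gamma(z)=\sum_j\Gamma_j(z)P_j$ is a finite sum, where $P_j:\mathcal K_N\to L^2(S^2)$ are the coordinate projections. Since the Hilbert--Schmidt operators form an ideal closed under finite sums, $\Gamma(z)$ is Hilbert--Schmidt.

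\textbf{Part (b): boundedness and symmetry.} For $i\ne j$ the kernel $G_z(R_i\omega,R_j\omega')$ is bounded by $1/(4\pi|R_i-R_j|)$, so it lies in $L^2(S^2\times S^2)$. Hence $m_{ij}(z)$ is Hilbert--Schmidt.

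For $i=j$ we have $|R_i\omega-R_i\omega'|=R_i|\omega-\omega'|$, so the kernel is dominated by $(4\pi R_i|\omega-\omega'|)^{-1}$. By the same spherical coordinate substitution as in Lemma~\ref{lem:gamma-HS}, with $\rho=|\omega-\omega'|$ and $d\omega'=2\pi\rho\,d\rho$,
\[
\sup_\omega\int_{S^2}|\omega-\omega'|^{-1}\,d\omega'=\int_0^2 2\pi\,d\rho=4\pi<\infty .
\]
The kernel is symmetric in $(\omega,\omega')$, so Schur's test gives boundedness of $m_{ii}(z)$.

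For the symmetry relation, note that with $\Im\sqrt z>0$ we have $\sqrt{\bar z}=-\overline{\sqrt z}$. Therefore $G_{\bar z}(x,y)=\overline{G_z(x,y)}$, and $G_z(x,y)=G_z(y,x)$. The kernel of $m_{ji}(z)^*$ is $\overline{G_z(R_j\omega',R_i\omega)}=G_{\bar z}(R_i\omega,R_j\omega')$, which is the kernel of $m_{ij}(\bar z)$.

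\textbf{Part (b): the trace identity.} This is the main obstacle, because $\tau_i$ is defined on $H^1$ and so we must show that $\Gamma_j(z)\varphi\in H^1(\re^3)$ and that its trace is given by evaluating the kernel. I would argue by duality and density.

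First, for $f\in L^2(\re^3)$ we have $R_0(z)f\in H^2\subset H^1$. For $\varphi\in L^2(S^2)$, Fubini (justified by the Hilbert--Schmidt estimate) gives
\[
\langle\Gamma_j(z)\varphi,f\rangle=\langle\varphi,\tau_jR_0(\bar z)f\rangle .
\]
This holds first for $f\in C_c^\infty$, where $R_0f$ is continuous and $\tau_j$ is pointwise evaluation, and then for all $f$ by continuity of $\tau_jR_0(\bar z):L^2\to H^1\to L^2(S^2)$.

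This identity is exactly part (c), with $z$ and $\bar z$ interchanged. Taking adjoints shows that $\Gamma_j(z)=(\tau_jR_0(\bar z))^*$.

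Next, $\tau_j:H^1\to L^2(S^2)$ is bounded, so its adjoint $\tau_j^*$ maps $L^2(S^2)$ into $H^{-1}$. Since $R_0(z)$ maps $H^{-1}$ boundedly onto $H^1$, we get
\[
\Gamma_j(z)=R_0(z)\tau_j^{*}:L^2(S^2)\to H^1 ,
\]
which is bounded. In particular $\Gamma_j(z)\varphi\in H^1(\re^3)$.

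It remains to identify $\tau_i\Gamma_j(z)\varphi$ with $m_{ij}(z)\varphi$. For $\varphi$ continuous, $\Gamma_j(z)\varphi$ is continuous on $\re^3$, including across $S_j$, because the kernel singularity is integrable on the surface. The trace of a continuous $H^1$ function is its restriction, and that restriction is $m_{ij}(z)\varphi$. Both $\tau_i\Gamma_j(z)$ and $m_{ij}(z)$ are bounded, so the density of continuous functions in $L^2(S^2)$ extends the identity to all $\varphi$.
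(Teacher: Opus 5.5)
Your proposal is correct. Part (a), the boundedness of $m_{ij}(z)$ and the symmetry relation match the paper. Your Hilbert--Schmidt bound for $i\ne j$, where the paper applies Schur's test uniformly, is an immaterial variation.

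The genuine difference is in the trace identity $\tau_i\Gamma_j(z)=m_{ij}(z)$.

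\textbf{The paper's route.} It invokes Fubini--Tonelli and evaluates the kernel at $x=R_i\omega$. This tacitly assumes two things: that $\Gamma_j(z)\varphi$ lies in $H^1(\re^3)$, the domain of $\tau_i$, and that its trace equals the pointwise restriction of the integral. The $H^1$ mapping property $\Gamma_j(z)=R_0(z)\tau_j^\ast$ only appears later, in the proof of Theorem~\ref{thm:elem-res}, where it is attributed to this lemma.

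\textbf{Your route.} You reorder the argument.
\begin{itemize}
\item You prove the duality relation of (c) first, on $C_c^\infty$ and then by continuity.
\item You take adjoints to get $\Gamma_j(z)=R_0(z)\tau_j^\ast:L^2(S^2)\to H^1(\re^3)$.
\item You identify the trace with the kernel integral using continuity of single-layer potentials with continuous densities, the fact that trace equals restriction for continuous $H^1$ functions, and density together with boundedness of both sides.
\end{itemize}
This costs a few more lines. In exchange, it closes exactly the point the paper glosses over and makes the lemma self-contained.

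\textbf{One step to tighten.} Integrability of the singularity alone gives finite values of $\Gamma_j(z)\varphi$ on $S_j$, not continuity across $S_j$. For continuity you need uniform smallness near the singularity. One option is the bound $\int_{\{|x-R_j\omega|<\varepsilon\}}|x-R_j\omega|^{-1}\,d\omega=O(\varepsilon)$, uniformly in $x$, which follows from the same substitution as in Lemma~\ref{lem:gamma-HS}; the usual near/far splitting then gives continuity. Another is the explicit formula $\int_{S^2}|x-R_j\omega|^{-1}\,d\omega=4\pi/\max(|x|,R_j)$ combined with the generalized dominated convergence theorem.
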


\begin{proof}
(a) follows from Lemma~\ref{lem:gamma-HS} with $R=R_j$.

(b) The operator $m_{ij}(z)$ is an integral operator with kernel
$K_{ij}^z(\omega,\omega')=G_z(R_i\omega,R_j\omega')$.
Let $c=\Im\sqrt z>0$.
Then
$$
\bigl|K_{ij}^z(\omega,\omega')\bigr|
\le
\frac{1}{4\pi}\,\frac{e^{-c|R_i\omega-R_j\omega'|}}{|R_i\omega-R_j\omega'|}
\qquad(\omega\ne\omega').
$$
For fixed $\omega$, the map $\omega'\mapsto |R_i\omega-R_j\omega'|^{-1}$ is
integrable on $S^2$, and similarly with $\omega$ and $\omega'$ interchanged.
Hence the Schur test yields boundedness of $m_{ij}(z)$ on $L^2(S^2)$.
The symmetry $G_{\bar z}(x,y)=\overline{G_z(y,x)}$ implies
$m_{ij}(\bar z)=m_{ji}(z)^{\ast}$.

To prove $\tau_i\Gamma_j(z)=m_{ij}(z)$, let $\phi\in L^2(S^2)$.
The estimate above implies absolute integrability on $S^2\times S^2$, hence
Fubini--Tonelli applies and, for a.e.\ $\omega\in S^2$,
$$
(\tau_i\Gamma_j(z)\phi)(\omega)
=
\int_{S^2}G_z(R_i\omega,R_j\omega')\,\phi(\omega')\,d\omega'
=
(m_{ij}(z)\phi)(\omega).
$$

(c) Let $f\in L^2(\re^3)$ and $\phi\in L^2(S^2)$.
Using Fubini--Tonelli (justified by Cauchy--Schwarz and (a)), we compute
\begin{eqnarray*}
(\Gamma_j(\bar z)\phi,f)_{L^2(\re^3)}
&=&\int_{\re^3}\int_{S^2}
   G_{\bar z}(x,R_j\omega)\,\phi(\omega)\,
   \overline{f(x)}\,d\omega\,dx \\
&=&\int_{S^2}\phi(\omega)\,
   \overline{\int_{\re^3}
   \overline{G_{\bar z}(x,R_j\omega)}\,f(x)\,dx}\,d\omega \\
&=&\int_{S^2}\phi(\omega)\,
   \overline{\int_{\re^3}
   G_{z}(R_j\omega,x)\,f(x)\,dx}\,d\omega
=
(\phi,\ \tau_j R_0(z)f)_{L^2(S^2)} .
\end{eqnarray*}
Hence $\Gamma_j(\bar z)^{\ast}=\tau_j R_0(z)$.
\end{proof}

We recall the quadratic form $h_N$ from Section~\ref{sec:form},
\begin{equation}\label{eq:hN-recall}
h_N[u,v]
=
\int_{\re^3}\nabla u\cdot\overline{\nabla v}\,dx
+
\sum_{j=1}^{N}\int_{S_j}\alpha_j(x)\,u(x)\,\overline{v(x)}\,d\sigma_j(x),
\qquad
D[h_N]=H^1(\re^3),
\end{equation}
which defines a self--adjoint operator $H_N$ in $L^2(\re^3)$.

While Kre\u{\i}n--type resolvent formulas are available in abstract extension
theory, we present here a self--contained boundary integral derivation tailored
to concentric spheres, which yields an explicit $N\times N$ operator matrix
$K_N(z)$ on $\mathcal K_N=\bigoplus_{j=1}^N L^2(S^2)$ and is directly compatible with the
subsequent partial--wave reduction.

We are now in a position to state the resolvent formula for $H_N$, 
which is one of the main results of this paper.
\begin{thm}\label{thm:elem-res}
Let $H_N$ be as above and let $z\in\co\setminus[0,\infty)$ with $\Im\sqrt z>0$.
Assume that $K_N(z)=I+m(z)\Theta$ on $\bigoplus_{j=1}^N L^2(S^2)$ is invertible,
where $\Theta$ is given in \eqref{eq:ThetaN}.
Define a bounded operator $R(z)$ on $L^2(\re^3)$ by
\begin{equation}\label{eq:elem-res}
R(z)
=
R_0(z)
-
\Gamma(z)\,\Theta\,K_N(z)^{-1}\Gamma(\bar z)^{\ast}.
\end{equation}
Then $R(z)=(H_N-z)^{-1}$, in particular $z\in\rho(H_N)$, and the resolvent difference
$(H_N-z)^{-1}-R_0(z)$ is a trace class operator.
\end{thm}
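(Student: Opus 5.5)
The plan is to verify directly that $R(z)$ inverts $H_N-z$ in the form sense, using the quadratic form $h_N$ of \eqref{eq:hN-recall}. Fix $f\in L^2(\re^3)$ and set $u:=R(z)f$. Write $\varphi:=-\Theta K_N(z)^{-1}\Gamma(\bar z)^\ast f\in\mathcal K_N$, so that $u=R_0(z)f+\Gamma(z)\varphi$. By Lemma~\ref{lem:maps}(c), $\Gamma(\bar z)^\ast f=(\tau_jR_0(z)f)_j$, and $\Theta$ and $K_N(z)^{-1}$ are bounded. Hence $R(z)$ is bounded on $L^2(\re^3)$, and it is trace class as a perturbation of $R_0(z)$. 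Indeed, $\Gamma(z)$ and $\Gamma(\bar z)$ are Hilbert--Schmidt by Lemma~\ref{lem:maps}(a), so the product $\Gamma(z)\,[\Theta K_N(z)^{-1}]\,\Gamma(\bar z)^\ast$ is Hilbert--Schmidt times bounded times Hilbert--Schmidt, which is trace class. This gives the final assertion once $R(z)=(H_N-z)^{-1}$ is proved.

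The key analytic input is a layer potential identity. For $\psi\in L^2(S^2)$, the function $\Gamma_j(z)\psi$ lies in $H^1(\re^3)$, and for all $v\in H^1(\re^3)$
\begin{equation*}
\int_{\re^3}\nabla(\Gamma_j(z)\psi)\cdot\overline{\nabla v}\,dx - z\int_{\re^3}\Gamma_j(z)\psi\,\overline v\,dx = \int_{S^2}\psi(\omega)\,\overline{(\tau_jv)(\omega)}\,d\omega .
\end{equation*}
Formally this is $(-\Delta-z)\Gamma_j\psi=\psi\,\delta_{S_j}$, with the density pulled back so that no factor $R_j^2$ appears. I would prove it by duality. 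For $v\in C_c^\infty(\re^3)$ write $g:=(-\Delta-z)v$, so $v=R_0(z)g$. Then the left side equals $(\Gamma_j(z)\psi,g)$, computed by integrating by parts in the distributional sense, and
\begin{equation*}
(\Gamma_j(z)\psi,g)=(\psi,\Gamma_j(z)^\ast g)=(\psi,\tau_jR_0(\bar z)g)=(\psi,\tau_j v),
\end{equation*}
where the middle step uses Lemma~\ref{lem:maps}(c) with $z$ and $\bar z$ interchanged. Membership in $H^1$ follows from the Fourier-side bound: the Fourier transform of $\psi\,d\omega$ pushed to $S_j$ is $O(|\xi|^{-1})$ in $L^2$-average over spheres, which makes $\widehat{\Gamma_j\psi}=(|\xi|^2-z)^{-1}\widehat{\psi\,d\omega}$ an element of $H^1$. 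Alternatively one can use that the trace map $H^1\to L^2(S_j)$ is bounded, so that $\psi\,\delta_{S_j}\in H^{-1}$ and $R_0(z):H^{-1}\to H^1$. Density of $C_c^\infty$ then extends the identity to all $v\in H^1(\re^3)$.

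With this identity, for $v\in H^1(\re^3)$ I compute
\begin{equation*}
h_N[u,v]-z(u,v)=(f,v)+\sum_j\int_{S^2}\varphi_j\,\overline{\tau_jv}\,d\omega+\sum_j R_j^2\int_{S^2}\widetilde\alpha_j\,\tau_ju\,\overline{\tau_jv}\,d\omega .
\end{equation*}
By Lemma~\ref{lem:maps}(b), $\tau_iu=\tau_iR_0(z)f+(m(z)\varphi)_i$. It therefore suffices to show $\varphi+\Theta\bigl(\Gamma(\bar z)^\ast f+m(z)\varphi\bigr)=0$, that is,
\begin{equation*}
(I+\Theta m(z))\varphi=-\Theta\,\Gamma(\bar z)^\ast f .
\end{equation*}
Since $\varphi=-\Theta K_N^{-1}\Gamma(\bar z)^\ast f$, the identity $\Theta(I+m\Theta)=(I+\Theta m)\Theta$ gives
\begin{equation*}
(I+\Theta m)\Theta K_N^{-1}=\Theta(I+m\Theta)K_N^{-1}=\Theta,
\end{equation*}
which is exactly what is needed. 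Hence $h_N[u,v]-z(u,v)=(f,v)$ for all $v\in H^1$, so by the representation theorem $u\in D(H_N)$ and $(H_N-z)u=f$. Thus $R(z)$ is a right inverse of $H_N-z$ defined on all of $L^2$.

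To conclude that $z\in\rho(H_N)$, I also need injectivity of $H_N-z$. For nonreal $z$ this is automatic from self-adjointness. For $z<0$ I argue directly. Suppose $(H_N-z)w=0$ with $w\in D(H_N)$, and set $\psi_j:=R_j^2\widetilde\alpha_j\tau_jw$. Then $w+\sum_j\Gamma_j(z)\psi_j$ satisfies the free equation in the form sense, by the identity above, and so vanishes. Taking traces gives $\tau w+m\Theta\tau w=0$, i.e.\ $K_N(z)\tau w=0$, hence $\tau w=0$ and then $w=0$. This shows $H_N-z$ is bijective, so $R(z)=(H_N-z)^{-1}$. I expect the main obstacle to be the rigorous layer potential identity, namely establishing $\Gamma_j\psi\in H^1$ together with its weak equation; everything else is bookkeeping with Lemma~\ref{lem:maps}.
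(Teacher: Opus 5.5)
Your proposal is correct and follows essentially the same route as the paper: the weak identity $(\nabla\Gamma_j(z)\psi,\nabla v)-z(\Gamma_j(z)\psi,v)=(\psi,\tau_jv)$, the trace bookkeeping giving $\Theta\tau u=-\varphi$ (the paper states this as $\tau u=K_N(z)^{-1}\Gamma(\bar z)^\ast f$), the representation theorem, and Hilbert--Schmidt $\times$ bounded $\times$ Hilbert--Schmidt for the trace class property. Two small remarks: in your duality proof of the layer identity you should take $g:=(-\Delta-\bar z)v$, so that $v=R_0(\bar z)g$, to match the sesquilinear convention. Your separate injectivity check for $z<0$ is correct but not needed, since surjectivity of the self-adjoint operator $H_N-z$ for real $z$ already gives $\ker(H_N-z)=\operatorname{Ran}(H_N-z)^\perp=\{0\}$.
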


\begin{proof}
Assume that $K_N(z)$ is invertible and define $R(z)$ by \eqref{eq:elem-res}.
Let $f\in L^2(\re^3)$ and set
\begin{equation}\label{eq:def-u-Rzf}
u:=R(z)f.
\end{equation}
We show that $u\in D(H_N)$ and $(H_N-z)u=f$.

We first verify that $u\in H^1(\re^3)$.
Since $R_0(z):L^2(\re^3)\to H^2(\re^3)\subset H^1(\re^3)$, we have
$R_0(z)f\in H^1(\re^3)$.
Each trace map $\tau_j:H^1(\mathbb R^3)\to H^{1/2}(S^2)$ is bounded, and we use the continuous embedding
$H^{1/2}(S^2)\hookrightarrow L^2(S^2)$ to regard $\tau_j$ as a bounded operator into $L^2(S^2)$.
Hence $\tau_j:H^1(\re^3)\to L^2(S^2)$ is bounded, and its adjoint
$\tau_j^\ast:L^2(S^2)\to H^{-1}(\re^3)$ is bounded.
For $z\in\co\setminus[0,\infty)$ the free resolvent extends boundedly as
$R_0(z):H^{-1}(\re^3)\to H^1(\re^3)$.
By Lemma~\ref{lem:maps}, one has the identity $R_0(z)\tau_j^\ast=\Gamma_j(z)$.
Since $R_0(z):H^{-1}(\mathbb R^3)\to H^1(\mathbb R^3)$ is bounded, it follows that
$\Gamma_j(z)$ maps $L^2(S^2)$ boundedly into $H^1(\mathbb R^3)$.
Hence \eqref{eq:def-u-Rzf} implies $u\in H^1(\mathbb R^3)$.
Set
\[
g:=K_N(z)^{-1}\Gamma(\bar z)^{\ast}f
\in \bigoplus_{j=1}^N L^2(S^2),
\qquad
g={}^{t}(g_1,\dots,g_N).
\]
Using again Lemma~\ref{lem:maps}, namely $\Gamma_j(\bar z)^{\ast}=\tau_j R_0(z)$, we obtain
\[
u
=
R_0(z)f-\sum_{j=1}^N \Gamma_j(z)\,(\Theta g)_j,
\qquad
(\Theta g)_j(\omega)=R_j^2\,\widetilde\alpha_j(\omega)\,g_j(\omega).
\]
Taking traces and using $\tau_i\Gamma_j(z)=m_{ij}(z)$ from Lemma~\ref{lem:maps}(b), we obtain
$$
\tau_i u
=
\tau_iR_0(z)f-\sum_{j=1}^N m_{ij}(z)\,(\Theta g)_j,
\qquad i=1,\dots,N.
$$
In vector form, with $\tau u={}^{t}(\tau_1u,\dots,\tau_Nu)$, this reads
$$
\tau u
=
\Gamma(\bar z)^{\ast}f-m(z)\Theta g.
$$
Since $K_N(z)=I+m(z)\Theta$ and $K_N(z)g=\Gamma(\bar z)^{\ast}f$, we obtain
$$
\tau u
=
K_N(z)g-m(z)\Theta g
=
g.
$$

Let $\varphi\in H^1(\re^3)$ and let $H_0:=-\Delta$ be the free Schr\"odinger operator. 
Since $(H_0-z)R_0(z)f=f$, we have
$$
(\nabla R_0(z)f,\nabla\varphi)_{L^2(\re^3)}
-
z(R_0(z)f,\varphi)_{L^2(\re^3)}
=
(f,\varphi)_{L^2(\re^3)}.
$$
Moreover, for $h\in L^2(S^2)$ one has $(H_0-z)\Gamma_j(z)h=\tau_j^\ast h$ in
$H^{-1}(\re^3)$, hence
$$
(\nabla \Gamma_j(z)h,\nabla\varphi)_{L^2(\re^3)}
-
z(\Gamma_j(z)h,\varphi)_{L^2(\re^3)}
=
(h,\tau_j\varphi)_{L^2(S^2)} .
$$
Applying this with $h=(\Theta g)_j$ and summing over $j$, we obtain
$$
(\nabla u,\nabla\varphi)_{L^2(\re^3)}
-
z(u,\varphi)_{L^2(\re^3)}
=
(f,\varphi)_{L^2(\re^3)}
-
\sum_{j=1}^N ((\Theta g)_j,\tau_j\varphi)_{L^2(S^2)}.
$$
On the other hand, by \eqref{eq:hN-recall} and $d\sigma_j=R_j^2\,d\omega$,
$$
h_N[u,\varphi]-z(u,\varphi)_{L^2(\re^3)}
=
(\nabla u,\nabla\varphi)_{L^2(\re^3)}
-
z(u,\varphi)_{L^2(\re^3)}
+
\sum_{j=1}^N ((\Theta\,\tau u)_j,\tau_j\varphi)_{L^2(S^2)}.
$$
Since $\tau u=g$, we have $\Theta\,\tau u=\Theta g$, and hence
$$
h_N[u,\varphi]-z(u,\varphi)_{L^2(\re^3)}
=
(f,\varphi)_{L^2(\re^3)}
\qquad(\forall\,\varphi\in H^1(\re^3)).
$$
Since $h_N$ is closed and lower semibounded, by the representation theorem
(with $w:=f+zu\in L^2(\re^3)$) this implies that
$u\in D(H_N)$ and $(H_N-z)u=f$.
Therefore $R(z)=(H_N-z)^{-1}$.

Finally, by Lemma~\ref{lem:maps}(a), both $\Gamma(z)$ and $\Gamma(\bar z)^{\ast}$ are
Hilbert--Schmidt operators.
Since $\Theta$ and $K_N(z)^{-1}$ are bounded, the composition
$\Gamma(z)\Theta K_N(z)^{-1}\Gamma(\bar z)^{\ast}$ is a trace class operator,
by the ideal property of the Schatten--von Neumann classes.
Consequently, the resolvent difference $(H_N-z)^{-1}-R_0(z)$ is trace class.
\end{proof}

\begin{lem}\label{lem:compact-m}
For each $z\in\co\setminus[0,\infty)$ the operator $m(z)$ acts boundedly on
$\bigoplus_{j=1}^N L^2(S^2)$, depends analytically on $z$, and is compact.
Hence $K_N(z)=I+m(z)\Theta$ is an analytic Fredholm family of index zero.
In particular, if $K_N(z_0)$ is invertible for some $z_0\in\co\setminus[0,\infty)$,
then $K_N(z)^{-1}$ exists and depends meromorphically on $z$, and the
set of points where $K_N(z)$ is not invertible is discrete.
These points coincide with the poles of the resolvent of $H_N$; see
Theorem~\ref{thm:elem-res} together with the resolvent identity
\eqref{eq:elem-res}.
\end{lem}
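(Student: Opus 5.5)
The plan has four steps: establish boundedness and compactness of each block $m_{ij}(z)$, prove analyticity in $z$, invoke the analytic Fredholm theorem, and identify the singular set of $K_N(z)$ with the discrete spectrum of $H_N$ off $[0,\infty)$.

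\emph{Compactness.} I would work entrywise, since an $N\times N$ operator matrix is compact (resp.\ analytic) exactly when all its entries are. For $i\neq j$ the kernel $G_z(R_i\omega,R_j\omega')$ is smooth and bounded on $S^2\times S^2$, because $|R_i\omega-R_j\omega'|\ge |R_i-R_j|>0$. It therefore lies in $L^2(S^2\times S^2)$, so $m_{ij}(z)$ is Hilbert--Schmidt. For $i=j$ the kernel is bounded by $C|\omega-\omega'|^{-1}$ on the two-dimensional manifold $S^2$, and $\int_{S^2}|\omega-\omega'|^{-2}\,d\omega'$ diverges only logarithmically. So the diagonal entries are not Hilbert--Schmidt in general, and I would prove compactness by truncation instead. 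Set $G_z^\varepsilon=G_z\,\chi_{\{|x-y|>\varepsilon\}}$. The truncated operators are Hilbert--Schmidt, and the Schur test as in Lemma~\ref{lem:maps}(b) gives
$$\|m_{jj}(z)-m^\varepsilon_{jj}(z)\|\le \sup_\omega\int_{|R_j\omega-R_j\omega'|<\varepsilon}\frac{d\omega'}{4\pi R_j|\omega-\omega'|}=O(\varepsilon).$$
Hence $m_{jj}(z)$ is a norm limit of compact operators and is compact. Alternatively one can note that $m_{jj}(z)=\tau_j\Gamma_j(z)$ maps $L^2(S^2)$ into $H^{1/2}(S^2)$ by the proof of Theorem~\ref{thm:elem-res}, and then use the compact embedding $H^{1/2}(S^2)\hookrightarrow L^2(S^2)$. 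Since $\Theta$ is bounded, $m(z)\Theta$ is compact too, and $K_N(z)=I+m(z)\Theta$ is Fredholm of index zero.

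\emph{Analyticity.} Fix $z_0$ and a compact neighbourhood $U\subset\co\setminus[0,\infty)$ on which $\Im\sqrt z\ge c>0$. The map $z\mapsto e^{i\sqrt z\,\rho}/(4\pi\rho)$ is holomorphic for each $\rho>0$. Its $z$-derivative is $i\rho\,e^{i\sqrt z\rho}/(2\sqrt z\cdot4\pi\rho)$, which is bounded by $C_U$ uniformly in $\rho$ because the factor $\rho$ cancels the singularity. So difference quotients of the kernels converge uniformly with a uniformly integrable Schur bound, and $z\mapsto m_{ij}(z)$ is holomorphic in operator norm. This is the step I expect to need the most care, since the estimate has to be uniform in $z\in U$.

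\emph{Meromorphic inverse and spectrum.} The analytic Fredholm theorem on the connected domain $\co\setminus[0,\infty)$ then gives the dichotomy: either $K_N(z)$ is nowhere invertible, or $K_N(z)^{-1}$ is meromorphic with a discrete singular set. To rule out the first case I would take $z=-\lambda$ with $\lambda\to\infty$. Then $\|m(-\lambda)\|\to0$: the Schur bound for each entry is dominated by the integral of $e^{-\sqrt\lambda\,\rho}/\rho$ over the spheres, which tends to $0$. So $K_N(-\lambda)$ is invertible by a Neumann series once $\lambda$ is large. For the identification with the resolvent, Theorem~\ref{thm:elem-res} shows that invertibility of $K_N(z)$ implies $z\in\rho(H_N)$. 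Conversely, suppose $K_N(z)g=0$ with $g\neq0$, and put $u=\Gamma(z)\Theta g$. Then $\tau u=m(z)\Theta g=-g$, and repeating the form computation from the proof of Theorem~\ref{thm:elem-res} with $f=0$ shows that $-u$ solves $(H_N-z)(-u)=0$. Moreover $u\neq0$, because $\Theta g\neq0$ (otherwise $g=-m\Theta g=0$). Thus $z$ is an eigenvalue. Since $\sigma(H_N)\setminus[0,\infty)$ is discrete by the trace-class resolvent difference, these eigenvalues are exactly the poles of $(H_N-z)^{-1}$.
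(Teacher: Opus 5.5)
Your proof is correct. It differs from the paper's in how the diagonal blocks are handled and in how much of the Fredholm conclusion is actually established.

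\textbf{Diagonal blocks.} The paper splits $m_{ii}(z)=m_{ii}(0)+\bigl(m_{ii}(z)-m_{ii}(0)\bigr)$. The remainder has the continuous kernel $(e^{i\sqrt z\rho}-1)/(4\pi\rho)$, so it is Hilbert--Schmidt and visibly analytic in $z$. The principal part $m_{ii}(0)$ is the Laplace single--layer operator, which is compact because it maps $L^2(S^2)$ into $H^1(S^2)$. This isolates the $z$--independent principal part but relies on a cited smoothing property.

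Your truncation argument uses only $|G_z|\le(4\pi\rho)^{-1}$ and the fact that the cap $\{\omega':|\omega-\omega'|<s\}$ has area $\pi s^2$. It gives the explicit bound $\|m_{jj}(z)-m^{\varepsilon}_{jj}(z)\|\le\varepsilon/(2R_j^2)$ and needs no single--layer theory. Your alternative via $\tau_j\Gamma_j(z):L^2(S^2)\to H^{1/2}(S^2)$ is also legitimate, since the proof of Theorem~\ref{thm:elem-res} does not use this lemma.

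\textbf{Analyticity.} Your estimate, with kernel derivative $ie^{i\sqrt z\rho}/(8\pi\sqrt z)$ bounded uniformly in $\rho$, is the paper's removable--singularity remark made quantitative.

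\textbf{Fredholm conclusion.} Here you genuinely go further than the paper. The paper keeps ``$K_N(z_0)$ invertible for some $z_0$'' as a hypothesis. For the identification with resolvent poles it points only to Theorem~\ref{thm:elem-res}, which gives just the direction ``invertible $\Rightarrow z\in\rho(H_N)$''. Your Neumann series at $z=-\lambda$, $\lambda\to\infty$, makes the meromorphy unconditional. Your eigenfunction construction from $g\in\ker K_N(z)$ supplies the converse direction, anticipating Proposition~\ref{prop:eig-cond}.

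\textbf{One small slip.} $\Theta g\neq0$ alone does not give $u=\Gamma(z)\Theta g\neq0$ without knowing that $\Gamma(z)$ is injective. The correct reason is the identity you had just derived, $\tau u=-g\neq0$.
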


\begin{proof}
Each block $m_{ij}(z)$ is an integral operator with kernel
$G_z(R_i\omega,R_j\omega')$ on the compact set $S^2\times S^2$.

If $i\ne j$, then $|R_i\omega-R_j\omega'|\ge|R_i-R_j|>0$, so the kernel is continuous
on $S^2\times S^2$ for $\Im\sqrt z>0$.
Hence $m_{ij}(z)$ is Hilbert--Schmidt and in particular compact.

For $i=j$, we write
$$
m_{ii}(z)=m_{ii}(0)+\bigl(m_{ii}(z)-m_{ii}(0)\bigr).
$$
The difference $G_z-G_0$ has a removable singularity at the diagonal, since
$$
\frac{e^{i\sqrt z r}-1}{4\pi r}
=
\frac{i\sqrt z}{4\pi}+O(r)
\qquad(r\to0).
$$
Thus the kernel of $m_{ii}(z)-m_{ii}(0)$ extends continuously to $S^2\times S^2$,
so $m_{ii}(z)-m_{ii}(0)$ is Hilbert--Schmidt and hence compact.
On the other hand, $m_{ii}(0)$ is the Laplace single--layer operator on $S^2$.
It maps $L^2(S^2)$ continuously into $H^1(S^2)$, and the embedding
$H^1(S^2)\hookrightarrow L^2(S^2)$ is compact.
Therefore $m_{ii}(0)$ is compact on $L^2(S^2)$, and so is $m_{ii}(z)$.

Hence all blocks $m_{ij}(z)$ are compact and $m(z)$ is compact on
$\mathcal K_N$.
Analyticity in $z$ follows from the explicit kernel dependence for $\Im\sqrt z>0$.
Since $\Theta$ is bounded, $m(z)\Theta$ is compact and $K_N(z)=I+m(z)\Theta$ is an
analytic Fredholm family of index zero.
The remaining statements follow from the analytic Fredholm theorem (applied once
$K_N(z_0)$ is invertible for some $z_0\in\co\setminus[0,\infty)$) together with
Theorem~\ref{thm:elem-res} and \eqref{eq:elem-res}.
\end{proof}

The following proposition characterizes the eigenvalues of $H_N$ in terms of
the boundary operator $K_N(z)$.

\begin{prop}\label{prop:eig-cond}
Let $z\in\co\setminus[0,\infty)$.
Then $z$ belongs to the point spectrum of $H_N$ if and only if $K_N(z)$ is not
invertible.
Moreover, if $K_N(z)$ is not invertible, then
\[
\dim\ker(H_N-z)=\dim\ker K_N(z).
\]
\end{prop}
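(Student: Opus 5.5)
The plan is to construct mutually inverse linear maps between $\ker(H_N-z)$ and $\ker K_N(z)$. The natural candidates are the trace map $u\mapsto\tau u={}^{t}(\tau_1u,\dots,\tau_Nu)$ in one direction and the layer potential $g\mapsto-\Gamma(z)\Theta g$ in the other. Since $K_N(z)=I+m(z)\Theta$ is Fredholm of index zero by Lemma~\ref{lem:compact-m}, noninvertibility of $K_N(z)$ is equivalent to $\ker K_N(z)\ne\{0\}$. Hence both the equivalence and the dimension formula follow once the two maps are shown to be injective and to map the kernels into each other.

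\emph{From $\ker K_N(z)$ to eigenfunctions.} Let $g\in\ker K_N(z)$, $g\neq0$, and set $u:=-\Gamma(z)\Theta g$. As in the proof of Theorem~\ref{thm:elem-res}, $\Gamma_j(z)=R_0(z)\tau_j^\ast$ maps $L^2(S^2)$ into $H^1(\re^3)$, so $u\in H^1(\re^3)$. By Lemma~\ref{lem:maps}(b), $\tau u=-m(z)\Theta g=g$, using $K_N(z)g=0$. The weak identity $(\nabla\Gamma_j(z)h,\nabla\varphi)-z(\Gamma_j(z)h,\varphi)=(h,\tau_j\varphi)$ established there then gives
\[
h_N[u,\varphi]-z(u,\varphi)=-\sum_j((\Theta g)_j,\tau_j\varphi)+\sum_j((\Theta\tau u)_j,\tau_j\varphi)=0
\]
for all $\varphi\in H^1(\re^3)$. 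The representation theorem therefore yields $u\in D(H_N)$ and $H_Nu=zu$. Moreover $u\neq0$, since $\tau u=g\neq0$. So $g\mapsto-\Gamma(z)\Theta g$ is an injective linear map $\ker K_N(z)\to\ker(H_N-z)$.

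\emph{From eigenfunctions to $\ker K_N(z)$.} Let $u\in D(H_N)$ with $H_Nu=zu$, and put $g:=\tau u$. For every $\varphi\in H^1$, $h_N[u,\varphi]=z(u,\varphi)$, i.e.\ $(H_0-z)u=-\sum_j\tau_j^\ast(\Theta g)_j$ in $H^{-1}(\re^3)$. Applying the bounded extension $R_0(z):H^{-1}\to H^1$, which inverts $H_0-z$ on $H^1$ because $z\notin[0,\infty)$, gives $u=-\Gamma(z)\Theta g$. Taking traces gives $g=-m(z)\Theta g$, so $K_N(z)g=0$. If $g=0$, the same formula forces $u=0$, so $u\mapsto\tau u$ is injective on $\ker(H_N-z)$.

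Both maps are injective and linear between the two kernels, so the dimensions agree; in fact the maps are mutually inverse, since $\tau(-\Gamma\Theta g)=g$. This also proves the equivalence in the statement. The main obstacle is justifying the distributional identity $(H_0-z)\Gamma_j(z)h=\tau_j^\ast h$ and the invertibility of $H_0-z$ from $H^1$ onto $H^{-1}$ for nonreal $z$ and for $z<0$. Both come from the Fourier-side boundedness of $(|\xi|^2-z)^{-1}$ on $\mathbb{R}^3$. As a check, for $z<0$ one can note that $\dim\ker(H_N-z)$ is finite because $H_N$ has compact resolvent difference (trace class, Theorem~\ref{thm:elem-res}).
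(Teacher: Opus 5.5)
Your proposal is correct and follows essentially the same route as the paper. It uses the same pair of mutually inverse maps $g\mapsto-\Gamma(z)\Theta g$ and $u\mapsto\tau u$ between $\ker K_N(z)$ and $\ker(H_N-z)$, the same weak-form computation, and the representation theorem. You also make explicit several steps the paper leaves implicit: the Fredholm index-zero argument that noninvertibility of $K_N(z)$ gives a nontrivial kernel, the check $u\ne0$ via $\tau u=g$, and the inversion of $H_0-z:H^1\to H^{-1}$ that yields $u=-\Gamma(z)\Theta\tau u$ in the converse direction. The paper instead handles the direction ``invertible implies not an eigenvalue'' by citing Theorem~\ref{thm:elem-res}.
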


\begin{proof}
If $K_N(z)$ is invertible, then Theorem~\ref{thm:elem-res} implies
$z\in\rho(H_N)$, hence $z$ is not an eigenvalue.

Assume that $K_N(z)$ is not invertible and take $0\ne g\in\ker K_N(z)$.
Set
$$
u:=-\Gamma(z)\Theta g.
$$
As in the proof of Theorem~\ref{thm:elem-res}, we have $u\in H^1(\re^3)$.
Moreover, by Lemma~\ref{lem:maps}(b),
$$
\tau u=-m(z)\Theta g,
$$
and $K_N(z)g=0$ is equivalent to $g=-m(z)\Theta g$, hence $\tau u=g$.
Repeating the form computation in the proof of Theorem~\ref{thm:elem-res} with $f=0$
gives
$$
h_N[u,\varphi]-z(u,\varphi)_{L^2(\re^3)}=0
\qquad(\forall\,\varphi\in H^1(\re^3)).
$$
Thus $u\in D(H_N)$ and $(H_N-z)u=0$, so $z$ is an eigenvalue.

Conversely, let $0\ne u\in\ker(H_N-z)$ and set $g:=\tau u$.
Arguing as in the proof of Theorem~\ref{thm:elem-res} with $f=0$ (i.e., using the
identity \eqref{eq:elem-res} for the resolvent difference) shows that
$u=-\Gamma(z)\Theta g$ and $K_N(z)g=0$, hence $K_N(z)$ is not invertible; cf.
Lemma~\ref{lem:maps} and the form characterization \eqref{eq:hN-recall}.

Finally, the maps $g\mapsto u=-\Gamma(z)\Theta g$ and $u\mapsto\tau u$ restrict to
inverse bijections between $\ker K_N(z)$ and $\ker(H_N-z)$, and therefore
$\dim\ker(H_N-z)=\dim\ker K_N(z)$.
\end{proof}

Before turning to rotational symmetry, we note that trace class perturbations of
the resolvent are closely related to scattering theory.
In the present setting Theorem~\ref{thm:elem-res} shows that
$(H_N-z)^{-1}-R_0(z)$ is of trace class for all $z\in\rho(H_N)$ with
$z\in\co\setminus[0,\infty)$.
By the Birman--Kuroda theorem~\cite{ReedSimonIII}, this implies existence and
completeness of the wave operators and unitary equivalence of the absolutely
continuous parts.
As a consequence, the absolutely continuous and essential spectra of $H_N$ can
be identified explicitly, as stated in the following theorem.

\begin{thm}\label{thm:ac}
The absolutely continuous and essential spectra of $H_N$ coincide with those
of the free Hamiltonian:
$$
\sigma_{\mathrm{ac}}(H_N)=[0,\infty),\qquad
\sigma_{\mathrm{ess}}(H_N)=[0,\infty).
$$
Moreover, the negative spectrum of $H_N$ is purely discrete and consists of
eigenvalues of finite multiplicity; the only possible accumulation point of
the negative eigenvalues is $0$.
\end{thm}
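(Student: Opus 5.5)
We combine the trace class property from Theorem~\ref{thm:elem-res} with standard stability results. First we must check that the hypothesis of Theorem~\ref{thm:elem-res} is available at some point. Since $h_N$ is lower semibounded, every $z=-\lambda$ with $\lambda$ large lies in $\rho(H_N)$. We also want $K_N(-\lambda)$ to be invertible there. From the bound $|G_z(x,y)|\le e^{-c|x-y|}/(4\pi|x-y|)$ with $c=\sqrt\lambda$, the Schur test estimate in Lemma~\ref{lem:maps}(b) gives $\|m_{ij}(-\lambda)\|\to0$ as $\lambda\to\infty$. The off--diagonal kernels are bounded by $e^{-c|R_i-R_j|}$. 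The diagonal kernel integrals $\int_{S^2}e^{-c|\omega-\omega'|R}/|R\omega-R\omega'|\,d\omega'$ are $O(1/c)$, by the same $\rho$--substitution as in Lemma~\ref{lem:gamma-HS}. Hence $\|m(-\lambda)\Theta\|<1$ for large $\lambda$, and a Neumann series shows that $K_N(-\lambda)$ is invertible. Theorem~\ref{thm:elem-res} then shows that $(H_N+\lambda)^{-1}-(H_0+\lambda)^{-1}$ is trace class, in particular compact.

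\emph{Essential spectrum.} Weyl's theorem states that compact resolvent differences preserve the essential spectrum. This gives $\sigma_{\mathrm{ess}}(H_N)=\sigma_{\mathrm{ess}}(H_0)=[0,\infty)$. To carry this out, we apply the spectral mapping $t\mapsto(t+\lambda)^{-1}$: the essential spectrum of $(H_N+\lambda)^{-1}$ equals that of $(H_0+\lambda)^{-1}$, namely $[0,\lambda^{-1}]$, and we pull back.

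\emph{Absolutely continuous spectrum.} By the Birman--Kuroda theorem (in the resolvent form of \cite{ReedSimonIII}), the wave operators $W_\pm(H_N,H_0)$ exist and are complete, because the resolvent difference at the single point $-\lambda$ is trace class. It follows that $H_N^{\mathrm{ac}}$ is unitarily equivalent to $H_0^{\mathrm{ac}}=H_0$, so $\sigma_{\mathrm{ac}}(H_N)=[0,\infty)$.

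\emph{Negative spectrum.} Since $\sigma_{\mathrm{ess}}(H_N)\cap(-\infty,0)=\emptyset$, every point of $\sigma(H_N)\cap(-\infty,0)$ is an isolated eigenvalue of finite multiplicity. The spectrum is bounded below by the semiboundedness of $h_N$, so these eigenvalues can accumulate only at $\inf\sigma_{\mathrm{ess}}=0$. As a consistency check, Proposition~\ref{prop:eig-cond} identifies the multiplicities with $\dim\ker K_N(z)$, which is finite by Lemma~\ref{lem:compact-m}.

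The main obstacle is the first step: establishing invertibility of $K_N$ at some point, together with a norm estimate for the diagonal single--layer blocks as $\lambda\to\infty$. If that estimate turns out to be delicate, we would use a fallback that avoids $K_N$ entirely. The surface form terms are infinitesimally form bounded with respect to $-\Delta$, by the trace inequality $\|\tau_j u\|^2\le\varepsilon\|\nabla u\|^2+C_\varepsilon\|u\|^2$. Then $H_N+\lambda$ has a bounded inverse for large $\lambda$. A second resolvent identity in form sense, combined with the Hilbert--Schmidt property of $\Gamma_j$, would then give the trace class difference directly, and the rest of the argument goes through unchanged.
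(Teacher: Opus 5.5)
Your proposal is correct and follows the same route as the paper: the trace class resolvent difference from Theorem~\ref{thm:elem-res} gives $\sigma_{\mathrm{ac}}$ via the Birman--Kuroda theorem and $\sigma_{\mathrm{ess}}$ via Weyl's theorem, and discreteness of the negative spectrum follows from the latter. Your Neumann-series estimate $\|m(-\lambda)\Theta\|<1$ for large $\lambda$ explicitly verifies the invertibility hypothesis on $K_N$, which the paper leaves implicit: it simply asserts the trace class property for all $z\in\rho(H_N)\setminus[0,\infty)$, tacitly relying on Proposition~\ref{prop:eig-cond}.
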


\begin{rem}
The operator $K_N(z)$ acts on an infinite--dimensional Hilbert space, so in general
a global determinant is not defined.
The correct spectral condition is the non--invertibility of $K_N(z)$.
If we assume rotational symmetry, namely that each $\alpha_j$ is a constant, then
$\Theta$ acts only on the shell index and the kernels of $m_{ij}(z)$ are rotation
invariant.
In that case one can reduce $K_N(z)$ by the spherical harmonic decomposition and
obtain finite--dimensional secular equations in each angular momentum channel.
\end{rem}

In the remainder of the paper, when we discuss explicit closed formulas and
partial--wave reduction, we assume rotational symmetry, that is, $\alpha_j$ are
constants.

\begin{rem}\label{rem:ml-explicit}
Under rotational symmetry the $\ell$--channel block $m_\ell(z)$ admits an explicit
closed form in terms of spherical Bessel and Hankel functions.
For the reader's convenience we record the formula in Appendix~B, together with
the examples $N=2$ and $\ell=0,1$.
\end{rem}

\begin{lem}\label{lem:block-reduction}
Let $\{Y_{\ell m}:-\ell\le m\le\ell\}$ be a complete orthonormal basis of
spherical harmonics on $S^2$, and set
$$
 \mathcal K_N
 = \bigoplus_{j=1}^N L^2(S^2)
 = \bigoplus_{\ell=0}^\infty \mathcal H_\ell^{\oplus N},
 \qquad
 \mathcal H_\ell = \operatorname{span}\{Y_{\ell m}:-\ell\le m\le\ell\}.
$$
Suppose that the entries of $m(z)$ have rotation--invariant kernels
$k_{ij}(\omega\cdot\omega')$, $1\le i,j\le N$, and that $\Theta$ acts only on the
shell index.
Then for each $\ell\ge0$ there is an $N\times N$ matrix $m_\ell(z)$ such that
$$
 m(z)\big|_{\mathcal H_\ell^{\oplus N}}
 =
 m_\ell(z)\otimes I_{\mathcal H_\ell}.
$$
In particular, $K_N(z)$ fails to be invertible if and only if
$$
 \det\bigl(I_N + m_\ell(z)\Theta\bigr)=0
$$
for some $\ell\ge0$, and the corresponding eigenspace has dimension
$(2\ell+1)\,\dim\ker\bigl(I_N+m_\ell(z)\Theta\bigr)$.
\end{lem}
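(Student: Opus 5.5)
The argument rests on the Funk--Hecke theorem. Fix $i,j$ and write $(m_{ij}(z)\varphi)(\omega)=\int_{S^2}k_{ij}(\omega\cdot\omega')\varphi(\omega')\,d\omega'$. For $i\ne j$ the function $k_{ij}$ is continuous on $[-1,1]$. For $i=j$ it is integrable against $dt$ on $[-1,1]$, because $|R_i\omega-R_i\omega'|^{-1}=R_i^{-1}(2-2t)^{-1/2}$ with $t=\omega\cdot\omega'$.

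Funk--Hecke then gives, for every $Y\in\mathcal H_\ell$,
\[
m_{ij}(z)Y=\lambda^{(ij)}_\ell(z)\,Y,\qquad
\lambda^{(ij)}_\ell(z)=2\pi\int_{-1}^{1}k_{ij}(t)P_\ell(t)\,dt .
\]
I would first prove this for continuous kernels. For the singular diagonal case I would pass to the limit using truncated kernels $k_{ii}\mathbf 1_{\{t<1-\epsilon\}}$, which converge in $L^1([-1,1])$. By the Schur bound already used in Lemma~\ref{lem:maps}(b), the corresponding operators then converge in operator norm.

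With this in hand, set $m_\ell(z):=(\lambda^{(ij)}_\ell(z))_{i,j=1}^N$. Then $m(z)$ leaves each $\mathcal H_\ell^{\oplus N}$ invariant and acts there as $m_\ell(z)\otimes I_{\mathcal H_\ell}$. Since $\Theta$ acts only on the shell index, it restricts to $\Theta_\ell=\mathrm{diag}(R_j^2\alpha_j)\otimes I_{\mathcal H_\ell}$. (I write $\Theta$ for the $N\times N$ matrix, as in the statement.) Consequently
\[
K_N(z)=\bigoplus_{\ell\ge0}\bigl(I_N+m_\ell(z)\Theta\bigr)\otimes I_{\mathcal H_\ell},
\]
which is an orthogonal direct sum of finite-dimensional blocks.

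The main point is the equivalence ``$K_N(z)$ not invertible $\iff$ some block is singular.'' The backward direction is immediate, since a kernel vector of a block is a kernel vector of $K_N(z)$. For the forward direction the infinite sum needs care, because the inverses of the blocks might not be uniformly bounded. I would use compactness from Lemma~\ref{lem:compact-m}: $K_N(z)$ is Fredholm of index zero, so it fails to be invertible only if it has a nontrivial kernel. Take $0\ne g\in\ker K_N(z)$. Because each block subspace is invariant, its orthogonal projection $P_\ell$ commutes with $K_N(z)$, so every $P_\ell g$ lies in the kernel. At least one $P_\ell g$ is nonzero, and in that $\ell$ we get $\det(I_N+m_\ell(z)\Theta)=0$.

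The same decomposition gives $\ker K_N(z)=\bigoplus_\ell \ker(I_N+m_\ell(z)\Theta)\otimes\mathcal H_\ell$. Hence the $\ell$-th contribution to the kernel has dimension $(2\ell+1)\dim\ker(I_N+m_\ell(z)\Theta)$, and by Proposition~\ref{prop:eig-cond} this equals the dimension of the corresponding eigenspace of $H_N$.

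The main obstacle is justifying Funk--Hecke for the weakly singular diagonal kernel. The $L^1$ approximation above handles it.
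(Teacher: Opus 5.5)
Your proposal is correct and follows essentially the same route as the paper. Funk--Hecke is the packaged form of the Legendre expansion plus addition theorem that the paper uses to diagonalize $m(z)$ on each $\mathcal H_\ell^{\oplus N}$; after that, both arguments read off the block form $\bigl(I_N+m_\ell(z)\Theta\bigr)\otimes I_{\mathcal H_\ell}$ of $K_N(z)$ and its kernel. Your extra care, namely the $L^1$/Schur approximation for the weakly singular diagonal kernel and the Fredholm-index-zero argument via Lemma~\ref{lem:compact-m} showing that non-invertibility of the infinite orthogonal sum forces some block to be singular, supplies a justification that the paper's proof of this lemma passes over (the paper uses that same Fredholm argument later, in Lemma~\ref{lem:eq-secular}).
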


\begin{proof}
The spherical harmonic addition theorem (see, e.g., \cite{AbramowitzStegun})
states that
$$
 P_\ell(\omega\cdot\omega')
 =
 \frac{4\pi}{2\ell+1}
 \sum_{m=-\ell}^{\ell}
 Y_{\ell m}(\omega)\,\overline{Y_{\ell m}(\omega')},
 \qquad \omega,\omega'\in S^2,
$$
where $P_\ell$ is the Legendre polynomial of degree $\ell$.
If we expand each rotation--invariant kernel $k_{ij}$ in Legendre polynomials
and use the addition theorem, then $m(z)$ is diagonal in the spherical harmonic
basis and does not mix different values of $\ell$ or $m$.
More precisely, $m(z)$ leaves $\mathcal H_\ell^{\oplus N}$ invariant and acts
there as
$$
 m(z)\big|_{\mathcal H_\ell^{\oplus N}}
 =
 m_\ell(z)\otimes I_{\mathcal H_\ell},
$$
where $m_\ell(z)$ acts on the shell index.

Since $\Theta$ acts only on the shell index, it has the form
$\Theta\otimes I_{\mathcal H_\ell}$ on $\mathcal H_\ell^{\oplus N}$, and hence
$$
 K_N(z)\big|_{\mathcal H_\ell^{\oplus N}}
 =
 \bigl(I_N+m_\ell(z)\Theta\bigr)\otimes I_{\mathcal H_\ell}.
$$
Thus $K_N(z)$ is not invertible if and only if $I_N+m_\ell(z)\Theta$ is not
invertible for some $\ell\ge0$, that is, if and only if
$$
 \det\bigl(I_N+m_\ell(z)\Theta\bigr)=0
$$
for some $\ell$.
Moreover,
$$
 \ker\Bigl(\bigl(I_N+m_\ell(z)\Theta\bigr)\otimes I_{\mathcal H_\ell}\Bigr)
 =
 \ker\bigl(I_N+m_\ell(z)\Theta\bigr)\otimes \mathcal H_\ell,
$$
and therefore the stated dimension formula follows.
\end{proof}

\begin{rem}
Equivalently, each $\mathcal H_\ell$ is an irreducible representation of
${\rm SO}(3)$, and every rotation--invariant operator acts as a scalar multiple
of the identity on $\mathcal H_\ell$.
In this language the $N$--shell boundary space $\mathcal H_\ell^{\oplus N}$ is a
direct sum of $N$ equivalent irreducible representations.
Here $\mathrm{Comm}(\cdot)$ denotes the commutant, that is, the algebra of all
bounded operators commuting with the ${\rm SO}(3)$--action.
By Schur's lemma, the commutant becomes isomorphic to $M_N(\mathbb C)$ acting on
the shell index and tensored with the identity on the representation space.
That is,
\[
\mathrm{Comm}\bigl(\mathcal H_\ell^{\oplus N}\bigr)
=
M_N(\mathbb C)\otimes I_{\mathcal H_\ell}.
\]
\end{rem}

\section{Variational preliminaries and higher partial waves}\label{sec:var}

From this section on we specialize to the rotationally symmetric two--shell
setting.
That is, we fix radii $0<R_1<R_2$ and take constant couplings
$\alpha_1,\alpha_2\in\mathbb R$.
We write $H:=H_2$ for the corresponding double $\delta$--shell Hamiltonian
constructed by the quadratic form method in Section~\ref{sec:form}.
The purpose of this section is to show that higher angular momentum channels
cannot produce more negative eigenvalues than the $s$--wave channel, and that
the ground state, if it exists, lies in the sector $\ell=0$.

Since the interaction is radial, $H$ commutes with rotations and admits a
partial--wave decomposition.
We first recall a standard variational comparison lemma and then apply it to
the radial quadratic forms.

\begin{lem}\label{lem:neg-eig-monotone}
Let $\mathcal X$ be a Hilbert space and let $T_0$ and $T$ be self--adjoint
operators in $\mathcal X$ which are bounded from below and have the same
closed form domain $Q$.
Assume that the spectrum of each operator below its essential spectrum is
discrete so that $\{\lambda_n(\cdot)\}_{n\ge1}$ is well-defined.
Denote their quadratic forms by $q_0$ and $q$.
Assume that
\[
q[f]\ge q_0[f],\qquad f\in Q.
\]
Then the eigenvalues of $T$ and $T_0$, counted with multiplicity and listed in
nondecreasing order, satisfy
\[
\lambda_n(T)\ge \lambda_n(T_0)\qquad\text{for all }n\ge1,
\]
and in particular
\[
N_-(T)\le N_-(T_0),
\]
where $N_-(S)$ denotes the number of negative eigenvalues of a self--adjoint
operator $S$.
\end{lem}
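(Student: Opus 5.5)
The plan is to use the min--max principle. Since $T_0$ and $T$ are self--adjoint, bounded from below, and have the same closed form domain $Q$, both admit the variational characterization
\[
\mu_n(S)=\sup_{\varphi_1,\dots,\varphi_{n-1}\in\mathcal X}\ \inf\Bigl\{ s[f]\ :\ f\in Q,\ \|f\|=1,\ f\perp\varphi_1,\dots,\varphi_{n-1}\Bigr\},
\]
where $S\in\{T,T_0\}$ and $s$ is the corresponding form. The standard min--max theorem then says the following. Either $\mu_n(S)$ equals the $n$-th eigenvalue below $\inf\sigma_{\mathrm{ess}}(S)$, counted with multiplicity, or there are fewer than $n$ such eigenvalues, in which case $\mu_n(S)=\inf\sigma_{\mathrm{ess}}(S)$. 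The assumption that the spectrum below the essential spectrum is discrete is exactly what lets us read $\lambda_n(\cdot)$ as these $\mu_n(\cdot)$, with the same convention for $T$ and $T_0$.

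The main step is monotonicity. Fix $\varphi_1,\dots,\varphi_{n-1}$. Because the admissible set of $f$ is the \emph{same} subset of $Q$ for both operators (this is where the common form domain is used), the pointwise inequality $q[f]\ge q_0[f]$ gives
\[
\inf_f q[f]\ge\inf_f q_0[f].
\]
Taking the supremum over the $\varphi$'s yields $\mu_n(T)\ge\mu_n(T_0)$ for every $n$, which is $\lambda_n(T)\ge\lambda_n(T_0)$.

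For the counting statement, I would avoid depending on how the two essential spectra compare. Instead I use the characterization of $N_-(S)$ as the maximal dimension of a subspace $L\subset Q$ on which $s[f]<0$ for all $0\ne f\in L$; this follows from the spectral theorem, or equivalently from $N_-(S)=\#\{n:\mu_n(S)<0\}$ when $\inf\sigma_{\mathrm{ess}}\ge 0$. Now take any subspace $L$ with $q<0$ on $L\setminus\{0\}$. Then $q_0[f]\le q[f]<0$ there as well, so $\dim L\le N_-(T_0)$, and hence $N_-(T)\le N_-(T_0)$. This formulation also covers the case where $N_-(T_0)=\infty$.

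The only delicate point is justifying the min--max characterization in the form-domain version and handling the case $n>$ (number of discrete eigenvalues). I would cite the standard theorem (Reed--Simon IV, Theorem XIII.2) rather than reprove it. In the applications in this paper the essential spectra are $[0,\infty)$ (Theorem~\ref{thm:ac}), so only the eigenvalues below $0$ matter, and the subspace argument above handles $N_-$ cleanly.
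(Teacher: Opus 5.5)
Your monotonicity step is the same as the paper's. Both use the max--min formula with the inner infimum taken over the same admissible set in the common form domain $Q$, and then apply the pointwise inequality $q\ge q_0$. The paper takes the supremum over $(n-1)$--dimensional subspaces $L\subset Q$ and you over vectors $\varphi_1,\dots,\varphi_{n-1}\in\mathcal X$; this makes no difference.

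You diverge on the counting statement. The paper reads it off from the first part: with $N=N_-(T_0)<\infty$ it uses $\lambda_{N+1}(T_0)\ge0$, hence $\lambda_{N+1}(T)\ge0$. You instead characterize $N_-$ (Glazman's lemma) as the maximal dimension of a subspace of $Q$ on which the form is negative definite. Monotonicity is then immediate, because a $q$--negative subspace is also $q_0$--negative. Your route does not need the min--max indexing or a convention for $\lambda_n$ once the discrete eigenvalues are exhausted, and it treats $N_-(T_0)=\infty$ on the same footing. It is also exactly the mechanism the paper later uses in Proposition~\ref{prop:neg-crit-corrected} to show $N_-(h_0)\le 2$. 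The paper's route, in turn, costs nothing beyond the min--max principle already invoked.

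One correction: your remark that this avoids any dependence on the essential spectra is too strong. Identifying the maximal $q_0$--negative dimension with the number of negative eigenvalues of $T_0$ requires $\inf\sigma_{\mathrm{ess}}(T_0)\ge0$. For $T$ you only use the trivial direction, since negative eigenvectors of $T$ span a $q$--negative subspace. The assumption on $T_0$ cannot be dropped. Take $T_0=-d^2/dx^2-1$ and $T=-d^2/dx^2-\chi_{[-1,1]}$ on $L^2(\mathbb R)$, both with form domain $H^1(\mathbb R)$. Then $q\ge q_0$ and $T_0$ has no eigenvalues, yet $T$ has a negative eigenvalue.

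The paper's step ``$\lambda_{N+1}(T_0)\ge0$ by definition of $N_-(T_0)$'' rests on the same implicit assumption; in this example $\lambda_1(T_0)=-1$. Since $\sigma_{\mathrm{ess}}=[0,\infty)$ in every application in the paper, both arguments are valid where the lemma is actually used.
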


\begin{proof}
By the min--max principle (see e.g., \cite{KatoBook})
one has
\[
\lambda_n(T)
=
\sup_{\substack{L\subset Q\\ \dim L = n-1}}
\ \inf_{\substack{f\in Q\\ f\perp L,\,\|f\|=1}} q[f],
\qquad
\lambda_n(T_0)
=
\sup_{\substack{L\subset Q\\ \dim L = n-1}}
\ \inf_{\substack{f\in Q\\ f\perp L,\,\|f\|=1}} q_0[f].
\]
Since $q\ge q_0$ on $Q$, the inner infimum in the min--max formula satisfies
\[
\inf_{\substack{f\in Q\\ f\perp L,\,\|f\|=1}} q[f]
\ge
\inf_{\substack{f\in Q\\ f\perp L,\,\|f\|=1}} q_0[f]
\]
for every subspace $L\subset Q$ with $\dim L=n-1$.
Taking the supremum over all such $L$ gives $\lambda_n(T)\ge\lambda_n(T_0)$.

To compare the numbers of negative eigenvalues, let $N:=N_-(T_0)$.
If $N=\infty$, then the claim is trivial. Otherwise $\lambda_{N+1}(T_0)\ge 0$
by definition of $N_-(T_0)$, and hence
\[
\lambda_{N+1}(T)\ge \lambda_{N+1}(T_0)\ge 0.
\]
Therefore $T$ has at most $N$ negative eigenvalues, that is, $N_-(T)\le N_-(T_0)$.
\end{proof}

We recall the standard partial--wave decomposition of $L^2(\mathbb R^3)$ with
respect to the action of the rotation group, under which the Hamiltonian $H$
reduces to a direct sum of one--dimensional radial operators labeled by the
angular momentum $\ell$.

\begin{thm}\label{thm:partial-waves}
Fix radii $0<R_1<R_2$ and couplings $\alpha_1,\alpha_2\in\mathbb R$, and let $H$
be the double $\delta$--shell Hamiltonian constructed in Section~\ref{sec:form}.
Let $h_\ell$ denote the radial operator in the $\ell$--th partial wave.
Then for every $\ell\ge1$,
$$
\lambda_k(h_\ell)\ge \lambda_k(h_0)\qquad\text{for all }k\ge1,
$$
and consequently
$$
N_-(h_\ell)\le N_-(h_0).
$$
\end{thm}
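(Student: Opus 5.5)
The plan is to apply Lemma~\ref{lem:neg-eig-monotone} directly to the radial operators $T=h_\ell$ and $T_0=h_0$, all realized on the same Hilbert space $\mathcal X=L^2(0,\infty)$.

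First we identify the radial quadratic forms. Write $\psi(x)=r^{-1}u(r)Y_{\ell m}(\omega)$ and restrict $h_N$ (with $N=2$) to the $\ell$--th partial wave. For $u\in H^1_0(0,\infty)$, the form of $h_\ell$ is then
\[
q_\ell[u]=\int_0^\infty\Bigl(|u'(r)|^2+\frac{\ell(\ell+1)}{r^2}|u(r)|^2\Bigr)dr+\alpha_1|u(R_1)|^2+\alpha_2|u(R_2)|^2 .
\]
The surface term gives $\alpha_jR_j^2|u(R_j)/R_j|^2=\alpha_j|u(R_j)|^2$, since $Y_{\ell m}$ is normalized. The Dirichlet term follows from $|\nabla\psi|^2$ and integration over $S^2$, using $\int|\nabla_\omega Y_{\ell m}|^2\,d\omega=\ell(\ell+1)$ and the standard identity $\int_0^\infty r^2|(u/r)'|^2dr=\int_0^\infty|u'|^2dr$ for $u\in C_c^\infty(0,\infty)$. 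The surface terms are bounded relative to $\int|u'|^2$ with bound zero, by the one--dimensional Sobolev inequality $|u(R)|^2\le\varepsilon\|u'\|^2+C_\varepsilon\|u\|^2$. Hence each $q_\ell$ is closed and lower semibounded, and $h_\ell$ is the associated self--adjoint operator.

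Second, and this is the step I expect to require the most care, we must check the hypothesis of Lemma~\ref{lem:neg-eig-monotone} that the forms share a common domain $Q$. For $\ell=0$ the domain is $H^1_0(0,\infty)$, with the Dirichlet condition at $0$ coming from the regularity of $\psi$. For $\ell\ge1$ the Hardy inequality $\int_0^\infty|u|^2/(4r^2)\,dr\le\int_0^\infty|u'|^2dr$ on $H^1_0(0,\infty)$ shows that $\int \ell(\ell+1)|u|^2/r^2\,dr$ is finite and controlled by the $H^1$ norm. So the form domain is again $H^1_0(0,\infty)$: the centrifugal term is form--bounded, though not with small bound, and the form closure coincides because $C_c^\infty(0,\infty)$ is a core for both forms. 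This gives $Q=H^1_0(0,\infty)$ for every $\ell$. On $Q$ we then have
\[
q_\ell[u]-q_0[u]=\ell(\ell+1)\int_0^\infty\frac{|u|^2}{r^2}\,dr\ge0 .
\]

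Third, the spectral hypothesis of Lemma~\ref{lem:neg-eig-monotone} follows from Theorem~\ref{thm:ac}, transferred channelwise: $H=\bigoplus_\ell h_\ell\otimes I_{\mathcal H_\ell}$. Each $h_\ell$ therefore has essential spectrum contained in $[0,\infty)$, and its spectrum below $0$ consists of discrete eigenvalues. Alternatively, one can argue directly that a finite--rank (rank two) resolvent perturbation of the free radial operator leaves the essential spectrum $[0,\infty)$ unchanged. In either case $\lambda_k(\cdot)$ is well defined via min--max, with the convention that $\lambda_k$ equals the bottom of the essential spectrum when there are fewer than $k$ eigenvalues below it.

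Applying Lemma~\ref{lem:neg-eig-monotone} with $T=h_\ell$ and $T_0=h_0$ then gives $\lambda_k(h_\ell)\ge\lambda_k(h_0)$ for all $k\ge1$, and hence $N_-(h_\ell)\le N_-(h_0)$.
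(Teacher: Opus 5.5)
Your proposal is correct and follows the paper's proof. Both write $q_\ell[f]=q_0[f]+\ell(\ell+1)\int_0^\infty |f(r)|^2r^{-2}\,dr\ge q_0[f]$ on the common form domain $H^1_0(0,\infty)$ (the paper's ``$H^1(0,\infty)$ with $f(0)=0$''), then invoke Lemma~\ref{lem:neg-eig-monotone}, and your Hardy-inequality justification of the common domain and your check of the discreteness hypothesis via Theorem~\ref{thm:ac} (or the rank-two resolvent perturbation) supply details the paper only asserts.
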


\begin{proof}
We recall the standard partial--wave decomposition (see, for example,
\cite[Sec.~II.2]{AGHH}).
Let $\{Y_{\ell m}\}_{\ell\ge0,\,-\ell\le m\le\ell}$ be an orthonormal basis of
spherical harmonics in $L^2(S^2)$.
Then $L^2(\mathbb R^3)$ decomposes as the orthogonal sum
$$
L^2(\mathbb R^3)
=
\bigoplus_{\ell=0}^\infty
\bigoplus_{m=-\ell}^{\ell}
\mathcal H_{\ell m},
\qquad
\mathcal H_{\ell m}
=
\Bigl\{u(x)=\frac{f(r)}{r}Y_{\ell m}(\omega): f\in L^2(0,\infty)\Bigr\},
$$
with $x=r\omega$, $r=|x|$, $\omega\in S^2$.
Since the interaction is radial, $H$ commutes with rotations and leaves each
$\mathcal H_{\ell m}$ invariant.
On $\mathcal H_{\ell m}$ the operator $H$ is unitarily equivalent to a
one--dimensional radial operator $h_\ell$ acting in $L^2(0,\infty)$, whose
quadratic form is
\begin{eqnarray*}
q_\ell[f]
&=&
\int_0^\infty\Bigl(|f'(r)|^2+\frac{\ell(\ell+1)}{r^2}|f(r)|^2\Bigr)\,dr
+\alpha_1|f(R_1)|^2+\alpha_2|f(R_2)|^2,\\
&& \qquad f\in H^1(0,\infty)\ \text{with } f(0)=0.
\end{eqnarray*}
Here the point values $f(R_j)$ are well defined for $f\in H^1(0,\infty)$.
In particular, for each $\ell$ the negative eigenvalues of $H$ in the
$\ell$--th partial wave coincide, with multiplicity, with those of $h_\ell$.
Moreover, $h_\ell$ does not depend on $m$, so each eigenvalue of $h_\ell$
appears in $H$ with multiplicity $2\ell+1$.

For $\ell\ge1$ the centrifugal term is nonnegative, and therefore
$$
q_\ell[f]
=
q_0[f]+\int_0^\infty\frac{\ell(\ell+1)}{r^2}|f(r)|^2\,dr
\ge
q_0[f],\qquad f\in H^1(0,\infty).
$$
Since $q_\ell\ge q_0$ on the common form domain $H^1(0,\infty)$ with $f(0)=0$,
Lemma~\ref{lem:neg-eig-monotone} implies
$$
\lambda_k(h_\ell)\ge \lambda_k(h_0)\qquad\text{for all }k\ge1,
$$
and hence $N_-(h_\ell)\le N_-(h_0)$.
\end{proof}

\begin{rem}\label{rem:two-bound-states-reduction}
Theorem~\ref{thm:partial-waves} reduces the control of negative eigenvalues in
all partial waves to the $s$--wave channel $\ell=0$.
Once the $s$--wave problem is analyzed and $N_-(h_0)$ is bounded, the same
bound automatically holds for all $\ell\ge1$.
\end{rem}

As an immediate consequence, any negative eigenvalue of lowest energy must
occur in the $\ell=0$ channel.

\begin{cor}\label{cor:groundstate-swave}
Under the assumptions of Theorem~\ref{thm:partial-waves}, if $H$ has at least
one negative eigenvalue (cf.\ Theorem~\ref{thm:elem-res} and the trace class
property of $(H-z)^{-1}-R_0(z)$), then the lowest eigenvalue belongs to the
$s$--wave sector $\ell=0$.
\end{cor}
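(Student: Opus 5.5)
The plan is to read the corollary off the partial--wave decomposition in the proof of Theorem~\ref{thm:partial-waves}, together with the comparison $\lambda_1(h_\ell)\ge\lambda_1(h_0)$. First, since $H$ is unitarily equivalent to $\bigoplus_{\ell\ge0}\bigoplus_{m=-\ell}^{\ell} h_\ell$, the negative spectrum of $H$ is the union over $\ell$ of the negative spectra of the $h_\ell$. By Theorem~\ref{thm:ac}, the negative spectrum of $H$ is discrete and can accumulate only at $0$. So if $H$ has some negative eigenvalue, the set of negative eigenvalues of $H$ has a minimum $E_0:=\inf\sigma(H)<0$. The infimum is attained as an eigenvalue because $\sigma_{\mathrm{ess}}(H)=[0,\infty)$ and $H$ is semibounded.

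Second, $E_0$ is an eigenvalue of $H$, so by the direct-sum decomposition it is an eigenvalue of $h_{\ell_*}$ for some $\ell_*$. Hence
\[
E_0=\inf_{\ell\ge0}\inf\sigma(h_\ell)=\inf_{\ell}\lambda_1(h_\ell).
\]
For every $\ell$, the essential spectrum of $h_\ell$ is contained in $\sigma_{\mathrm{ess}}(H)=[0,\infty)$, since $h_\ell$ is a reducing part of $H$. Therefore $\lambda_1(h_\ell)$, whenever it is negative, is a genuine eigenvalue, and Lemma~\ref{lem:neg-eig-monotone} applies in the form used in Theorem~\ref{thm:partial-waves}. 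That theorem gives $\lambda_1(h_\ell)\ge\lambda_1(h_0)$ for every $\ell\ge1$, so $E_0=\lambda_1(h_0)$ and $h_0$ has $E_0$ as an eigenvalue. In other words, the $s$--wave sector $\mathcal H_{00}$ contains an eigenfunction for the lowest eigenvalue.

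The main obstacle is the literal claim that the lowest eigenvalue \emph{belongs to} the $s$--wave sector. Showing that it occurs there is done above; if the claim is read as saying that the ground state lies only in $\ell=0$, a strict inequality $\lambda_1(h_\ell)>\lambda_1(h_0)$ for $\ell\ge1$ is needed. To get strictness, let $f$ be a normalized minimizer for $q_\ell$ with $q_\ell[f]=\lambda_1(h_\ell)<0$. Then
\[
\lambda_1(h_0)\le q_0[f]=q_\ell[f]-\ell(\ell+1)\int_0^\infty r^{-2}|f|^2\,dr<\lambda_1(h_\ell),
\]
because $f\not\equiv0$ makes the centrifugal integral strictly positive. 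If instead $\lambda_1(h_\ell)\ge0$, then $\lambda_1(h_\ell)>E_0$ trivially. With strictness in hand, $\ker(H-E_0)\subset\mathcal H_{00}$, and the ground state is nondegenerate, since it is a simple eigenvalue of the Sturm--Liouville operator $h_0$, which has $f(0)=0$.
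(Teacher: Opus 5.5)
Your proposal is correct and follows the paper's route. The paper's proof uses only the min--max comparison $q_\ell\ge q_0$, which gives $\lambda_1(h_\ell)\ge\lambda_1(h_0)$, and reads off from this that the smallest negative eigenvalue is attained for $\ell=0$. Your additional steps go beyond the paper: attainment of $\inf\sigma(H)$ via $\sigma_{\mathrm{ess}}(H)=[0,\infty)$, the inclusion $\sigma_{\mathrm{ess}}(h_\ell)\subset[0,\infty)$, the strict inequality $\lambda_1(h_\ell)>\lambda_1(h_0)$ from the positivity of $\ell(\ell+1)\int_0^\infty r^{-2}|f|^2\,dr$ at an eigenfunction, and simplicity of the eigenvalues of $h_0$. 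The strict inequality is exactly what the paper's closing claim, that the ground state lies in $\ell=0$ alone, requires; the non-strict inequality by itself does not give it.
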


\begin{proof}
Let $h_\ell$ be the radial operator in the $\ell$--th partial wave.
For each $\ell\ge1$ we have $q_\ell\ge q_0$ on $H^1(0,\infty)$ with $f(0)=0$,
hence the min--max principle gives
$$
\lambda_1(h_\ell)\ge \lambda_1(h_0).
$$
Therefore the smallest negative eigenvalue among all partial waves is attained
in the $s$--wave sector $\ell=0$, and the ground state of $H$ belongs to
$\ell=0$.
\end{proof}

\section{The two--shell case: detailed analysis of the $s$--wave ($\ell=0$)}\label{sec:Eigenval}

In the two--shell case $N=2$, the resolvent framework of
Section~\ref{sec:elem-resolvent} yields the spectral condition that the boundary
operator $K(z)=I+m(z)\Theta$ fails to be invertible.
On the other hand, the same eigenvalue problem admits a direct reduction to a
one--dimensional radial ODE, leading to the usual coefficient matching
conditions in each partial wave.
The next lemma connects these two viewpoints: it reduces the operator condition
to a finite--dimensional secular equation in every angular momentum channel and
shows that this secular equation is equivalent to the radial matching
condition.

\begin{lem}\label{lem:eq-secular}
Let $m(z)=(m_{ij}(z))_{i,j=1,2}$ be the boundary integral operator on
$L^2(S^2)\oplus L^2(S^2)$ with kernel $G_z(R_i\omega,R_j\omega')$, and let
$\Theta=\mathrm{diag}(\alpha_1R_1^2,\alpha_2R_2^2)$.
Assume $\Im\sqrt{z}>0$, that is, $z\in\mathbb C\setminus[0,\infty)$ on the
principal branch.
Then the following statements are equivalent.

{\rm (i)} $z$ is an eigenvalue of $H$. Equivalently, $K(z)=I+m(z)\Theta$ is not
invertible; cf.\ Proposition~\ref{prop:eig-cond}.

{\rm (ii)} There exists $\ell\in\mathbb N\cup\{0\}$ such that
$$
\det\bigl(I+m_\ell(z)\Theta\bigr)=0,
$$
where $m_\ell(z)$ is the $2\times2$ matrix describing the action of $m(z)$ on
$\mathcal H_\ell\oplus\mathcal H_\ell$, that is,
$$
I+m_\ell(z)\Theta=
\left(
\begin{array}{cc}
1+\alpha_1R_1^2m^{(11)}_\ell(z) & \alpha_2R_2^2 m^{(12)}_\ell(z)\\[2pt]
\alpha_1R_1^2 m^{(21)}_\ell(z) & 1+\alpha_2R_2^2m^{(22)}_\ell(z)
\end{array}
\right).
$$
Moreover, for each fixed $\ell$, the condition
$\det\bigl(I+m_\ell(z)\Theta\bigr)=0$
is equivalent to the eigenvalue condition obtained from the corresponding
radial problem in the $\ell$--channel.
\end{lem}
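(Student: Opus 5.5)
The equivalence (i)$\Leftrightarrow$(ii) follows from results already proved, so the main work is the final claim connecting the secular equation to the radial matching problem. First, Proposition~\ref{prop:eig-cond} shows that $z$ is an eigenvalue of $H$ exactly when $K(z)=I+m(z)\Theta$ is not invertible. Next, since $\alpha_1,\alpha_2$ are constants, the kernels $G_z(R_i\omega,R_j\omega')$ depend only on $\omega\cdot\omega'$, because $|R_i\omega-R_j\omega'|^2=R_i^2+R_j^2-2R_iR_j\,\omega\cdot\omega'$. Also, $\Theta$ acts only on the shell index. Lemma~\ref{lem:block-reduction} therefore applies and gives $K(z)|_{\mathcal H_\ell^{\oplus2}}=(I+m_\ell(z)\Theta)\otimes I_{\mathcal H_\ell}$. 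Hence $K(z)$ is non-invertible if and only if $\det(I+m_\ell(z)\Theta)=0$ for some $\ell$.

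To make $m_\ell$ explicit, I would use the standard partial-wave expansion of the free Green function,
\[
G_z(x,y)=ik\sum_{\ell\ge0} j_\ell(kr_<)h^{(1)}_\ell(kr_>)\sum_{m}Y_{\ell m}(\hat x)\overline{Y_{\ell m}(\hat y)},
\]
where $k=\sqrt z$. This gives
\[
m^{(ij)}_\ell(z)=ik\,j_\ell(kR_{\min(i,j)})\,h^{(1)}_\ell(kR_{\max(i,j)}),
\]
which is the formula recorded in Appendix~B.

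For the last assertion, I would argue via kernels rather than by expanding both determinants. Set $g_\ell(r,s):=ik\,rs\,j_\ell(kr_<)h^{(1)}_\ell(kr_>)$. The key step is to verify that $g_\ell$ is the Green function of $-d^2/dr^2+\ell(\ell+1)/r^2-z$ on $(0,\infty)$. It is regular at $0$ (it behaves like $r^{\ell+1}$), it is $L^2$ at infinity because $\Im k>0$, and its derivative jump at $r=s$ equals $-1$. The jump follows from the Wronskian $W(j_\ell,h^{(1)}_\ell)(x)=i/x^2$. This Wronskian normalization is the one computation that needs care; everything else is bookkeeping.

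With this in hand, I would compare the two kernels directly.

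\emph{Forward direction.} Suppose $U={}^t(U_1,U_2)$ lies in the kernel of $I+m_\ell\Theta$. Define $g=U$ and
\[
f(r):=-\sum_j \alpha_j R_j\,g_\ell(r,R_j)\,U_j\,R_j^{-1}\cdot R_j.
\]
The powers of $R_j$ are to be matched so that $f(R_i)=R_iU_i$, using $R_j^2 m^{(ij)}_\ell=R_iR_j\cdot ik\,j_\ell h^{(1)}_\ell\cdot R_j/R_i$; this routine bookkeeping is where the factor $R_j^2$ in $\Theta$ is absorbed. Then $f$ is continuous, solves the radial equation off the shells, is regular and $L^2$, and satisfies $f'(R_j+0)-f'(R_j-0)=\alpha_j f(R_j)$. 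So $f$ is a radial eigenfunction.

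\emph{Converse.} Given a radial eigenfunction $f$, subtract $-\sum_j\alpha_j f(R_j)\,g_\ell(\cdot,R_j)$. The difference is an $L^2$ solution with no jumps, regular at $0$ and decaying at infinity. Since $z\notin[0,\infty)$, the free radial operator has no eigenvalue at $z$, so this difference vanishes, exactly as in the proof of Theorem~\ref{thm:zero-energy}(ii). Evaluating at $R_i$ then produces a nonzero kernel vector of $I+m_\ell\Theta$, because $f\not\equiv0$ forces the shell values to be nonzero.

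This gives an isomorphism between the kernel of $I+m_\ell(z)\Theta$ and the radial eigenspace, and in particular the equivalence of the two determinant conditions. The main obstacle I expect is purely keeping the normalizations consistent: the factors $R_j^2$ in $\Theta$, the $1/r$ in $f(r)/r$, and the $ik$ prefactor.
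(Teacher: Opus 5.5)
Your proposal is correct. For (i)$\Leftrightarrow$(ii) it follows the paper's argument: Proposition~\ref{prop:eig-cond} combined with the rotation-invariant block reduction. One point should be made explicit. For an infinite orthogonal sum, invertibility of every block $I+m_\ell(z)\Theta$ does not by itself give invertibility of $K(z)$; a uniform bound on the block inverses is needed. The paper closes this by noting that the blocks give $\ker K(z)=\{0\}$, and that $K(z)$ is Fredholm of index zero because $m(z)$ is compact (Lemma~\ref{lem:compact-m}). If you cite Lemma~\ref{lem:block-reduction}, you should rely on that argument.

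For the final assertion your route differs from the paper's. The paper stays in three dimensions. It sets $u=-\Gamma(z)\Theta g$ and appeals to the single-layer jump relations, the trace identity $\tau_i\Gamma_j(z)=m_{ij}(z)$, and Lemma~\ref{lem:ml-formula} to assert that the matching system is exactly $(I+m_\ell(z)\Theta)g=0$. The converse ("every radial solution produces a nontrivial $g$") is only stated there.

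You work entirely with the radial Green function $g_\ell$, fixing its normalization through $W(j_\ell,h^{(1)}_\ell)(x)=i/x^2$. The single identity $g_\ell(R_i,R_j)=R_iR_j\,m^{(ij)}_\ell(z)$ does all the bookkeeping you were worried about:
\begin{itemize}
\item With $f(r)=-\sum_j\alpha_jR_jU_j\,g_\ell(r,R_j)$, one gets $f(R_i)=R_iU_i$ and derivative jumps $\alpha_jf(R_j)$. The factor $R_j^{-1}\cdot R_j$ in your display is superfluous.
\item In the converse, the kernel vector is $U_i=f(R_i)/R_i$.
\end{itemize}
This $f$ is exactly $r$ times the radial profile of the paper's $u=-\Gamma(z)\Theta g$ with $g=U\,Y_{\ell m}$, so both constructions produce the same eigenfunction.

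What your version buys is a converse that is actually proved, via the same uniqueness argument as in Theorem~\ref{thm:zero-energy}(ii). It also gives an explicit isomorphism between $\ker(I+m_\ell(z)\Theta)$ and the radial eigenspace, using only one-dimensional ODE facts. The paper's version instead ties the kernel vector directly to the three-dimensional eigenfunction of Proposition~\ref{prop:eig-cond}.
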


\begin{proof}
The kernel $G_z(R_i\omega,R_j\omega')$ depends only on $\omega\cdot\omega'$ and is
therefore rotation invariant.
Hence $m(z)$ commutes with the natural action of ${\rm SO}(3)$ on
$L^2(S^2)\oplus L^2(S^2)$.
Consequently, each spherical harmonic subspace
$$
\mathcal H_\ell:=\mathrm{span}\{Y_{\ell m}:\ -\ell\le m\le\ell\}
$$
is invariant, and $m(z)$ acts on $\mathcal H_\ell\oplus\mathcal H_\ell$ as a
$2\times2$ matrix $m_\ell(z)$ on the shell index.
Equivalently, $m(z)$ is block diagonal with respect to
$$
L^2(S^2)\oplus L^2(S^2)
=
\bigoplus_{\ell=0}^{\infty}\bigoplus_{m=-\ell}^{\ell}
\bigl(\mathbb C\,Y_{\ell m}\oplus \mathbb C\,Y_{\ell m}\bigr),
$$
and on each block it acts as
$$
m_\ell(z)=
\left(
\begin{array}{cc}
m^{(11)}_\ell(z) & m^{(12)}_\ell(z)\\[2pt]
m^{(21)}_\ell(z) & m^{(22)}_\ell(z)
\end{array}
\right).
$$

Therefore $K(z)=I+m(z)\Theta$ is unitarily equivalent to the orthogonal direct sum of
the matrices $I+m_\ell(z)\Theta$, where each block appears with multiplicity
$2\ell+1$.
Since $m(z)$ is compact by Lemma~\ref{lem:compact-m}, the operator $K(z)$ is
Fredholm of index zero.

If $\det(I+m_\ell(z)\Theta)\ne 0$ for every $\ell$, then each block
$I+m_\ell(z)\Theta$ is invertible and hence $\ker K(z)=\{0\}$.
For a Fredholm operator of index zero, injectivity implies surjectivity, so
$K(z)$ is invertible.
Conversely, if $\det(I+m_\ell(z)\Theta)=0$ for some $\ell$, then the
corresponding block has a nontrivial kernel, and so does $K(z)$.
Thus $K(z)$ is not invertible if and only if there exists $\ell$ such that
$\det(I+m_\ell(z)\Theta)=0$, which proves the equivalence of (i) and (ii).

We now relate the condition $\det(I+m_\ell(z)\Theta)=0$ to the eigenvalue
condition arising from the coefficient matching in the radial problem.
Let $g={}^{t}(g_1,g_2)\ne0$ satisfy $(I+m_\ell(z)\Theta)g=0$ and define
$$
u = -\sum_{j=1}^{2}\Gamma_j(z)\,(\Theta g)_j,
\qquad
(\Theta g)_j=\alpha_j R_j^2 g_j .
$$
Then $u$ solves $(-\Delta-z)u=0$ away from $S_1\cup S_2$.
Since $g_j\in\mathcal H_\ell$, the angular dependence of $u$ lies in
$\mathcal H_\ell$ and is proportional to $Y_{\ell m}$ for some $m$.
We represent $u$ in the form
\[
u(x)=\frac{u_\ell(r)}{r}\,Y_{\ell m}(\omega),\qquad x=r\omega.
\]
The mapping properties of $\Gamma_j(z)$ then imply that on each interval
$(0,R_1)$, $(R_1,R_2)$, and $(R_2,\infty)$ the function $u_\ell$ is a linear
combination of the standard radial solutions of $(-\Delta-z)u=0$, expressed in
terms of spherical Bessel and Hankel functions.

Taking traces on $S_1$ and $S_2$ and using the $\delta$--shell jump laws,
we recall that $u$ is continuous at $r=R_j$ and satisfies
\begin{equation*}
\partial_r u(R_j+0,\omega)-\partial_r u(R_j-0,\omega)
=
\alpha_ju(R_j,\omega),
\qquad j=1,2.
\end{equation*}
Moreover, by the trace identity $\tau_i\Gamma_j(z)=m_{ij}(z)$ and the standard
jump relations for single--layer potentials (cf. Lemma~\ref{lem:compact-m} and
the proof of Proposition~\ref{prop:eig-cond}),
the resulting matching conditions yield a homogeneous linear system for the
$\ell$--channel boundary data.
By the definition of $m_\ell(z)$ (see Appendix~\ref{app:ml}, in particular
Lemma~\ref{lem:ml-formula}), this boundary system is exactly
$(I+m_\ell(z)\Theta)g=0$.
Thus a nontrivial vector $g$ in $\ker(I+m_\ell(z)\Theta)$ produces a nontrivial
radial solution satisfying the matching conditions, and conversely every such
radial solution produces a nontrivial $g$.
Hence $\det(I+m_\ell(z)\Theta)=0$ is equivalent to the radial matching secular
equation in the $\ell$--channel.
\end{proof}

We now specialize to the $s$--wave sector $\ell=0$ and derive an explicit
eigenvalue condition by solving the corresponding radial problem.
\begin{prop}\label{prop:s-wave-secular}
Let $0<R_1<R_2$, $\alpha_1,\alpha_2\in\mathbb R$, and set $d=R_2-R_1$.
In the $s$--wave sector ($\ell=0$) we write $u(r)=r f(r)$.
Then $u$ satisfies
\begin{equation}\label{1D}
-\,u''(r)=E\,u(r)\quad \text{for } r\ne R_1,R_2,\qquad
u(0)=0,\qquad u\in L^2(0,\infty),
\end{equation}
together with the interface conditions
\begin{equation}\label{u-jump}
u \ \text{is continuous at } R_j,\qquad
u'(R_j+0)-u'(R_j-0)=\alpha_j\,u(R_j),\quad j=1,2.
\end{equation}
For $E=-\kappa^2<0$ the $s$--wave eigenvalues are precisely those numbers
$E=-\kappa^2$ with $\kappa>0$ which satisfy
\begin{equation}\label{eq:sec}
F_d(\kappa):=\Bigl[\gamma_1(\kappa)+(\alpha_2+\kappa)\Bigr]\cosh(\kappa d)
+\Bigl[\kappa+\gamma_1(\kappa)+\frac{\alpha_2\,\gamma_1(\kappa)}{\kappa}\Bigr]\sinh(\kappa d)=0,
\end{equation}
where
\begin{equation}\label{gamma1}
\gamma_1(\kappa)=\alpha_1+\kappa\,\coth(\kappa R_1).
\end{equation}
Equivalently, nontrivial solutions exist if and only if $\det M(\kappa;d)=0$,
where
\begin{equation}\label{eq:matrix}
M(\kappa;d)=
\left(
\begin{array}{cc}
-\gamma_1(\kappa) & \kappa\\[4pt]
(\alpha_2+\kappa)\cosh(\kappa d)+\kappa\sinh(\kappa d) &
(\alpha_2+\kappa)\sinh(\kappa d)+\kappa\cosh(\kappa d)
\end{array}
\right).
\end{equation}
\end{prop}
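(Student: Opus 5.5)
The plan is to solve the radial boundary value problem \eqref{1D}--\eqref{u-jump} explicitly on the three intervals and reduce the matching conditions to a $2\times 2$ homogeneous linear system.

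First, the reduction itself. By the partial--wave decomposition in the proof of Theorem~\ref{thm:partial-waves}, the $s$--wave part of $H$ is unitarily equivalent to $h_0$, with form $q_0$ on $\{f\in H^1(0,\infty):f(0)=0\}$. Integrating $q_0$ by parts on each interval between the shells, exactly as in the domain characterization \eqref{eq:dom-HN}, shows the following. A function $u$ is an eigenfunction of $h_0$ with eigenvalue $E$ if and only if $-u''=Eu$ away from $R_1,R_2$, $u(0)=0$, $u\in L^2(0,\infty)$, $u$ is continuous at $R_j$, and the jump condition $u'(R_j+0)-u'(R_j-0)=\alpha_j u(R_j)$ holds. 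This gives \eqref{1D}--\eqref{u-jump}. Here $u=rf$ is the reduced radial function, normalized as $u/r\cdot Y_{00}$ in Theorem~\ref{thm:partial-waves}.

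Next, fix $E=-\kappa^2$ with $\kappa>0$. On $(0,R_1)$ the condition $u(0)=0$ forces $u=A\sinh(\kappa r)$. On $(R_1,R_2)$ I write
\[
u=B\cosh(\kappa(r-R_1))+C\sinh(\kappa(r-R_1)).
\]
On $(R_2,\infty)$ square integrability excludes $e^{\kappa r}$, so $u=De^{-\kappa(r-R_2)}$.

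At $R_1$, continuity gives $B=A\sinh(\kappa R_1)$. The jump condition gives
\[
\kappa C-\kappa A\cosh(\kappa R_1)=\alpha_1 A\sinh(\kappa R_1),
\]
and hence $\kappa C=\gamma_1(\kappa)B$ with $\gamma_1$ as in \eqref{gamma1}. This is the first row of $M(\kappa;d)$ applied to ${}^t(B,C)$.

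At $R_2$, continuity gives
\[
D=B\cosh(\kappa d)+C\sinh(\kappa d).
\]
The jump condition gives
\[
-\kappa D-\kappa\bigl(B\sinh(\kappa d)+C\cosh(\kappa d)\bigr)=\alpha_2 D.
\]
Eliminating $D$ between these two yields
\[
B\bigl[(\alpha_2+\kappa)\cosh(\kappa d)+\kappa\sinh(\kappa d)\bigr]+C\bigl[(\alpha_2+\kappa)\sinh(\kappa d)+\kappa\cosh(\kappa d)\bigr]=0,
\]
which is the second row of $M(\kappa;d)$.

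For nontriviality, note that $(A,B,C,D)\mapsto(B,C)$ is injective on solutions. Indeed, $A=B/\sinh(\kappa R_1)$ since $\sinh(\kappa R_1)\neq0$, and $D$ is determined by $(B,C)$. So a nonzero eigenfunction exists if and only if $\det M(\kappa;d)=0$.

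Finally, a direct expansion gives
\[
\det M=-\gamma_1\bigl[(\alpha_2+\kappa)\sinh(\kappa d)+\kappa\cosh(\kappa d)\bigr]-\kappa\bigl[(\alpha_2+\kappa)\cosh(\kappa d)+\kappa\sinh(\kappa d)\bigr]=-\kappa F_d(\kappa).
\]
Since $\kappa>0$, this is equivalent to \eqref{eq:sec}. Equivalently, one may substitute $C=\gamma_1B/\kappa$ into the second row and factor out $B\neq0$ (if $B=0$ then $C=0$ and everything vanishes).

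The computations are routine. The only step needing care is the first: the justification that form-domain eigenfunctions of $h_0$ satisfy exactly the pointwise interface conditions, and that $L^2$ integrability is equivalent to dropping the growing exponential. I would handle this by citing the representation theorem and \eqref{eq:dom-HN} rather than redoing it.
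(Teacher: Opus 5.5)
Your proof is correct and follows the paper's route: you solve the ODE piecewise, eliminate the inner and outer coefficients, and reduce to a $2\times2$ homogeneous system for the Cauchy data at $R_1$. Your $(B,C)$ are exactly the paper's $(X(R_1),Y(R_1))$, with the transfer relations \eqref{eq:propX}--\eqref{eq:propY} built into your basis on $(R_1,R_2)$, and you additionally make explicit two things the paper leaves implicit: the identity $\det M(\kappa;d)=-\kappa F_d(\kappa)$ and the injectivity of $(A,B,C,D)\mapsto(B,C)$.
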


\begin{proof}
We take $E=-\kappa^2<0$ with $\kappa>0$.
The general solution of \eqref{1D} in the three regions $0<r<R_1$,
$R_1<r<R_2$, and $r>R_2$ can be written as
\begin{equation}\label{u-parts}
u(r)=
\begin{cases}
A\,\sinh(\kappa r), & 0<r<R_1,\\[1ex]
B\,e^{\kappa r}+C\,e^{-\kappa r}, & R_1<r<R_2,\\[1ex]
D\,e^{-\kappa r}, & r>R_2,
\end{cases}
\end{equation}
with constants $A,B,C,D\in\mathbb C$.
Continuity at $R_1$ and $R_2$ gives
\begin{eqnarray}\label{cont}
A\sinh(\kappa R_1) &=& B e^{\kappa R_1}+C e^{-\kappa R_1},\\
D e^{-\kappa R_2} &=& B e^{\kappa R_2}+C e^{-\kappa R_2}.
\end{eqnarray}
Differentiating \eqref{u-parts} yields
\begin{eqnarray}\label{derivs}
u_1'(R_1) &=& A\kappa\cosh(\kappa R_1),\nonumber\\
u_2'(R_1) &=& \kappa\bigl(B e^{\kappa R_1}-C e^{-\kappa R_1}\bigr),\nonumber\\
u_2'(R_2) &=& \kappa\bigl(B e^{\kappa R_2}-C e^{-\kappa R_2}\bigr),\nonumber\\
u_3'(R_2) &=& -\,\kappa D e^{-\kappa R_2}.
\end{eqnarray}
Using the jump conditions \eqref{u-jump} at $R_1$ and $R_2$ we obtain
\begin{eqnarray}\label{jumps-raw}
\kappa\bigl(B e^{\kappa R_1}-C e^{-\kappa R_1}\bigr)-A\kappa\cosh(\kappa R_1)
&=& \alpha_1\bigl(B e^{\kappa R_1}+C e^{-\kappa R_1}\bigr),\\
-\kappa D e^{-\kappa R_2}-\kappa\bigl(B e^{\kappa R_2}-C e^{-\kappa R_2}\bigr)
&=& \alpha_2\bigl(B e^{\kappa R_2}+C e^{-\kappa R_2}\bigr).
\end{eqnarray}
From \eqref{cont} we eliminate $A$ and $D$ and obtain
\begin{equation}\label{A-D}
A=\frac{B e^{\kappa R_1}+C e^{-\kappa R_1}}{\sinh(\kappa R_1)},\qquad
D e^{-\kappa R_2}=B e^{\kappa R_2}+C e^{-\kappa R_2}.
\end{equation}
Substituting \eqref{A-D} into \eqref{jumps-raw} yields
\begin{eqnarray}\label{two-eqs}
\kappa\bigl(B e^{\kappa R_1}-C e^{-\kappa R_1}\bigr)
&=&\bigl(\alpha_1+\kappa\coth(\kappa R_1)\bigr)\bigl(B e^{\kappa R_1}+C e^{-\kappa R_1}\bigr),\\
\kappa\bigl(B e^{\kappa R_2}-C e^{-\kappa R_2}\bigr)
&=&-\bigl(\alpha_2+\kappa\bigr)\bigl(B e^{\kappa R_2}+C e^{-\kappa R_2}\bigr).
\end{eqnarray}

We now introduce the combinations
\begin{equation}\label{XY-def}
X(r)=B e^{\kappa r}+C e^{-\kappa r},\qquad
Y(r)=B e^{\kappa r}-C e^{-\kappa r}.
\end{equation}
Then \eqref{two-eqs} can be rewritten as
\begin{equation}\label{XY-BC}
\kappa Y(R_1)=\gamma_1(\kappa) X(R_1),\qquad
\kappa Y(R_2)=-(\alpha_2+\kappa)X(R_2),
\end{equation}
where $\gamma_1(\kappa)$ is given by \eqref{gamma1}.
Between $R_1$ and $R_2$ the pair $(X,Y)$ satisfies
\begin{eqnarray}
X(R_2)&=&\cosh(\kappa d)\,X(R_1)+\sinh(\kappa d)\,Y(R_1), \label{eq:propX}\\
Y(R_2)&=&\sinh(\kappa d)\,X(R_1)+\cosh(\kappa d)\,Y(R_1). \label{eq:propY}
\end{eqnarray}
From \eqref{XY-BC} at $R_1$ we have
\begin{equation}\label{Y1}
Y(R_1)=\frac{\gamma_1(\kappa)}{\kappa}\,X(R_1).
\end{equation}
Substituting \eqref{Y1} into \eqref{eq:propX} and \eqref{eq:propY} yields
\begin{eqnarray}\label{prop2}
X(R_2)&=&\Bigl(\cosh(\kappa d)+\frac{\gamma_1(\kappa)}{\kappa}\sinh(\kappa d)\Bigr)X(R_1),\\
Y(R_2)&=&\Bigl(\sinh(\kappa d)+\frac{\gamma_1(\kappa)}{\kappa}\cosh(\kappa d)\Bigr)X(R_1).
\end{eqnarray}
Applying the boundary condition at $R_2$ in \eqref{XY-BC}, we obtain a homogeneous
linear system for $X(R_1)$ and $Y(R_1)$,
$$
M(\kappa;d)
\left(
\begin{array}{c}
X(R_1)\\[2pt]
Y(R_1)
\end{array}
\right)
=0,
$$
with $M(\kappa;d)$ as in \eqref{eq:matrix}.
There exists a nontrivial solution if and only if $\det M(\kappa;d)=0$.
A direct computation of the determinant gives \eqref{eq:sec}, and the statement follows.
\end{proof}

\section{Counting of $s$--wave bound states for large shell separation}\label{sec:counting}

In this section we restrict to the $s$--wave sector $\ell=0$ and study how the
number of negative eigenvalues behaves when the shell separation
$d=R_2-R_1$ becomes large.
When the two $\delta$--shells are far apart, bound states originating from
attractive single shells interact only weakly and the coupling is exponentially
small in $d$.
As a result, the number of negative $s$--wave eigenvalues stabilizes and can be
read off from the corresponding one--shell problems.

\begin{prop}\label{prop:neg-crit-corrected}
Let $H$ be the double $\delta$--shell Hamiltonian with radii $0<R_1<R_2$ and
couplings $\alpha_1,\alpha_2\in\mathbb R$.
In the $s$--wave subspace, the negative eigenvalues $E=-\kappa^2$ are in
one--to--one correspondence with the positive roots $\kappa>0$ of the secular
equation~\eqref{eq:sec}.

The associated $s$--wave quadratic form is given by
\[
q_0[f]
=
\int_0^\infty |f'(r)|^2\,dr
+\alpha_1|f(R_1)|^2+\alpha_2|f(R_2)|^2,
\qquad f\in H^1(0,\infty).
\]
Hence the negative part of $q_0$ depends only on the two point evaluations
$f\mapsto f(R_1)$ and $f\mapsto f(R_2)$.
As a consequence, the $s$--wave radial operator can have at most two negative
eigenvalues (counted with multiplicity), and therefore the secular
equation~\eqref{eq:sec} admits at most two positive roots.
\end{prop}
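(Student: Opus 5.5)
The plan has three parts: identify the $s$--wave eigenvalues with the roots of \eqref{eq:sec}, show that the negative part of $q_0$ has rank at most two, and then count roots. The form domain should be $\{f\in H^1(0,\infty): f(0)=0\}$, as in Theorem~\ref{thm:partial-waves}.

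\emph{Correspondence.} The first statement is essentially Proposition~\ref{prop:s-wave-secular}. An $s$--wave eigenfunction with $E=-\kappa^2$ has radial part $u$ of the form \eqref{u-parts}, with $A,B,C,D$ determined by $(X(R_1),Y(R_1))$, and the relation \eqref{Y1} fixes $Y(R_1)$ in terms of $X(R_1)$. The solution space of the matching system is therefore at most one--dimensional, so each negative $s$--wave eigenvalue is simple. Roots $\kappa>0$ of $F_d$ and eigenfunctions then correspond bijectively; the sign of $\kappa$ is fixed by requiring decay, which makes $\kappa\mapsto-\kappa^2$ injective on $(0,\infty)$. One point needs care: \eqref{eq:sec} is $\det M$ divided by a nonvanishing factor. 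I would check that the factor is $\kappa$ or a sign, so that no spurious roots are introduced or lost.

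\emph{Rank bound.} For the second part, write $q_0=q_{\mathrm{free}}+b$, where $q_{\mathrm{free}}[f]=\int_0^\infty|f'|^2\,dr\ge0$ and $b[f]=\alpha_1|f(R_1)|^2+\alpha_2|f(R_2)|^2$. Then use the Glazman lemma, i.e.\ min--max in the form of Lemma~\ref{lem:neg-eig-monotone}. Suppose $h_0$ had three negative eigenvalues, counted with multiplicity. Their span $L$ is $3$--dimensional and satisfies $q_0[f]<0$ for all $0\ne f\in L$. The map $f\mapsto(f(R_1),f(R_2))\in\mathbb C^2$ is linear, so it has a nontrivial kernel on $L$. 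For $0\ne f$ in that kernel, $q_0[f]=q_{\mathrm{free}}[f]\ge0$, a contradiction. Hence $N_-(h_0)\le2$.

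\emph{Root count.} Combining this with the correspondence and the simplicity of each eigenvalue, the positive roots of \eqref{eq:sec} are at most two.

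The main obstacle I expect is bookkeeping rather than substance. First, \eqref{eq:sec} should not acquire extra roots from the normalization: at $\kappa$ with $\sinh(\kappa R_1)=0$ nothing happens, since $R_1>0$, so only the factor relating $\det M$ to $F_d$ matters. Second, one must state clearly that the negative spectrum of $h_0$ is discrete (Theorem~\ref{thm:ac}), so that "three negative eigenvalues" makes sense in the min--max argument.
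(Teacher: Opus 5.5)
Your proposal is correct and follows the paper's proof: the correspondence is taken from Proposition~\ref{prop:s-wave-secular}, and $N_-(h_0)\le 2$ comes from the same observation that on any three-dimensional negative subspace the map $f\mapsto(f(R_1),f(R_2))$ has a nontrivial kernel, and on that kernel $q_0[f]=\int_0^\infty|f'|^2\,dr\ge 0$. The normalization you wanted to check is harmless: a direct expansion gives $\det M(\kappa;d)=-\kappa F_d(\kappa)$, so no roots with $\kappa>0$ are gained or lost.
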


\begin{proof}
By Proposition~\ref{prop:s-wave-secular} the negative $s$--wave eigenvalues
$E=-\kappa^2$ are characterized by the positive roots $\kappa>0$ of
$F_d(\kappa)=0$, where $F_d$ is given in \eqref{eq:sec}.
We therefore study the function $F_d(\kappa)$.

The $s$--wave secular equation has the form
$$
F_d(\kappa)
=
A(\kappa)\cosh(\kappa d)+B(\kappa)\sinh(\kappa d),
\qquad \kappa>0,
$$
where
\begin{eqnarray*}
A(\kappa)&=&\gamma_1(\kappa)+\alpha_2+\kappa,\\
B(\kappa)&=&\kappa+\gamma_1(\kappa)+\frac{\alpha_2\gamma_1(\kappa)}{\kappa},\\
\gamma_1(\kappa)&=&\alpha_1+\kappa\coth(\kappa R_1).
\end{eqnarray*}
Using $\cosh t=\tfrac12(e^t+e^{-t})$ and $\sinh t=\tfrac12(e^t-e^{-t})$, we obtain
\begin{equation}\label{eq:Fd-split-new}
F_d(\kappa)=\tfrac12\bigl[F_\infty(\kappa)e^{\kappa d}+G(\kappa)e^{-\kappa d}\bigr],
\end{equation}
where
\begin{equation}\label{eq:Finfty-G}
F_\infty(\kappa)=\frac{(\kappa+\gamma_1(\kappa))(2\kappa+\alpha_2)}{\kappa},\qquad
G(\kappa)=\alpha_2\Bigl(1-\frac{\gamma_1(\kappa)}{\kappa}\Bigr).
\end{equation}

\smallskip
\noindent
We first show that the $s$--wave radial operator has at most two negative
eigenvalues.
Assume for contradiction that it has at least three negative eigenvalues and
let $L$ be the span of three corresponding eigenfunctions.
Then $\dim L=3$ and
$$
q_0[f]=(h_0f,f)_{L^2(0,\infty)}<0\qquad \text{for all }0\ne f\in L.
$$
The linear map $f\mapsto (f(R_1),f(R_2))\in\mathbb C^2$ cannot be injective on
$L$, hence there exists $0\ne f\in L$ such that $f(R_1)=f(R_2)=0$.
For this $f$ we have
$$
q_0[f]=\int_0^\infty |f'(r)|^2\,dr>0,
$$
since $\int_0^\infty |f'(r)|^2\,dr=0$ would imply that $f$ is constant, and the
conditions $f(R_1)=f(R_2)=0$ would force $f\equiv0$.
This contradiction shows $N_-(h_0)\le2$, and therefore $F_d(\kappa)=0$ has at
most two positive roots, counted with multiplicity.

\smallskip
\noindent
(i) We first consider the case $\alpha_1,\alpha_2\ge0$.
Then $\gamma_1(\kappa)\ge0$, $A(\kappa)>0$, and $B(\kappa)\ge0$ for all
$\kappa>0$.
Since $\cosh(\kappa d)>0$ and $\sinh(\kappa d)>0$ for all $\kappa>0$ and $d>0$,
we have $F_d(\kappa)>0$ for all $\kappa>0$.
Hence, there is no positive root of $F_d$, and therefore no negative $s$--wave
eigenvalue.

\smallskip
\noindent
(ii) Next we assume that exactly one shell is attractive and that the
corresponding one--shell problem supports an $s$--wave bound state.

\smallskip
\noindent
Case 1: $\alpha_1<0\le\alpha_2$ and $\alpha_1<-1/R_1$.
Then, there exists a unique $\kappa_{\mathrm{in}}>0$ satisfying
$$
\kappa+\gamma_1(\kappa)=0,
\qquad\text{that is,}\qquad
\alpha_1+\kappa\coth(\kappa R_1)+\kappa=0.
$$
At $\kappa=\kappa_{\mathrm{in}}$ one has $F_\infty(\kappa_{\mathrm{in}})=0$.
Moreover $F_\infty'(\kappa_{\mathrm{in}})\ne0$.
Indeed, $2\kappa_{\mathrm{in}}+\alpha_2>0$ and
$\kappa_{\mathrm{in}}+\gamma_1(\kappa_{\mathrm{in}})=0$ imply
$F_\infty'(\kappa_{\mathrm{in}})>0$.
Expanding \eqref{eq:Fd-split-new} near $\kappa_{\mathrm{in}}$ yields
$$
\kappa(d)=\kappa_{\mathrm{in}}
-\frac{G(\kappa_{\mathrm{in}})}{F'_\infty(\kappa_{\mathrm{in}})}e^{-2\kappa_{\mathrm{in}}d}
+O(e^{-4\kappa_{\mathrm{in}}d}),
$$
so for all sufficiently large $d$ there exists exactly one positive root of
$F_d$.
The corresponding eigenvalue gives the unique negative $s$--wave eigenvalue in
this case.

\smallskip
\noindent
Case 2: $\alpha_2<0\le\alpha_1$ and $\alpha_2<-1/R_2$.
For the single outer shell at $r=R_2$, the $s$--wave bound state condition is
$\alpha_2<-1/R_2$, and the corresponding root is given by the unique positive
zero of $2\kappa+\alpha_2$, namely
$$
\kappa_{\mathrm{out}}=-\frac{\alpha_2}{2}>0.
$$
At $\kappa=\kappa_{\mathrm{out}}$ we have $F_\infty(\kappa_{\mathrm{out}})=0$.
Moreover
$$
F_\infty'(\kappa_{\mathrm{out}})
=\frac{2\bigl(\kappa_{\mathrm{out}}+\gamma_1(\kappa_{\mathrm{out}})\bigr)}{\kappa_{\mathrm{out}}}
\ne0,
$$
and since $\alpha_1\ge0$ implies $\gamma_1(\kappa_{\mathrm{out}})>0$, we have
$F_\infty'(\kappa_{\mathrm{out}})>0$.
Expanding \eqref{eq:Fd-split-new} near $\kappa_{\mathrm{out}}$ yields
$$
\kappa(d)=\kappa_{\mathrm{out}}
-\frac{G(\kappa_{\mathrm{out}})}{F'_\infty(\kappa_{\mathrm{out}})}e^{-2\kappa_{\mathrm{out}}d}
+O(e^{-4\kappa_{\mathrm{out}}d}).
$$
Thus, we again obtain exactly one positive root for all sufficiently large $d$.
This root gives the unique negative $s$--wave eigenvalue in this case.

\smallskip
\noindent
(iii) Finally we consider the case when both shells are attractive,
$\alpha_1<0$ and $\alpha_2<0$.
Assume that each one--shell subsystem supports an $s$--wave bound state, that is,
$\alpha_j<-1/R_j$ for $j=1,2$.
Let $\kappa_{\mathrm{in}},\kappa_{\mathrm{out}}>0$ be the corresponding isolated
roots, that is, the zeros of $\kappa+\gamma_1(\kappa)$ and $2\kappa+\alpha_2$,
respectively.
Then
$$
F_\infty(\kappa_{\mathrm{in}})=0,\qquad F_\infty(\kappa_{\mathrm{out}})=0.
$$
Assume that these zeros are simple, that is,
$F_\infty'(\kappa_{\mathrm{in}})\ne0$ and $F_\infty'(\kappa_{\mathrm{out}})\ne0$.
(If one of them is not simple, then the corresponding critical tuning is treated
separately in the tunneling regime.)
For large $d$ the roots are only slightly perturbed, and we can write
\begin{eqnarray}
\delta_{\mathrm{in}}(d)&=&-\frac{G(\kappa_{\mathrm{in}})}{F'_\infty(\kappa_{\mathrm{in}})}e^{-2\kappa_{\mathrm{in}}d}
+O(e^{-4\kappa_{\mathrm{in}}d}),\nonumber\\
\delta_{\mathrm{out}}(d)&=&-\frac{G(\kappa_{\mathrm{out}})}{F'_\infty(\kappa_{\mathrm{out}})}e^{-2\kappa_{\mathrm{out}}d}
+O(e^{-4\kappa_{\mathrm{out}}d}),\nonumber
\end{eqnarray}
so for all sufficiently large $d$ there are two distinct positive roots of $F_d$,
namely $\kappa_{\mathrm{in}}+\delta_{\mathrm{in}}(d)$ and
$\kappa_{\mathrm{out}}+\delta_{\mathrm{out}}(d)$.

For general values of $d$, the number of positive roots in $(0,\infty)$ can
change only when $F_d$ has a multiple root $\kappa_0>0$, that is,
\begin{equation}\label{eq:double-root-new}
\begin{cases}
A(\kappa_0)\cosh(\kappa_0d)+B(\kappa_0)\sinh(\kappa_0d)=0,\\
(A'(\kappa_0)+dB(\kappa_0))\cosh(\kappa_0d)
+(B'(\kappa_0)+dA(\kappa_0))\sinh(\kappa_0d)=0.
\end{cases}
\end{equation}
We exclude here the threshold $\kappa=0$.
This gives a codimension--one condition on the parameters $(\alpha_1,\alpha_2,d)$.
In the generic situation the two roots persist and merge only at isolated
critical values of $(\alpha_1,\alpha_2,d)$.
\end{proof}

\begin{cor}\label{cor:no-bound-fixed}
Assume that exactly one of $\alpha_1,\alpha_2$ is negative and that the
corresponding single $\delta$--shell does not support an $s$--wave bound state,
that is,
$$
  \alpha_1 \ge -\frac{1}{R_1}\ \text{if }\alpha_2\ge0,
  \qquad\text{or}\qquad
  \alpha_2 \ge -\frac{1}{R_2}\ \text{if }\alpha_1\ge0.
$$
Then the $s$--wave secular equation~\eqref{eq:sec} has no positive root.
Consequently, $H$ has no negative eigenvalue.
\end{cor}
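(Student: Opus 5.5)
The plan is to avoid analysing the transcendental function $F_d(\kappa)$ in \eqref{eq:sec} directly. Instead I will show that the $s$--wave quadratic form $q_0$ of Theorem~\ref{thm:partial-waves} is nonnegative under the stated hypotheses. By Proposition~\ref{prop:s-wave-secular} (equivalently the first part of Proposition~\ref{prop:neg-crit-corrected}), positive roots $\kappa$ of \eqref{eq:sec} are in one--to--one correspondence with negative eigenvalues $-\kappa^2$ of $h_0$. Nonnegativity of $q_0$ therefore excludes such roots.

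The key tool is the elementary one--dimensional trace bound coming from the boundary condition $f(0)=0$ built into the form domain. For $f\in H^1(0,\infty)$ with $f(0)=0$ and any $R>0$, Cauchy--Schwarz gives
\[
|f(R)|^2=\Bigl|\int_0^{R}f'(r)\,dr\Bigr|^2\le R\int_0^{R}|f'(r)|^2\,dr .
\]
\emph{Case $\alpha_2\ge0$, $-1/R_1\le\alpha_1<0$.} I drop the nonnegative term $\alpha_2|f(R_2)|^2$. Applying the bound with $R=R_1$ then gives
\[
q_0[f]\ \ge\ \int_0^\infty|f'|^2\,dr+\alpha_1|f(R_1)|^2\ \ge\ (1+\alpha_1R_1)\int_0^{R_1}|f'|^2\,dr+\int_{R_1}^\infty|f'|^2\,dr\ \ge 0,
\]
where the last step uses $1+\alpha_1R_1\ge0$.

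\emph{Case $\alpha_1\ge0$, $-1/R_2\le\alpha_2<0$.} The argument is identical after dropping $\alpha_1|f(R_1)|^2$ and applying the bound at $R=R_2$. This yields $q_0[f]\ge(1+\alpha_2R_2)\int_0^{R_2}|f'|^2\,dr\ge0$.

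In both cases $q_0\ge0$ on its form domain. By the min--max principle (as in Lemma~\ref{lem:neg-eig-monotone}, comparing with the zero form), $\lambda_1(h_0)\ge0$, so $h_0$ has no negative eigenvalue. Hence \eqref{eq:sec} has no root $\kappa>0$. For the full operator, Theorem~\ref{thm:partial-waves} gives $N_-(h_\ell)\le N_-(h_0)=0$ for all $\ell\ge1$. Since the negative eigenvalues of $H$ are exactly the union over $\ell$ of those of the $h_\ell$, $H$ has no negative eigenvalue.

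The only delicate point I expect is the borderline case $\alpha_jR_j=-1$. There nonnegativity of $q_0$ still holds, so a negative eigenvalue is excluded, although $0$ may be approached; only negative spectrum is at stake, so this is harmless. I would also briefly note that the form of Theorem~\ref{thm:partial-waves} indeed carries the condition $f(0)=0$, since the trace bound depends on it. If one preferred a direct check on \eqref{eq:sec}, one could note that $\kappa\coth(\kappa R_1)>1/R_1$ makes $\gamma_1>0$. However, the sign of $B(\kappa)$ then becomes awkward when $\alpha_2<0$, which is why the variational route is preferable.
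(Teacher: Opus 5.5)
Your argument is correct, but it handles the two cases differently from the paper.

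The paper treats the case $\alpha_2\ge0$ directly on the secular function. There $\kappa\coth(\kappa R_1)\ge 1/R_1$ gives $\gamma_1(\kappa)\ge0$, hence $A(\kappa)>0$, $B(\kappa)\ge0$, and so $F_d(\kappa)>0$ for all $\kappa>0$. Only the case $\alpha_1\ge0$ is done variationally: the paper drops $\alpha_1|f(R_1)|^2$ and compares with the single outer-shell form via Lemma~\ref{lem:neg-eig-monotone}. It then asserts, without proof, that this one-shell problem has no negative eigenvalue when $\alpha_2\ge-1/R_2$. The higher partial waves are handled exactly as you handle them.

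Your single inequality $|f(R)|^2\le R\int_0^R|f'|^2\,dr$ covers both cases at once and brings the following gains:
\begin{itemize}
\item it proves the one-shell threshold the paper takes as known, with the sharp constant $1/R$;
\item it gives the stronger conclusion $h_0\ge0$, including the borderline $\alpha_jR_j=-1$;
\item it only needs comparison with the zero form, not between two different operators.
\end{itemize}
What the paper's first case buys in return is explicit positivity of $F_d$ itself, with no reference to a form domain. As you observe, that direct route fails once $\alpha_2<0$, because $B(\kappa)\to-\infty$ as $\kappa\to0$. This is exactly why the paper switches to a variational argument there.

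Your emphasis on the Dirichlet condition $f(0)=0$ is well placed. The paper's proof of this corollary writes the $s$-wave form domain as $H^1(0,\infty)$. The threshold $-1/R_j$ (and your trace bound) holds only with $f(0)=0$ imposed, as in Theorem~\ref{thm:partial-waves}. Without it, any negative single-shell coupling already produces a negative eigenvalue.
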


\begin{proof}
We first show that the $s$--wave secular equation has no positive root.

\smallskip
\noindent
Case 1: $\alpha_1<0\le\alpha_2$ and $\alpha_1\ge-1/R_1$.
For $\kappa>0$ one has $\kappa\coth(\kappa R_1)\ge 1/R_1$, hence
$\gamma_1(\kappa)=\alpha_1+\kappa\coth(\kappa R_1)\ge0$.
Therefore
$$
A(\kappa)=\gamma_1(\kappa)+\alpha_2+\kappa>0,
\qquad
B(\kappa)=\kappa+\gamma_1(\kappa)+\frac{\alpha_2\gamma_1(\kappa)}{\kappa}\ge0,
$$
and since $\cosh(\kappa d)>0$ and $\sinh(\kappa d)>0$ for $d>0$ we obtain
$F_d(\kappa)>0$ for all $\kappa>0$.
Thus \eqref{eq:sec} has no positive root.

\smallskip
\noindent
Case 2: $\alpha_2<0\le\alpha_1$ and $\alpha_2\ge-1/R_2$.
Let $q_0$ be the $s$--wave quadratic form,
$$
q_0[f]=\int_0^\infty |f'(r)|^2\,dr+\alpha_1|f(R_1)|^2+\alpha_2|f(R_2)|^2,
\qquad f\in H^1(0,\infty).
$$
Since $\alpha_1\ge0$, we have
$$
q_0[f]
\ge
\int_0^\infty |f'(r)|^2\,dr+\alpha_2|f(R_2)|^2
=:q_{\mathrm{out}}[f].
$$
The form $q_{\mathrm{out}}$ is the $s$--wave form for the single outer shell at
$r=R_2$, and the hypothesis $\alpha_2\ge-1/R_2$ means that this one--shell
problem has no negative $s$--wave eigenvalue.
By Lemma~\ref{lem:neg-eig-monotone} it follows that the double--shell $s$--wave
operator $h_0$ also has no negative eigenvalue.
Hence \eqref{eq:sec} has no positive root in this case.

\smallskip
\noindent
Having excluded negative eigenvalues in the $s$--wave channel, we treat higher
partial waves.
For $\ell\ge1$ the radial quadratic form is
$$
q_\ell[f]
=
\int_0^\infty |f'(r)|^2\,dr
+ \ell(\ell+1)\int_0^\infty \frac{|f(r)|^2}{r^2}\,dr
+\alpha_1|f(R_1)|^2+\alpha_2|f(R_2)|^2,
$$
so $q_\ell[f]\ge q_0[f]$ for all $f\in H^1(0,\infty)$.
Another application of Lemma~\ref{lem:neg-eig-monotone} gives
$N_-(h_\ell)\le N_-(h_0)=0$ for every $\ell\ge1$.
Therefore $H$ has no negative eigenvalue.
\end{proof}

\begin{rem}\label{rem:outer-attractive-large-d}
In the outer--attractive case $\alpha_2<0\le\alpha_1$, the one--shell threshold
$-1/R_2$ tends to $0^-$ as $R_2=R_1+d\to\infty$.
Hence any fixed $\alpha_2<0$ eventually satisfies $\alpha_2<-1/R_2$, and for all
sufficiently large $d$ the $s$--wave secular equation acquires a positive root
and produces a negative $s$--wave eigenvalue as in
Proposition~\ref{prop:neg-crit-corrected}(ii).
\end{rem}

\section{Tunneling Splitting of Eigenvalues}\label{sec:tunneling}

In this section we analyze the tunneling splitting of the lowest $s$--wave
eigenvalues for the double $\delta$--shell operator in the regime of large shell
separation.
As shown in Section~\ref{sec:counting}, when the two shells are well separated
and the corresponding one--shell eigenvalues are distinct, each $s$--wave
eigenvalue converges to its one--shell counterpart with an exponentially small
correction of order $e^{-2\kappa d}$.

To isolate a genuine tunneling effect, we tune the parameters so that the
limiting one--shell eigenvalues coincide at a common energy
$E_0=-\kappa_0^2<0$.
In this critical situation, the standard perturbative picture breaks down and
the interaction between the two shells produces a pair of eigenvalues whose
splitting occurs on the larger scale $e^{-\kappa_0 d}$.
This regime is analogous to the symmetric double--well situation in
one--dimensional quantum mechanics and provides a natural setting in which to
compare the exact splitting with the prediction based on Agmon distances.
We now establish this splitting explicitly within the present solvable model.

\begin{thm}[Tunneling splitting of $s$--wave eigenvalues]\label{thm:tunnel}
Let $R_1>0$ and $\alpha_1,\alpha_2\in\mathbb R$.
Assume that there exists $\kappa_0>0$ such that
\begin{equation}\label{eq:tuning}
\gamma_1(\kappa_0)+\kappa_0=0,\qquad \alpha_2+2\kappa_0=0,
\end{equation}
where $\gamma_1(\kappa)=\alpha_1+\kappa\coth(\kappa R_1)$.
Then, as $d=R_2-R_1\to\infty$, the secular equation admits two solutions
$\kappa_\pm(d)$ near $\kappa_0$, and they satisfy
\begin{equation}\label{eq:kappa-split}
\kappa_\pm(d)=\kappa_0\pm C\,e^{-\kappa_0 d}+o(e^{-\kappa_0 d}),
\end{equation}
with the constant
\begin{equation}\label{eq:C-constant}
C=\left(-\,\frac{2\,G(\kappa_0)}{F_\infty''(\kappa_0)}\right)^{1/2}>0 .
\end{equation}
Consequently,
\begin{equation}\label{eq:E-split}
E_\pm(d)=-\kappa_\pm(d)^2,\qquad
\bigl|E_+(d)-E_-(d)\bigr|
=4\kappa_0C\,e^{-\kappa_0 d}+o(e^{-\kappa_0 d}) .
\end{equation}
\end{thm}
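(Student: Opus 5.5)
Throughout, the proof is organized around the decomposition \eqref{eq:Fd-split-new} from the proof of Proposition~\ref{prop:neg-crit-corrected}. For $\kappa>0$ the secular equation $F_d(\kappa)=0$ of Proposition~\ref{prop:s-wave-secular} is equivalent to
\[
\Phi_d(\kappa):=F_\infty(\kappa)\,e^{2\kappa d}+G(\kappa)=0,
\]
with $F_\infty$ and $G$ as in \eqref{eq:Finfty-G}. Under the tuning \eqref{eq:tuning}, both factors $\kappa+\gamma_1(\kappa)$ and $2\kappa+\alpha_2$ of $F_\infty(\kappa)=(\kappa+\gamma_1(\kappa))(2\kappa+\alpha_2)/\kappa$ vanish at $\kappa_0$. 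So $\kappa_0$ is (at least) a double zero of $F_\infty$, which is exactly the degenerate case excluded in Proposition~\ref{prop:neg-crit-corrected}(iii).

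\textbf{Step 1: signs at $\kappa_0$.} Since the two factors have simple zeros at $\kappa_0$, we have $F_\infty'(\kappa_0)=0$ and
\[
F_\infty''(\kappa_0)=\frac{2\cdot 2\,\bigl(1+\gamma_1'(\kappa_0)\bigr)}{\kappa_0}.
\]
Next,
\[
\gamma_1'(\kappa)=\coth(\kappa R_1)-\frac{\kappa R_1}{\sinh^2(\kappa R_1)}=\frac{\sinh x\cosh x-x}{\sinh^2 x}>0,\qquad x=\kappa R_1>0,
\]
so $F_\infty''(\kappa_0)>0$ and the zero is exactly double. For $G$, the tuning gives $\gamma_1(\kappa_0)=-\kappa_0$ and $\alpha_2=-2\kappa_0$, hence
\[
G(\kappa_0)=\alpha_2\Bigl(1-\frac{\gamma_1(\kappa_0)}{\kappa_0}\Bigr)=2\alpha_2=-4\kappa_0<0.
\]
Therefore $-2G(\kappa_0)/F_\infty''(\kappa_0)>0$, and $C$ in \eqref{eq:C-constant} is well defined and positive.

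\textbf{Step 2: rescaling.} Write $\kappa=\kappa_0+t\,e^{-\kappa_0 d}$ with $t$ in a fixed compact interval $[-T,T]$, where $T>C$. Taylor expansion gives
\[
F_\infty(\kappa)=\tfrac12F_\infty''(\kappa_0)\,t^2e^{-2\kappa_0 d}\bigl(1+O(e^{-\kappa_0 d})\bigr),
\]
and $G(\kappa)=G(\kappa_0)+O(e^{-\kappa_0d})$. Also
\[
e^{2\kappa d}=e^{2\kappa_0 d}\exp\bigl(2td\,e^{-\kappa_0 d}\bigr)=e^{2\kappa_0 d}\bigl(1+O(d\,e^{-\kappa_0 d})\bigr).
\]
The key point is that $d\,e^{-\kappa_0 d}\to0$, so the exponential factor does not destroy the expansion. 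Hence, uniformly for $t\in[-T,T]$,
\[
\Phi_d\bigl(\kappa_0+te^{-\kappa_0d}\bigr)=\tfrac12F_\infty''(\kappa_0)\,t^2+G(\kappa_0)+o(1)=:\psi(t)+o(1),
\]
and the same holds for the $t$-derivative.

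\textbf{Step 3: locating the two roots.} The limit function $\psi$ has exactly the two simple zeros $t=\pm C$ in $[-T,T]$, with $\psi'(\pm C)\neq0$. The uniform $C^1$ convergence gives, for large $d$, exactly one zero $t_\pm(d)$ of the rescaled function near each of $\pm C$, with $t_\pm(d)\to\pm C$. Either the intermediate value theorem plus monotonicity, or the implicit function theorem in $(t,e^{-\kappa_0 d})$, yields this. Translating back,
\[
\kappa_\pm(d)=\kappa_0\pm Ce^{-\kappa_0 d}+o(e^{-\kappa_0 d}),
\]
which is \eqref{eq:kappa-split}. Proposition~\ref{prop:neg-crit-corrected} shows that $F_d$ has at most two positive roots, so these are all the roots of the secular equation near $\kappa_0$ (indeed all positive roots).

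\textbf{Step 4: the energy splitting.} Finally,
\[
|E_+-E_-|=(\kappa_++\kappa_-)\,|\kappa_+-\kappa_-|=\bigl(2\kappa_0+o(1)\bigr)\bigl(2Ce^{-\kappa_0 d}+o(e^{-\kappa_0 d})\bigr),
\]
which gives \eqref{eq:E-split}.

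The main obstacle is Step 2: justifying the expansion uniformly, in particular controlling the factor $e^{2(\kappa-\kappa_0)d}$ and the Taylor remainders together. The scale $e^{-\kappa_0 d}$ is chosen precisely so that $(\kappa-\kappa_0)d\to0$; if the uniform $C^1$ control turns out to be awkward, the fallback is a direct sign-change argument for $\Phi_d$ at $t=\pm C\pm\eta$ for small $\eta>0$.
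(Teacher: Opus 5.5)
Your proposal is correct and follows essentially the same route as the paper. It uses the same computations $F_\infty''(\kappa_0)=4(1+\gamma_1'(\kappa_0))/\kappa_0>0$ and $G(\kappa_0)=-4\kappa_0$, the same rescaling $\kappa=\kappa_0+t\,e^{-\kappa_0 d}$ with uniform convergence to $\tfrac12F_\infty''(\kappa_0)t^2+G(\kappa_0)$, and a sign-change argument near $t=\pm C$. The differences are small refinements: your $\Phi_d=F_\infty e^{2\kappa d}+G$ is exactly the scaled secular function (the paper's $H_d$ absorbs the $e^{-2\delta d}$ and $G(\kappa)-G(\kappa_0)$ corrections only implicitly), and your uniform $C^1$ convergence plus the at-most-two-roots bound of Proposition~\ref{prop:neg-crit-corrected} gives uniqueness, where the paper proves only existence via the intermediate value theorem and a subsequence argument for $\eta_\pm(d)\to\pm C$.
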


\begin{proof}
We use the representation \eqref{eq:Fd-split-new},
$$
F_d(\kappa)=\tfrac12\bigl[F_\infty(\kappa)e^{\kappa d}+G(\kappa)e^{-\kappa d}\bigr],
\qquad
F_\infty(\kappa)=\frac{(\kappa+\gamma_1(\kappa))(2\kappa+\alpha_2)}{\kappa},
$$
where $G$ is given in \eqref{eq:Finfty-G}.
The tuning conditions \eqref{eq:tuning} imply
$$
\kappa_0+\gamma_1(\kappa_0)=0,\qquad 2\kappa_0+\alpha_2=0,
$$
and therefore $F_\infty(\kappa_0)=0$.

Differentiating $F_\infty$ and using the fact that both factors vanish at
$\kappa_0$, we obtain $F_\infty'(\kappa_0)=0$, so $\kappa_0$ is at least a
double zero of $F_\infty$.
Moreover,
$$
\gamma_1'(\kappa)=\coth(\kappa R_1)-\kappa R_1\,\mathrm{csch}^2(\kappa R_1)
=\frac{\sinh(2t)-2t}{2\sinh^2 t}>0,
\qquad t=\kappa R_1>0,
$$
and hence
$$
F_\infty''(\kappa_0)=\frac{4\,(1+\gamma_1'(\kappa_0))}{\kappa_0}>0.
$$
Furthermore, by \eqref{eq:Finfty-G} and $2\kappa_0+\alpha_2=0$ we have
$$
G(\kappa_0)=\alpha_2\Bigl(1-\frac{\gamma_1(\kappa_0)}{\kappa_0}\Bigr)
=\alpha_2\Bigl(1-(-1)\Bigr)=2\alpha_2=-4\kappa_0<0.
$$
Therefore
$$
-\frac{2\,G(\kappa_0)}{F_\infty''(\kappa_0)}>0,
$$
and the constant $C$ in \eqref{eq:C-constant} is well defined and strictly
positive.
Since $F_\infty(\kappa_0)=F_\infty'(\kappa_0)=0$ and $F_\infty''(\kappa_0)>0$,
the zero of $F_\infty$ at $\kappa_0$ has multiplicity exactly two.

\medskip
\noindent
\emph{Existence of two roots near $\kappa_0$ for large $d$.}
Fix $\delta_0\in(0,\kappa_0/2)$ so that $F_\infty$ and $G$ are $C^3$ on
$[\kappa_0-\delta_0,\kappa_0+\delta_0]$.
Then the Taylor remainder
\begin{equation}\label{eq:Finf-unif-Taylor}
F_\infty(\kappa_0+\delta)
=\tfrac12\,F_\infty''(\kappa_0)\,\delta^2+O(\delta^3)
\end{equation}
holds uniformly for $|\delta|\le\delta_0$, and
$G(\kappa)=G(\kappa_0)+O(\delta)$ holds uniformly on the same interval.

Write $\kappa=\kappa_0+\delta$ and introduce the scaling
$$
\delta=\eta e^{-\kappa_0 d},\qquad |\eta|\le M,
$$
where $M>0$ is fixed.
For all sufficiently large $d$ we have $|\delta|\le\delta_0$.
From \eqref{eq:Fd-split-new} we can rewrite $F_d(\kappa)=0$ as
\begin{equation}\label{eq:scaled-balance}
F_\infty(\kappa)+G(\kappa)e^{-2\kappa d}=0.
\end{equation}
Multiplying \eqref{eq:scaled-balance} by $e^{2\kappa_0 d}$ and using
$\delta d=\eta d e^{-\kappa_0 d}=o(1)$ uniformly for $|\eta|\le M$, we obtain
$$
e^{2\kappa_0 d}F_\infty(\kappa_0+\eta e^{-\kappa_0 d})
+G(\kappa_0)e^{-2\delta d}
+o(1)
=0
\qquad(d\to\infty),
$$
where the $o(1)$ term is uniform for $|\eta|\le M$.
Using \eqref{eq:Finf-unif-Taylor} with $\delta=\eta e^{-\kappa_0 d}$ and
$e^{-2\delta d}=1+o(1)$ uniformly for $|\eta|\le M$, we arrive at
\begin{equation}\label{eq:H-limit}
H_d(\eta)
:=e^{2\kappa_0 d}\,F_\infty(\kappa_0+\eta e^{-\kappa_0 d})+G(\kappa_0)
=
Q(\eta)+r_d(\eta),
\end{equation}
where
$$
Q(\eta):=\tfrac12F_\infty''(\kappa_0)\eta^2+G(\kappa_0),
\qquad
\sup_{|\eta|\le M}|r_d(\eta)|\to0\quad(d\to\infty).
$$
The polynomial $Q$ has exactly two simple real zeros $\eta=\pm C$, where
$C=\bigl(-2G(\kappa_0)/F_\infty''(\kappa_0)\bigr)^{1/2}$.
Choose $\varepsilon\in(0,C)$ so that $Q(C-\varepsilon)$ and $Q(C+\varepsilon)$
have opposite signs, and likewise $Q(-C-\varepsilon)$ and $Q(-C+\varepsilon)$
have opposite signs.
By \eqref{eq:H-limit} and the uniform convergence $H_d\to Q$ on the two compact
intervals $[C-\varepsilon,C+\varepsilon]$ and $[-C-\varepsilon,-C+\varepsilon]$,
the same sign pattern holds for $H_d$ for all sufficiently large $d$.
Hence, by the intermediate value theorem, there exist
$$
\eta_+(d)\in(C-\varepsilon,C+\varepsilon),\qquad
\eta_-(d)\in(-C-\varepsilon,-C+\varepsilon)
$$
such that $H_d(\eta_\pm(d))=0$.

To show $\eta_\pm(d)\to\pm C$, take any sequence $d_n\to\infty$.
Since $\eta_+(d_n)\in[C-\varepsilon,C+\varepsilon]$, there exists a subsequence
$\eta_+(d_{n_k})\to\eta_*$ with $\eta_*\in[C-\varepsilon,C+\varepsilon]$.
Because $H_{d_{n_k}}(\eta_+(d_{n_k}))=0$ and $H_{d_{n_k}}\to Q$ uniformly on
$[C-\varepsilon,C+\varepsilon]$, we obtain $Q(\eta_*)=0$, hence $\eta_*=C$.
Thus $\eta_+(d)\to C$, and similarly $\eta_-(d)\to -C$.

Define
$$
\kappa_\pm(d):=\kappa_0+\eta_\pm(d)e^{-\kappa_0 d},
\qquad
\delta_\pm(d):=\kappa_\pm(d)-\kappa_0.
$$
Then $F_d(\kappa_\pm(d))=0$ and $\eta_\pm(d)=\pm C+o(1)$.
Therefore
$$
\delta_\pm(d)=\pm\,C\,e^{-\kappa_0 d}+o(e^{-\kappa_0 d}),
$$
which proves \eqref{eq:kappa-split}.

Finally,
$$
E_\pm(d)=-\kappa_\pm(d)^2
=-\bigl(\kappa_0+\delta_\pm(d)\bigr)^2
=-\kappa_0^2-2\kappa_0\,\delta_\pm(d)+O(\delta_\pm(d)^2),
$$
and hence
$$
E_+(d)-E_-(d)
=-2\kappa_0\bigl(\delta_+(d)-\delta_-(d)\bigr)+o(e^{-\kappa_0 d})
=-4\kappa_0C\,e^{-\kappa_0 d}+o(e^{-\kappa_0 d}),
$$
which is equivalent to \eqref{eq:E-split}.
\end{proof}

\begin{rem}[Agmon distance heuristic]\label{rem:agmon}
It is instructive to compare the exponential factor in \eqref{eq:E-split} with
the standard tunneling heuristics based on Agmon metrics.
For a Schr\"odinger operator $-\Delta+V(x)$ and an energy $E<\inf V$, the Agmon
distance between two points $x$ and $y$ is defined by
$$
d_E(x,y) = \inf_{\gamma}\int_0^1
\sqrt{(V(\gamma(t))-E)_+}\,|\dot\gamma(t)|\,dt ,
\qquad
(\cdot)_+ = \max(\cdot,0).
$$
In the present model the potential vanishes in the gap region, so $V=0$ between
the shells, while $E=-\kappa_0^2<0$.
Hence $V-E=\kappa_0^2$ in the gap and the integrand becomes the constant
$\sqrt{-E}=\kappa_0$.
Therefore the Agmon action across $(R_1,R_2)$ equals
$$
d_E(R_1,R_2)=\int_{R_1}^{R_2}\kappa_0\,dr=\kappa_0 d,
$$
and the splitting scale $\exp(-d_E(R_1,R_2))=\exp(-\kappa_0 d)$ agrees with the
explicit asymptotics in Theorem~\ref{thm:tunnel}.
Here $\exp(-2d_E)$ corresponds to a tunneling probability (a squared amplitude),
while the energy splitting is proportional to an overlap amplitude and
therefore has the scale $\exp(-d_E)$.
\end{rem}

\section*{Acknowledgements}

The author is grateful to Professor Pavel Exner for drawing his attention
to earlier results on concentric $\delta$--shell interactions obtained by
J. Shabani and collaborators.
This comment led to a substantial clarification of the historical context
and to an improvement of the introduction.

\section*{Appendix}

\appendix

\section{Calibration with semiconductor quantum dots}\label{sec:experiment}
This appendix provides a scale and trend check for the double $\delta$--shell model
based on representative parameters for core--shell quantum dots.
It is not intended as a quantitative fit to any specific sample.
We focus on the conduction--band profile for the electron.
Optical transition energies, however, depend on additional ingredients such as
the valence--band structure, excitonic Coulomb attraction, and polarization or
self--energy effects, which are not captured by the present model.

In realistic core--shell structures, the relevant confinement levels may in fact
be quasi--bound once finite wells or barriers and further band--structure effects
are taken into account.
Accordingly, the comparison presented here is meant only to capture
order--of--magnitude estimates and qualitative trends.

\medskip
\noindent
\textbf{Interface modeling and relation to BDD.}
Band edges vary across a few atomic layers ($\sim0.3\,\mathrm{nm}$), which is thin compared with
nanocrystal length scales; see, e.g., \cite{Brus1984,Madelung2004}.
Hence we approximate a thin interfacial step by a surface $\delta$--interaction.
In BDD--type effective mass models one imposes continuity of $\psi$ and of the flux
$(1/m^\ast)\partial_n\psi$; see, for example, \cite{HarrisonQWWD4}.
Here we keep a constant $m^*$ in each calibration and absorb interfacial physics into an
\emph{effective} coupling $\alpha$; it should not be interpreted as a microscopic BDD parameter.

\begin{figure}[ht]
\centering
\begin{minipage}[b]{0.45\linewidth}
\centering
\includegraphics[width=\linewidth]{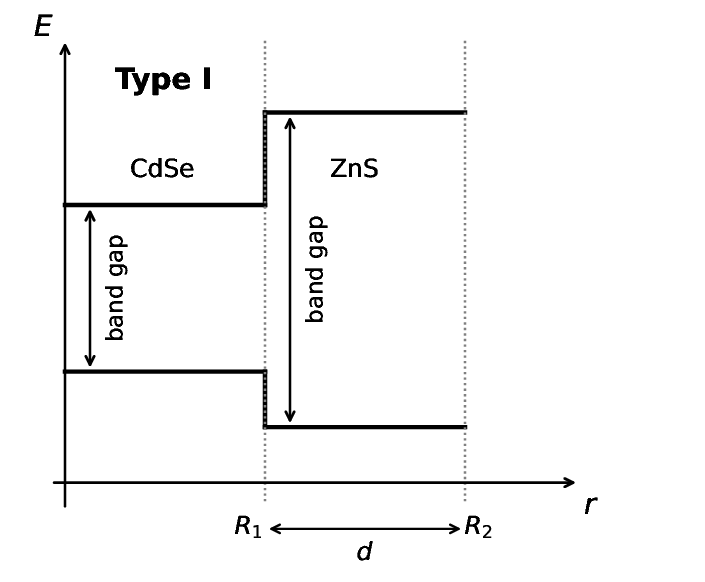}
\caption{Typical Type~I (CdSe/ZnS) band alignment.}
\label{fig:typeI_band_alignment}
\end{minipage}
\hspace{0.05\linewidth}
\begin{minipage}[b]{0.45\linewidth}
\centering
\includegraphics[width=\linewidth]{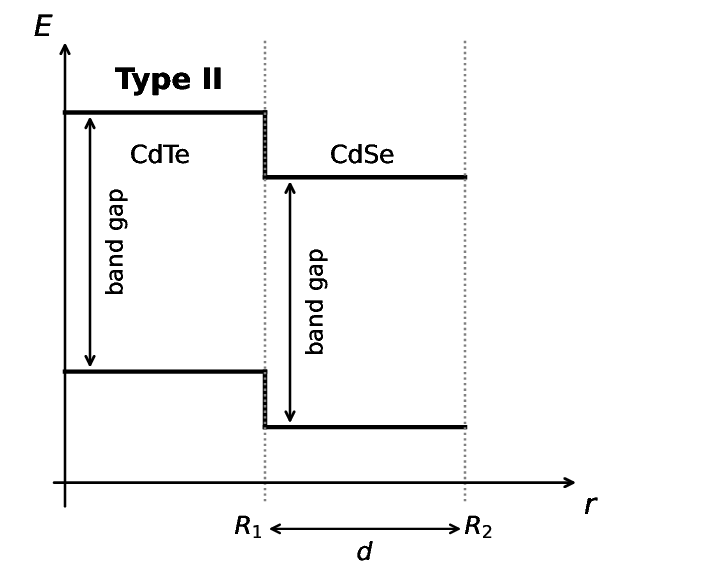}
\caption{Typical Type~II (CdTe/CdSe) band alignment.}
\label{fig:typeII_band_alignment}
\end{minipage}
\end{figure}
Figures~\ref{fig:typeI_band_alignment}--\ref{fig:typeII_band_alignment} illustrate schematic
Type~I (CdSe/ZnS) and Type~II (CdTe/CdSe) alignments; representative offsets can be found in
\cite{Li2009BandOffsets} (see also \cite{Dabbousi1997} for typical core--shell systems).
Approximating a thin step of height $\Delta V$ and width $w$ by
$\Delta V w\,\delta(|x|-R)$, we obtain the dimensionless strength
\begin{equation}\label{eq:alpha-dimless}
\alpha \simeq \frac{\Delta V}{E_0}\,\frac{w}{L_0},
\qquad
E_0=\frac{\hbar^2}{2m^*L_0^2},\quad L_0=1\,\mathrm{nm}.
\end{equation}
Thus $\mathrm{sign}(\alpha)=\mathrm{sign}(\Delta V)$.
For Type~I we use $\alpha_1<0$ (attractive inner wall) and $\alpha_2>0$ (outer barrier),
while for Type~II we reverse the signs: $\alpha_1>0$ and $\alpha_2<0$.

\medskip
\noindent
\textbf{Type~I (CdSe/ZnS): strong confinement.}
Take $m^*\simeq0.13m_0$ (CdSe), so $E_0\simeq0.29\,\mathrm{eV}$ \cite{Madelung2004}.
Choose a representative geometry $R_1=2.5\,\mathrm{nm}$, $d=1.0\,\mathrm{nm}$ ($R_2=3.5\,\mathrm{nm}$),
consistent with typical experimental ranges \cite{HinesGuyotSionnest1996,OronKazesBanin2007,Dabbousi1997}.
Fix $\alpha_1<0$ so that the \emph{single--shell} problem at $R_1$ supports an $s$--wave bound state
at a typical scale $\kappa\sim\pi/R_1$ (hence $|\alpha_1|=O(1)$).
For the outer interface, \eqref{eq:alpha-dimless} with a representative offset
(e.g.\ $\Delta V\simeq0.7\,\mathrm{eV}$ and $w\simeq0.3\,\mathrm{nm}$) gives $\alpha_2=O(1)$.
A representative set
$$
R_1=2.5,\quad R_2=3.5,\quad \alpha_1\simeq-2.5,\quad \alpha_2\simeq0.7
$$
yields an $s$--wave bound state with a confinement scale $\Delta E_{\mathrm{conf}}$
of order $10^{-1}\,\mathrm{eV}$ (in physical units), i.e.\ a strongly confined electron level.
Optical shifts are reduced by Coulomb and polarization/self--energy corrections, each typically
$\sim0.1\,\mathrm{eV}$ for radii of a few nanometers \cite{Brus1984}, so a blue shift of order
$10^{-1}\,\mathrm{eV}$ is plausible, without aiming at a quantitative fit.

\medskip
\noindent
\textbf{Type~II (CdTe/CdSe): shallow outer--shell state.}
Take $m^*\simeq0.11m_0$ (CdTe/CdSe), so $E_0\simeq0.35\,\mathrm{eV}$ \cite{Madelung2004}.
With the same geometry and the Type~II sign pattern, \eqref{eq:alpha-dimless} with a representative
downward offset (e.g.\ $\Delta V\simeq-0.8\,\mathrm{eV}$, $w\simeq0.35\,\mathrm{nm}$) gives $\alpha_2=O(1)<0$,
while a moderate inner barrier $\alpha_1=O(1)>0$ is natural.
For instance,
$$
R_1=2.5,\quad R_2=3.5,\quad \alpha_1\simeq+2.5,\quad \alpha_2\simeq-0.8
$$
typically produces (when an $s$--wave bound state exists) a very small $\kappa$ and hence a shallow level
on the meV scale, localized mainly in the outer shell.
This captures the qualitative Type~II picture: weak electron confinement.
Observed Type~II optical red shifts (often a few $10^{-1}\,\mathrm{eV}$) depend crucially on bandgap
differences and excitonic/polarization effects beyond the present effective surface--interaction model.

\medskip
\noindent
\textbf{Conclusion.}
This calibration is only a magnitude/trend check: Type~I yields a strongly confined $s$--state,
whereas the Type~II sign pattern yields a shallow outer--shell state.
Quantitative optical predictions require a multiband/exciton model and are beyond the scope of this paper.

\section{Explicit partial--wave matrices}\label{app:ml}

In this appendix we record an explicit closed form of the partial--wave matrices
$m_\ell(z)$ appearing in Lemma~\ref{lem:block-reduction}.
The formula follows from the standard spherical--wave expansion of the free
resolvent kernel.

Let $k=\sqrt z$ with $\Im k>0$.
In the present paper we evaluate the free resolvent kernel $G_z$ only at points
of the form $x=R_i\omega$ and $y=R_j\omega'$ with $i,j\in\{1,2\}$ and
$0<R_1<R_2$.
In this case the radial factors depend only on $R_{\min(i,j)}$ and
$R_{\max(i,j)}$, so the mixed terms with $i\ne j$ are symmetric in $(i,j)$.
The corresponding spherical--wave expansion reads
$$
G_z(R_i\omega,R_j\omega')
=
ik\sum_{\ell=0}^\infty\sum_{m=-\ell}^{\ell}
j_\ell\bigl(kR_{\min(i,j)}\bigr)\,
h^{(1)}_\ell\bigl(kR_{\max(i,j)}\bigr)\,
Y_{\ell m}(\omega)\,\overline{Y_{\ell m}(\omega')},
$$
for every $\omega,\omega'\in S^2$,
where $j_\ell$ and $h^{(1)}_\ell$ denote the spherical Bessel and Hankel
functions, and $\{Y_{\ell m}\}$ are the standard spherical harmonics forming
an orthonormal basis of $L^2(S^2,d\omega)$.

\begin{lem}\label{lem:ml-formula}
Assume $0<R_1<R_2$.
For $i,j\in\{1,2\}$ let $m_{ij}(z)$ be the single--layer operators with kernel
$G_z(R_i\omega,R_j\omega')$, that is,
$$
(m_{ij}(z)\varphi)(\omega)=\int_{S^2} G_z(R_i\omega,R_j\omega')\,
\varphi(\omega')\,d\omega'.
$$
Then each spherical harmonic $Y_{\ell m}$ is an eigenfunction of $m_{ij}(z)$.
More precisely,
$$
m_{ij}(z)Y_{\ell m} = \mu^{(\ell)}_{ij}(z)\,Y_{\ell m},
$$
with
$$
\mu^{(\ell)}_{11}(z)=ik\,j_\ell(kR_1)\,h^{(1)}_\ell(kR_1),\qquad
\mu^{(\ell)}_{22}(z)=ik\,j_\ell(kR_2)\,h^{(1)}_\ell(kR_2),
$$
and
$$
\mu^{(\ell)}_{12}(z)=\mu^{(\ell)}_{21}(z)
=ik\,j_\ell(kR_1)\,h^{(1)}_\ell(kR_2).
$$
Consequently, the $2\times2$ block in Lemma~\ref{lem:block-reduction} is
$$
m_\ell(z)=
\left(
\begin{array}{cc}
ik\,j_\ell(kR_1)\,h^{(1)}_\ell(kR_1) & ik\,j_\ell(kR_1)\,h^{(1)}_\ell(kR_2)\\
ik\,j_\ell(kR_1)\,h^{(1)}_\ell(kR_2) & ik\,j_\ell(kR_2)\,h^{(1)}_\ell(kR_2)
\end{array}
\right).
$$
\end{lem}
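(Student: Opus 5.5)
The plan is to insert the spherical--wave expansion of $G_z$ recorded just before the lemma into the definition of $m_{ij}(z)$, and then use orthonormality of $\{Y_{\ell m}\}$ on $L^2(S^2)$.

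\emph{Step 1: the expansion.} First I will justify the expansion itself. One route is the standard addition theorem $\frac{e^{ik|x-y|}}{4\pi|x-y|}=ik\sum_\ell j_\ell(kr_<)h^{(1)}_\ell(kr_>)\sum_m Y_{\ell m}(\hat x)\overline{Y_{\ell m}(\hat y)}$. This can be cited from \cite{AbramowitzStegun}, or derived from the radial Green function. For fixed $\ell$, the radial equation $-(r^2u')'/r^2+\ell(\ell+1)u/r^2=k^2u$ has the solution $j_\ell$, regular at $0$, and the solution $h^{(1)}_\ell$, which decays for $\Im k>0$. Their Wronskian gives $r^2W[j_\ell,h^{(1)}_\ell]=i/k$. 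Hence $ik\,j_\ell(kr_<)h^{(1)}_\ell(kr_>)$ is the correctly normalized radial Green kernel. Summing against the addition formula $\sum_m Y_{\ell m}\overline{Y_{\ell m}}=\frac{2\ell+1}{4\pi}P_\ell$ recovers $G_z$.

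\emph{Step 2: convergence.}
\begin{itemize}
\item For $i\ne j$ we have $R_{\min}<R_{\max}$. The terms decay geometrically like $(R_{\min}/R_{\max})^\ell$, using the large--$\ell$ asymptotics $j_\ell(w)\sim w^\ell/(2\ell+1)!!$ and $h^{(1)}_\ell(w)\sim -i(2\ell-1)!!/w^{\ell+1}$. So the series converges uniformly on $S^2\times S^2$.
\item For $i=j$ the coefficients behave like $\frac{1}{(2\ell+1)R}$. Here I will only claim convergence in $L^2(S^2\times S^2)$ in the distributional sense, or equivalently argue through the $\ell$-coefficients of the weakly singular kernel.
\end{itemize}

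\emph{Step 3: eigenvalues.} Apply $m_{ij}(z)$ to $Y_{\ell m}$ and integrate term by term over $\omega'$. Orthonormality kills every term except $(\ell,m)$, which gives $m_{ij}(z)Y_{\ell m}=ik\,j_\ell(kR_{\min})h^{(1)}_\ell(kR_{\max})\,Y_{\ell m}$.
\begin{itemize}
\item For $i=j$ this is $\mu^{(\ell)}_{ii}$.
\item For $i\ne j$, since $R_1<R_2$, it is $ik\,j_\ell(kR_1)h^{(1)}_\ell(kR_2)$ in both orders. This gives the symmetry $\mu^{(\ell)}_{12}=\mu^{(\ell)}_{21}$.
\end{itemize}
By Lemma~\ref{lem:block-reduction}, the block $m_\ell(z)$ is then the stated $2\times2$ matrix.

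\emph{Main obstacle.} The delicate point is the diagonal case $i=j$, where the kernel is singular and the series does not converge pointwise on the diagonal. I would handle it by the Funk--Hecke theorem. Since $G_z(R\omega,R\omega')=g(\omega\cdot\omega')$ with $g\in L^1(-1,1)$ (the $1/|\omega-\omega'|$ singularity is integrable), Funk--Hecke gives $m_{ii}Y_{\ell m}=\lambda_\ell Y_{\ell m}$ with $\lambda_\ell=2\pi\int_{-1}^1 g(t)P_\ell(t)\,dt$.

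It remains to identify $\lambda_\ell$ with $ik\,j_\ell(kR)h^{(1)}_\ell(kR)$. For this I take the limit $R_j\downarrow R_i$ in the off-diagonal formula. On the kernel side, $\int g_{R,R'}(t)P_\ell(t)\,dt$ converges by dominated convergence, since $|R\omega-R'\omega'|\ge R|\omega-\omega'|/2$ roughly. On the formula side, $j_\ell(kR)h^{(1)}_\ell(kR')$ is continuous in $R'$.
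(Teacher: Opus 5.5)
Your proposal is correct and follows the paper's own route. The paper gives no proof beyond the remark that the formula follows from inserting the spherical--wave expansion of $G_z$ and using orthonormality of the $Y_{\ell m}$. Your Funk--Hecke argument with the limit $R'\downarrow R$ adds something the paper leaves unaddressed: a justification of term--by--term integration in the diagonal case $i=j$.

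One correction: in Step~2 you claim the diagonal series converges in $L^2(S^2\times S^2)$. This is false, since $\sum_\ell(2\ell+1)|\mu^{(\ell)}_{ii}(z)|^2\sim\sum_\ell R_i^{-2}(2\ell+1)^{-1}=\infty$, so the diagonal kernel is not Hilbert--Schmidt. Your final argument never uses this claim, so the proof is unaffected.
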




%
%
%
%

\end{document}